\documentclass[a4paper,UKenglish,cleveref, autoref, thm-restate]{lipics-v2021}

\usepackage{thmtools} 
\usepackage{colortbl}
\usepackage[table]{xcolor}
\usepackage{thm-restate}
\usepackage{hyperref}
\usepackage[T1]{fontenc}
\usepackage{microtype}
\usepackage{graphicx} 
\usepackage{tabularx}

\usepackage{amssymb}
\usepackage{amsfonts}
\usepackage{thmtools, thm-restate}
\usepackage{mathtools}
\usepackage{tikz} 
\usetikzlibrary{arrows.meta} 
\usetikzlibrary{decorations.pathreplacing} 
\usepackage{cleveref}
\usepackage{xcolor}
\usepackage{booktabs} 
\usepackage[ruled,vlined]{algorithm2e}
\usepackage{multirow}
\usepackage{makecell}

\SetKw{KwAnd}{and}
\SetKw{KwAnd}{or}
\SetKw{KwAnd}{not}
\DontPrintSemicolon

\SetKwFor{For}{for}{\!:}{endfor}
\SetKwFor{While}{while}{\!:}{endw}
\SetKwFor{ForEach}{for each}{\!:}{endfor}
\SetKwIF{If}{ElseIf}{Else}{if}{\!:}{else if}{else}{endif}
\SetKwComment{tcp}{(}{)}
\SetKwInOut{Input}{input}
\SetKwInOut{Output}{output}

\newcommand{\R}{\mathbb{R}}
\newcommand{\N}{\mathbb{N}}
\newcommand{\OPT}{\textrm{OPT}}
\newcommand{\ffunc}{f} 
\newcommand{\gfunc}{g} 
\newcommand{\hfunc}{h} 
\newcommand{\setF}{F} 
\newcommand{\setG}{G} 
\newcommand{\setH}{H} 
\newcommand{\comp}{\mathrm{c}} 
\newcommand{\SR}{\tilde{\gamma}} 
\newcommand{\e}{\mathrm{e}}
\DeclareMathOperator*{\argmax}{arg\,max}

\DeclarePairedDelimiter{\abs}{\lvert}{\rvert}

\hideLIPIcs
\nolinenumbers

\title{Incremental--Decremental Maximization}
\author{Yann Disser}{TU Darmstadt, Germany}{disser@mathematik.tu-darmstadt.de}{https://orcid.org/0000-0002-2085-0454}{}
\author{Max Klimm}{TU Berlin, Germany}{klimm@math.tu-berlin.de}{https://orcid.org/0000-0002-9061-2267}{}
\author{Annette Lutz}{TU Darmstadt, Germany}{lutz@mathematik.tu-darmstadt.de}{https://orcid.org/0009-0008-7699-7018}{}
\author{Lea Strubberg}{TU Berlin, Germany}{strubberg@math.tu-berlin.de}{https://orcid.org/0009-0009-8505-3614}{}
\authorrunning{Y.~Disser, M.~Klimm, A.~Lutz, L.~Strubberg}
\Copyright{Yann~Disser, Max~Klimm, Annette~Lutz and Lea~Strubberg} 
\ccsdesc[500]{Theory of computation~Online algorithms}
\ccsdesc[300]{Mathematics of computing~Combinatorial algorithms}
\ccsdesc[300]{Mathematics of computing~Combinatorial optimization} 
\keywords{incremental optimization, competitive analysis, submodular function, gross substitute function} 
\funding{Supported by Deutsche Forschungsgemeinschaft (DFG, German Research Foundation) through subprojects A07 and A09 of CRC/TRR154.}
\acknowledgements{
    We are grateful to Martin Knaack, Jannik Matuschke, and David Weckbecker for helpful discussions on functions of bounded curvature, gross substitute functions and incremental maximization, respectively. We also thank Júlia Baligács and Nils Mosis for an initial exchange about the topic.}

\begin{document}
\begin{titlepage}
    \maketitle

    \begin{abstract}
        We introduce a framework for incremental--de\-cre\-men\-tal maximization that captures the gradual transformation or renewal of infra\-structures.
        In our model, an initial solution is transformed one element at a time and the utility of an intermediate solution is given by the sum of the utilities of the transformed and untransformed parts.
        We propose a simple randomized and a deterministic algorithm that both find an order in which to transform the elements while maintaining  a large utility during all stages of transformation, relative to an optimum solution for the current stage.
        More specifically, our algorithms yield competitive solutions for utility functions of bounded curvature and/or generic submodularity ratio, and, in particular, for submodular functions, and gross substitute functions.
        Our results exhibit that incremental--decremental maximization is substantially more difficult than incremental maximization.
    \end{abstract}
\end{titlepage}

\setcounter{page}{1}

\section{Introduction}

Incremental maximization provides a mathematical model to study situations where a new infrastructure is constructed over time.
The infrastructure consists of a set of elements~$E$.
The benefit for having built a subset of elements is defined by a set function~$f \colon 2^E \to \R_{\geq 0}$ that maps each partial solution $S \subseteq E$ to its benefit $f(S)$.
Depending on the application, the elements may model, e.g.,  pipes of a pipe system, wires in an electric network, or machines in a workshop. The benefit $f(S)$ models, e.g., averaged annual returns from operating the partial infrastructure $S$.
Given such a situation, it is a natural question in which order the elements should be constructed.
This question is addressed with the theory of incremental maximization problems. 

Formally, an incremental solution is an order $\pi = (e_1,\dots,e_n)$ of the elements of $E$ where~$n = |E|$. The quality of an incremental solution is measured by a competitive analysis. To this end, for $k \in \{1,\dots,n\}$, let 
\[\OPT_k \coloneqq \max \bigl\{ f(S) : S \subseteq E \text{ with } |S|=k \bigr\}\]
be the value of an optimal solution with $k$ elements. 
For an incremental solution~$\pi = (e_1,\dots,e_n)$ and $k \in \{1,\dots,n\}$, let $S_k \coloneqq \{e_1,\dots,e_k\}$ denote the prefix of the first~$k$ elements of $\pi$.
Then, the competitive ratio of an incremental solution~$\pi$ is defined as $\rho \coloneqq \inf \bigl\{ r \in \R_{\geq 0} : r\,f(S_k) \geq \OPT_k \text{ for all } k \in \{1,\dots,n\}\bigr\}$, and one is interested in finding incremental solutions with a low competitive ratio. 
In this framework, upper and lower bounds on the best-possible competitive ratio have been derived for various classes of set functions~$f$~\cite{BernsteinDisserGrossHimburg-20,DisserKlimmSchewiorWeckbecker-23,DisserWeckbecker-23,DisserWeckbecker/25}.

While the incremental maximization framework is a suitable model for situations where a new infrastructure needs to be constructed from scratch, it fails to capture scenarios where an existing infrastructure is upgraded to a new technological standard.
As an example, suppose that $E$ is a set of cell phone towers, and let $\gfunc(S)$ be the total amount of data transferrable with a subset $S \subseteq E$ of the towers using an old technological standard. 
The technology of the cell phone towers is to be updated to a new and non-interoperable standard; let~$\hfunc(S)$ be the total amount of data transferable with a subset $S \subseteq E$ of the towers using the new standard.\footnote{In a similar vein, one may consider a pipeline network being upgraded from operation with natural gas ($g$) to operation with hydrogen ($h$); this is our motivation to use $g$ for the function evaluating the benefit from the old standard and $h$ for the function evaluating the benefit from the new standard.}
In this context it is natural to ask in which order the cell towers should be upgraded in order to ensure that the amount of data transferrable in total across both standards is high at every point in time.
To phrase and answer this question formally, in this paper, we introduce a novel \emph{incremental--decremental} optimization framework that generalizes incremental maximization. 

In analogy to the incremental maximization framework, we call an ordering $\pi = (e_1,\dots,e_n)$ of the elements an incremental--decremental solution. Such an order increments the set of elements using the new technology while decrementing the set of elements using the old technology, hence the name.
For two arbitrary set functions $\hfunc,\gfunc \colon 2^E \to \R_{\geq 0}$, let $f \colon 2^E \to \R_{\geq 0}$ be defined as $f(S) \coloneqq \hfunc(S) + \gfunc(E\setminus S)$.
Note that in the application above $f(S)$ is equal to the total amount of data transferable using both standards whence the subset $S \subseteq E$ of cell phone towers has been upgraded.
As before, let $\OPT_k \coloneqq \max \bigl\{ f(S) : S \subseteq E \text{ with } |S|=k \bigr\}$.
The \emph{competitive ratio} of the incremental--decremental solution $\pi$ is then defined as
\begin{align*}
\rho \coloneqq \inf \bigl\{ r \in \R_{\geq 0} : r f(S_k) \geq \OPT_k \text{ for all } k \in \{1,\dots,n\}\bigr\}.
\end{align*}
The competitive ratio of an algorithm is the supremum over the competitive ratios of its solutions.
Likewise the competitive ratio of a problem is the infinum over the competitive ratios of an algorithm for the problem. 
We say that an algorithm or a solution is $\rho$-\emph{competitive} if its competitive ratio is at most $\rho$.

Since the notions of an incremental--decremental solution for $\hfunc$ and $\gfunc$, and that of an incremental solution for $f$ defined by $f(S) = \hfunc(S) + \gfunc(E\setminus S)$ coincide, one may wonder why it is insightful to introduce incremental--decremental solutions in the first place.
The reason for this stems from the fact that the incremental--decremental framework allows to impose natural properties of the functions $\hfunc$ and $\gfunc$ that may not be phrased so conveniently for the combined function $f$. 
For instance, in the above example of upgrading the technology of cell phone towers, it is natural to assume that both the function $\hfunc$ and $\gfunc$ are monotone.
However, $f(S) = \hfunc(S) + \gfunc(E \setminus S)$ is then generally non-monotone and thus known results for incremental maximization (that all require monotonicity) do not apply.
As a consequence, we will see that the best-possible competitive ratio in the incremental--decremental setting is strictly worse than in the incremental setting for monotone submodular functions.
A general treatment of the incremental maximization problem with non-monotone objective function~$f$ however does not yield meaningful results as it is unbounded even if $f$ is submodular; see~\Cref{prop:inc_max_unbounded} in \Cref{app:inc_max_unbounded}.

\subsection{Our Results}

We give polynomial-time algorithms computing an in\-cre\-men\-tal--de\-cre\-men\-tal solution whose performances are parametrized by the generic submodularity ratio $\gamma \in (0,1]$ and the curvature~$c \in (0,1]$ of the functions $g$ and $h$; these properties of set functions are formally defined in \Cref{def:generic_submodularity_ratio,def:curvature}.

As a warm-up, we examine a simple randomized algorithm that takes uniformly at random the greedy order of the elements in $E$ regarding $h$ or the reverse greedy order with respect to~$g$.
For this randomized algorithm, an upper bound of $2c\frac{\e^{c\gamma}}{\e^{c\gamma}-1}$ on the competitive ratio follows rather straightforwardly from known results for incremental maximization.

\newcolumntype{C}[1]{>{\centering\arraybackslash}p{#1}} 

\renewcommand{\arraystretch}{1.4}

\begin{table}[tb]
    \centering
    \begin{tabular}{p{3.0cm}C{1.4cm}@{}p{1.2cm}C{2.2cm}@{}p{1.5cm}C{1.4cm}@{}p{0.5cm}}
        \toprule
        & \multicolumn{4}{c}{\bf incremental--decremental} &\multicolumn{2}{c}{\bf incremental} \\
          & \multicolumn{2}{c}{lower bound} &  \multicolumn{2}{c}{upper bound} & \multicolumn{2}{c}{upper bound}\\  
        \midrule 
        
        \multirow{2}{*}{\textbf{general}} &
            $1+\frac{c}{2-c}$
        & 
        \scriptsize{(Thm~\ref{thm:lower})} &
        \cellcolor{lightgray}$2c\frac{\e^{c\gamma}}{\e^{c\gamma}-1}$
        &
        \cellcolor{lightgray}\scriptsize{(Cor~\ref{cor:random})} 
        & 
        \multirow{2}{*}{$c\frac{\e^{c\gamma}}{\e^{c\gamma}-1}$} & 
\multirow{2}{*}{\scriptsize{\cite{bian2017guarantees}}} \\[5pt]
      & $\frac{2}{3}+\frac{1}{3\gamma}$ &
      \scriptsize{(Thm~\ref{thm:lower})} & 
      $\frac{1}{\gamma}(1+c\frac{\e^c}{\e^c-1})$ & \scriptsize{(Thm~\ref{thm:main1})} \\
        \textbf{submodular} & $2$ & \scriptsize{(Thm~\ref{thm:lower})} & $ 1\!+\!\frac{\e}{\e\!-\!1} \!\approx\! 2.58$ & \scriptsize{(Cor~\ref{cor:main1})} & $\frac{\e}{\e-1}$ &\scriptsize{\cite{nemhauser1978analysis}} \\
        \textbf{gross substitute} &  $5/4$ & \scriptsize{(Thm~\ref{thm:lower})} & $2$ & \scriptsize{(Thm~\ref{thm:competitve_algo_gross_substitutes})} & $1$ & \scriptsize{\cite{leme2017gross}}\\
        \bottomrule\\
    \end{tabular}
    \caption{Our lower and upper bounds on the competitive ratio of incremental--decremental maximization for different classes of functions~$\hfunc$ and~$\gfunc$ with curvature $c$ and generic submodularity ratio~$\gamma$; results with shaded background are for randomized algorithms. Known upper bounds on the competitive ratios for incremental maximization (corresponding to the case $\gfunc \equiv 0$) are for comparison.\label{tab:overview}}
\end{table}

As our main contribution, we then introduce a deterministic algorithm that can be viewed as a way to derandomize the randomized approach. Instead of following the greedy order either only for $g$ or only for $h$, the algorithm builds both greedy orders partially at the same time depending on which promises a larger incremental gain in the current step. 
For this deterministic algorithm that we term \emph{double-greedy}, we prove the following competitive ratio which improves on the randomized competitive ratio for large range of values of the curvature~$c \in (0,1]$ and the generic submodularity ratio~$\gamma \in (0,1]$. Specifically, it provides a better competitive ratio as long as $\gamma \geq 0.54$ which includes in particular all submodular functions.

 \begin{restatable}{theorem}{mainupperbound}
 \label{thm:main1}
        For monotone functions $\gfunc, \hfunc \colon 2^E \to \R_{\geq 0}$ with curvature~$c \in (0,1]$ and generic submodularity ratio~$\gamma \in (0,1]$, the double-greedy algorithm is $\frac{1}{\gamma}\big(1+c\frac{\e^c}{\e^c-1}\big)$-competitive.
\end{restatable}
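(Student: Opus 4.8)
The plan is to argue stagewise: fix $k \in \{1,\dots,n\}$, choose $S^* \subseteq E$ with $|S^*|=k$ attaining $\OPT_k = \hfunc(S^*) + \gfunc(E\setminus S^*)$, and prove the bound $(1+\kappa)\,f(S_k) \ge \gamma\,\OPT_k$, where $\kappa \coloneqq c\tfrac{\e^c}{\e^c-1}$ is the curvature--greedy ratio of \cite{bian2017guarantees} at submodularity ratio $1$. Establishing this for every $k$ and every run of the algorithm yields the claimed competitive ratio $\tfrac{1}{\gamma}(1+\kappa)$.

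\emph{Structure of the double-greedy order.} Write $a_1,\dots,a_\alpha$ for the elements in the order double-greedy appends them to the new-technology prefix and $c_1,\dots,c_\beta$, with $\alpha+\beta=n$, for the elements in the order it commits them to the old-technology suffix, so that $\pi=(a_1,\dots,a_\alpha,c_\beta,\dots,c_1)$. Each step $t$ collects a marginal gain $m_t$ which, by the greedy rule, is the larger of the best available $\hfunc$-marginal onto the current prefix and the best available $\gfunc$-marginal onto the current suffix; telescoping along each side gives $\hfunc(\{a_1,\dots,a_j\})=\sum_{i\le j} m_{t(a_i)}$ and $\gfunc(\{c_1,\dots,c_j\})=\sum_{i\le j} m_{t(c_i)}$, with $t(\cdot)$ the step index. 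By monotonicity of $\hfunc$ and $\gfunc$, the claim then splits into two regimes: if $k\le\alpha$ then $S_k=\{a_1,\dots,a_k\}$ and $E\setminus S_k\supseteq\{a_{k+1},\dots,a_\alpha\}\cup\{c_1,\dots,c_\beta\}$, so $f(S_k)\ge \hfunc(\{a_1,\dots,a_k\})+\gfunc(\{a_{k+1},\dots,a_\alpha\}\cup\{c_1,\dots,c_\beta\})$; if $k>\alpha$ then $S_k\supseteq\{a_1,\dots,a_\alpha\}$ and $E\setminus S_k=\{c_1,\dots,c_{n-k}\}$, so $f(S_k)\ge \hfunc(\{a_1,\dots,a_\alpha\})+\gfunc(\{c_1,\dots,c_{n-k}\})$. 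In both regimes $f(S_k)$ dominates the sum of $\hfunc$ evaluated on a double-greedy prefix and $\gfunc$ evaluated on a double-greedy prefix.

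\emph{A parallel greedy recursion against the split optimum.} Next I would run the curvature-controlled analysis of \cite{bian2017guarantees} on both sides simultaneously, charging the $\hfunc$-prefix against $S^*$ and the $\gfunc$-prefix against $E\setminus S^*$: at a prefix step the collected $\hfunc$-marginal is at least the average, onto the current prefix, of the $\hfunc$-marginals of the elements of $S^*$ that are still available, which by the generic submodularity ratio is at least $\tfrac{\gamma}{k}$ times the amount by which $\hfunc(S^*)$ still exceeds the current prefix value; the curvature $c$ sharpens the resulting geometric decay to exactly the rate that recovers a $\tfrac{1}{\kappa}$-fraction after $k$ steps, the single application of the submodularity ratio being what produces the global $\tfrac{1}{\gamma}$. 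The symmetric statement holds for the $\gfunc$-suffix and $E\setminus S^*$. Combining the two one-sided bounds naively costs a factor $2$; the improvement to the additive $1+\kappa$ has to exploit the double-greedy \emph{coupling}: whenever one recursion would lose an optimal element because it has already been committed to the other side, the marginal actually collected at that step is, by the greedy comparison rule, at least the marginal that element would have contributed, so that across the two sides only one additive copy of $f(S_k)$ is forfeited rather than a multiplicative factor.

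\emph{Main obstacle.} The crux is disentangling the three-way interaction between the $A/C$ partition chosen by double-greedy, the $S^*/(E\setminus S^*)$ partition of the optimum, and the order in which the two recursions would like to consume optimal elements, and then making the charging tight enough to yield the additive $+1$ rather than a factor $2$. The tempting shortcut of bounding the double-greedy prefixes by prefixes of the two \emph{pure} greedy orders used by the randomized algorithm of \Cref{cor:random} fails, since the interleaving means $a_1,\dots,a_\alpha$ need not be an $\hfunc$-greedy order and $c_1,\dots,c_\beta$ need not be a $\gfunc$-greedy order. I therefore expect the proof to track a potential of the form $\hfunc(S^*\cup A^{(t)})+\gfunc((E\setminus S^*)\cup C^{(t)})$ (or its residual over $f(S_k)$) along the run of the algorithm, to handle $k\le\alpha$ and $k>\alpha$ through parallel but separate recursions, and to apply the inequality $m_t\ge\max(\delta^A_{\hfunc},\delta^C_{\gfunc})$ precisely at the steps where an optimal element is pushed to the ``wrong'' side; as in the incremental setting the curvature enters through the bound on $\hfunc(X\cup Y)$ in terms of $\hfunc(X)$ and the marginals of $Y$ discounted by $1-c$.
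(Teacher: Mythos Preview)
Your plan correctly identifies the curvature-controlled greedy recursion of \cite{bian2017guarantees} as the source of the factor $\kappa = c\tfrac{\e^c}{\e^c-1}$, but it misidentifies where the additive ``$+1$'' comes from, and that is the crux. The paper does \emph{not} run two parallel recursions and then argue that coupling saves a factor~$2$; it treats the two halves of the optimum asymmetrically.

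After a symmetry reduction (\Cref{prop:symmetry_handling}) that collapses your two regimes $k\le\alpha$ and $k>\alpha$ into the single case $|S|=|H_k|$, the paper proceeds as follows:
\begin{itemize}
\item $\gfunc(S^\comp) \le \gamma^{-1} \ffunc(H_k)$ is obtained \emph{directly}, with no recursion at all. Every $e_i^*$ placed into $H_k$ satisfied $\hfunc(e_i^*\mid H_{i-1}) \ge \gfunc(e_i^*\mid G_{i-1})$, so writing $\gfunc(S^\comp)\le\gfunc(E)=\gfunc(H_k^\comp)+\sum_{e_i^*\in H_k}\gfunc(e_i^*\mid H_{i-1}\cup H_k^\comp)$, one replaces each $\gfunc$-marginal first by $\gfunc(e_i^*\mid G_{i-1})$ (one use of the submodularity ratio, costing $\gamma^{-1}$) and then by the corresponding $\hfunc$-marginal, recovering $\gamma^{-1}\hfunc(H_k)+\gfunc(H_k^\comp)\le\gamma^{-1}\ffunc(H_k)$. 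This is the ``$1$''.
\item $\hfunc(S)\le \gamma^{-1}\kappa\,\ffunc(H_k)$ comes from a \emph{single} curvature--greedy recursion on the unified increments $\varphi_i=\max\{\hfunc(e_i^*\mid H_{i-1}),\gfunc(e_i^*\mid G_{i-1})\}$ over \emph{all} algorithm steps $i\le k$ (Lemmas~\ref{lem:lower_bound_of_marginal_increase} and~\ref{lem:double_greedy_bound_on_opt_submod_ratio_and_curvature}), combined with $\ffunc(H_k)\ge\sum_{i\le k}\varphi_i$ (\Cref{lem:gen_lower_bound_for_double_greedy}). This is the ``$\kappa$''.
\end{itemize}

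Your approach instead tries to charge $\gfunc(E\setminus S^*)$ against the $\gfunc$-suffix through a second greedy recursion and then recoup the lost factor via the coupling rule. The obstacle you flag yourself---the three-way interaction between the $A/C$ split, the $S^*/(E\setminus S^*)$ split, and the two consumption orders---is real, and the sketch does not resolve it; ``only one additive copy of $f(S_k)$ is forfeited'' is the conclusion you want, not an argument for it. The paper sidesteps the whole difficulty: once you observe that $\gfunc(S^\comp)\le\gfunc(E)$ is already within $\gamma^{-1}$ of $\ffunc(H_k)$ by the comparison rule alone, no second recursion is needed and there is nothing to disentangle.
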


Using that for a submodular function, we have $\gamma = 1$, $c \in (0,1]$, and that the function $c \mapsto c \frac{\e^c}{\e^c-1}$ is non-decreasing in $c$, we obtain the following direct corollary of \Cref{thm:main1}.

\begin{corollary}\label{cor:main1}
For monotone submodular functions $\gfunc, \hfunc$, the double-greedy algorithm is $\big(1+\frac{\e}{\e-1}\big)$-competitive.
\end{corollary}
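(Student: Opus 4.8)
The plan is to derive the corollary as a direct specialization of \Cref{thm:main1}: the only task is to evaluate the bound $\frac{1}{\gamma}\bigl(1 + c\frac{\e^c}{\e^c-1}\bigr)$ at the parameters forced by submodularity. First I would record that every monotone submodular function has generic submodularity ratio $\gamma = 1$. This is immediate from \Cref{def:generic_submodularity_ratio}: the defining diminishing-returns inequality holds with constant $1$ for submodular functions (a standard consequence of submodularity), and it is already tight on singletons, so the extremal value of $\gamma$ over the class equals $1$. Applying \Cref{thm:main1} to $\gfunc$ and $\hfunc$ with $\gamma = 1$ thus reduces the competitive ratio to $1 + c\frac{\e^c}{\e^c-1}$, where $c \in (0,1]$ is a common upper bound on the curvatures of $\gfunc$ and $\hfunc$ (which exists by \Cref{def:curvature}).

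Next I would bound the remaining $c$-dependent term. Writing $c\frac{\e^c}{\e^c-1} = \frac{c}{1-\e^{-c}}$, a one-line computation shows this map is non-decreasing on $(0,1]$: the numerator of its derivative is $1 - \e^{-c}(1+c)$, which vanishes at $c = 0$ and has derivative $c\,\e^{-c} > 0$ for $c > 0$, hence is nonnegative throughout. (Alternatively, one may simply invoke the monotonicity already noted in the text.) Consequently the term is maximized at $c = 1$, where it equals $\frac{\e}{\e-1}$, and substituting back yields the competitive ratio $1 + \frac{\e}{\e-1} \approx 2.58$, as claimed.

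Since this is an immediate corollary, I do not anticipate a genuine obstacle; the only points needing (routine) care are the two auxiliary facts used above — that submodularity forces $\gamma = 1$, and that $c \mapsto c\frac{\e^c}{\e^c-1}$ is non-decreasing — together with the bookkeeping observation that \Cref{thm:main1} is stated precisely so that a common curvature bound $c$ and a common submodularity-ratio bound $\gamma$ for both $\gfunc$ and $\hfunc$ suffice to invoke it.
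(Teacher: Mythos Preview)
Your proposal is correct and follows exactly the approach the paper uses: the corollary is stated immediately after noting that submodular functions have $\gamma = 1$, that $c \in (0,1]$, and that $c \mapsto c\frac{\e^c}{\e^c-1}$ is non-decreasing, which is precisely your argument. Your derivative computation for the monotonicity of $c \mapsto c\frac{\e^c}{\e^c-1}$ is a welcome addition, since the paper merely asserts this fact without justification.
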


We further consider gross substitute functions, a subclass of submodular functions.
For this class of functions, we show the following result.

\begin{restatable}{theorem}{thmuppergs}
\label{thm:competitve_algo_gross_substitutes}
For monotone and gross substitute functions $\gfunc, \hfunc : 2^E \to \R_{\geq 0}$ the double-greedy algorithm is $2$-competitive.
\end{restatable}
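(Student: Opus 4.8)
The plan is to exploit the structural guarantees of the double-greedy algorithm together with the matroid-like exchange properties of gross substitute functions. For gross substitute functions we know that the greedy algorithm is optimal for cardinality-constrained maximization, i.e.\ $\hfunc(H_k) = \max\{\hfunc(S) : |S| = k\}$ where $H_k$ is the size-$k$ prefix of the greedy order for $\hfunc$, and symmetrically the reverse-greedy order for $\gfunc$ yields $\gfunc(E \setminus G_k) = \max\{\gfunc(S) : |S| = n - k\}$ with $|G_k| = k$; these are the two orders the double-greedy algorithm interleaves. The first step is therefore to recall precisely what prefix $S_k$ of the double-greedy order looks like: at step $k$ it has committed some elements toward the $\hfunc$-greedy order and some toward the $\gfunc$-reverse-greedy order, always choosing whichever marginal gain is larger. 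I would set up notation so that $S_k = A_k \cup B_k$, where $A_k$ consists of the elements added because of $\hfunc$ and $B_k$ because of $\gfunc$, with $|A_k| + |B_k| = k$.

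The second step is to bound $f(S_k) = \hfunc(S_k) + \gfunc(E \setminus S_k)$ from below by the two partial contributions. By monotonicity, $\hfunc(S_k) \ge \hfunc(A_k)$ and $\gfunc(E \setminus S_k) \ge \gfunc((E \setminus S_k)$-part that is the reverse-greedy prefix$)$. The key point to make rigorous is that, because the double-greedy rule always takes the larger of the two available marginals, the value $\hfunc(A_k)$ is at least the value the pure $\hfunc$-greedy order would have after $|A_k|$ steps — which is $\OPT^{\hfunc}_{|A_k|}$ by the gross-substitute greedy optimality — and likewise for the $\gfunc$ side; moreover the interleaving guarantees that whichever of $|A_k|, |B_k|$ is "too small", the other is correspondingly large, and the larger-marginal rule means the side we invested in has captured at least as much value as a balanced split would give one side. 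I expect this exchange/interleaving argument to be the main obstacle: one has to show that picking greedily by the maximum marginal does not waste value relative to the optimum $\OPT_k$, using that for gross substitutes marginals along the greedy order are non-increasing and that the optimum size-$k$ set decomposes compatibly.

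The third step is to turn these bounds into the factor $2$. The idea is that $\OPT_k = \hfunc(O_k) + \gfunc(E \setminus O_k)$ for the optimal size-$k$ set $O_k$, and we compare each summand separately: $\hfunc(S_k) \ge \hfunc(A_k) \ge$ (the best $\hfunc$-value over sets of size $|A_k|$), and since marginals are non-increasing, extending $A_k$ to size $k$ would cost at most a factor depending on how far $|A_k|$ is from $k$ — but the double-greedy rule ensures the "missing" part is exactly covered by the $\gfunc$ side. Concretely I would argue $\hfunc(S_k) + \gfunc(E\setminus S_k) \ge \tfrac12\big(\OPT^{\hfunc}_k + \OPT^{\gfunc}_{n-k}\big) \ge \tfrac12 \OPT_k + \tfrac12 \OPT_k$ is too strong; rather, the correct bookkeeping is that whichever side the algorithm favored at each step dominates the corresponding part of $O_k$, and the other part of $O_k$ is bounded by what that side could have contributed, so one loses at most half. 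I would finish by checking the edge cases $k$ small (so $B_k$ may be empty) and $k$ large (so $A_k$ may be empty) separately, where the bound is in fact tighter, and by confirming polynomial running time, which is immediate since both greedy orders are computable with $O(n^2)$ value-oracle calls.
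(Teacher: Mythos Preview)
Your proposal contains a genuine gap. The central claim in your second step --- that ``the value $\hfunc(A_k)$ is at least the value the pure $\hfunc$-greedy order would have after $|A_k|$ steps, which is $\OPT^{\hfunc}_{|A_k|}$'' --- is false. The double-greedy may send to the $G$-side an element that is excellent for $\hfunc$ but even better for $\gfunc$, and that element is then unavailable when the algorithm later builds $H$. Concretely, take $E=\{a,b,c\}$ with modular $\hfunc$ given by $\hfunc(\{a\})=10$, $\hfunc(\{b\})=\hfunc(\{c\})=1$, and modular $\gfunc$ given by $\gfunc(\{a\})=100$, $\gfunc(\{b\})=\gfunc(\{c\})=0$; both are gross substitute. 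The algorithm first sends $a$ to $G$ (since $100>10$) and then places $b,c$ in $H$, so after all three iterations $H=\{b,c\}$ with $\hfunc(H)=2$, whereas $\OPT^{\hfunc}_{2}=11$. Your ``correct bookkeeping'' in step three is never made concrete and cannot be, because the premise it rests on is wrong.

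The paper's proof uses a different decomposition. It does not compare $\hfunc(S_k)$ to $\OPT^{\hfunc}_k$ at all. It first observes that $\ffunc(\setH_k)$ is non-decreasing in $k$ along the run (Lemma~\ref{lem:incresing_decreasing_double_greedy}), hence $\ffunc(\setH_k)\geq \ffunc(\emptyset)=\gfunc(E)\geq \gfunc(O_k^{\comp})$; this disposes of the $\gfunc$-part of $\OPT_k$ in one stroke using only submodularity. All the gross-substitute work then goes into proving $\ffunc(\setH_k)\geq \hfunc(O_k)$, and this is done by an iterative exchange argument based on the characterization in Lemma~\ref{lem:gross_substitute}: one repeatedly swaps elements of $O_k$ for the algorithm's choices, showing a suitable potential does not increase, until $O_k$ has been transformed into some $\setH_\ell$ with $\ell\leq k$. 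Although you mention ``matroid-like exchange properties'' in your opening, you never actually invoke them; this concrete exchange step is precisely where the gross-substitute hypothesis enters, and it is missing from your outline.
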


We complement our results by the following lower bounds.

\begin{restatable}{theorem}{lowerbounds}
\label{thm:lower}
Let $\gfunc,\hfunc \colon 2^E \to \R_{\geq 0}$ be monotone functions. Then no algorithm has a better competitive ratio than\vspace{-2pt}
\begin{itemize}
\item[$\cdot$] $\smash{1+\frac{c}{2-c}}$ if $\gfunc$ and $\hfunc$ are submodular with curvature $c \in [0,1]$; 
\item[$\cdot$] $\smash{\frac{2}{3}+\frac{1}{3\gamma}}$ if $\gfunc$ and $\hfunc$ have generic submodularity ratio $\gamma \in (0,1]$;
\item[$\cdot$] $5/4$ if $\gfunc$ and $\hfunc$ are gross substitute.
\end{itemize}
\end{restatable}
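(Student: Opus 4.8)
The plan is to prove each of the three lower bounds by exhibiting, for the corresponding function class, an explicit family of instances $(E,\gfunc,\hfunc)$ on which \emph{every} incremental--decremental solution has a stage with the claimed deficit; since an algorithm sees $\gfunc,\hfunc$ and may output the best permutation, this amounts to showing that $\min_\pi \rho(\pi)$ is as large as stated. The guiding principle is the tension isolated in the discussion preceding the theorem: transforming an element raises $\hfunc(S_k)$ — and, by diminishing returns, more so when done early — while it lowers $\gfunc(E\setminus S_k)$ — by diminishing returns, less so when done early. The instances place large utility on a few elements that are simultaneously valuable for $\hfunc$ and for $\gfunc$, so that the per-stage optima $\argmax_{|S|=k} f(S)$ do \emph{not} form a chain: they prescribe transforming the $\hfunc$-heavy elements while retaining the $\gfunc$-heavy ones, a moving target that a fixed permutation cannot follow. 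For each family I would (i) write $\gfunc,\hfunc$ explicitly, as weighted rank- or coverage-type functions, (ii) verify the relevant parameter — curvature $c$, generic submodularity ratio $\gamma$, or the gross substitutes property — directly from the marginal values, and (iii) compute $\OPT_k$ in closed form for all $k$.

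For the curvature bound I would take a family indexed by $c\in[0,1]$ in which the marginal values of the special elements decay by the factor governing the curvature, so that $c$ measures exactly how much cheaper an early transformation is for $\hfunc$ (and how much less costly for $\gfunc$) than a late one. The heart of the argument is a dichotomy: fix any permutation $\pi$ and inspect its first few transformed elements; either a high-$\gfunc$-value element is transformed early, and then $\gfunc(E\setminus S_k)$ has already collapsed at a small stage $k$ while $\OPT_k$ still retains it, or every such element is delayed, and then $\hfunc(S_k)$ lags behind $\OPT_k$ at a later stage $k$; a short case analysis shows the best achievable ratio at the worse of these two stages is $1+\tfrac{c}{2-c}$, the value at which the two deficits coincide. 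Letting $c=1$ recovers the bound $2$ for submodular functions. For the generic-submodularity-ratio bound I would use a structurally similar instance but replace full submodularity by a controlled supermodular component tuned so that the family has ratio exactly $\gamma$; the same two-sided dichotomy, now with the gap sharpened by a factor $1/\gamma$, yields $\tfrac{2}{3}+\tfrac{1}{3\gamma}$ (which consistently degrades to $1$ as $\gamma\to1$).

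For the gross substitutes bound the generic templates are unavailable, since gross substitutivity is far more restrictive than submodularity; instead I would hand-craft a single small instance — a ground set of a few elements with $\gfunc$ and $\hfunc$ given by an explicit weighted matroid-rank (equivalently, unit-demand) valuation — verify the gross substitutes condition directly (no local exchange violation), and enumerate the finitely many permutations to confirm that each has a stage with $f(S_k)\le\tfrac45\,\OPT_k$. The main obstacle throughout is twofold. First, the ``every permutation fails'' step must be both correct and tight: the two failure modes — transforming a $\gfunc$-heavy element too early versus delaying an $\hfunc$-heavy one too long — must be exhaustive over all permutations, and the instance parameters must be balanced so that the worse of the two resulting deficits is \emph{exactly} the claimed constant, not merely some constant above $1$; this typically needs a careful charging argument or a complete case split on the prefix. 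Second, one must certify the structural property of the constructed $\gfunc,\hfunc$ — routine for curvature and the submodularity ratio, where the marginals can be read off directly, but genuinely delicate for the gross substitutes instance, where all exchange violations must be ruled out. Some additional bookkeeping should then make the parameter optimization land on the closed forms $1+\tfrac{c}{2-c}$, $\tfrac23+\tfrac1{3\gamma}$, and $\tfrac54$ rather than on messier expressions.
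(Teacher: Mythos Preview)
Your plan is correct and matches the paper's approach: each of the three bounds is proved by an explicit small instance on which a single distinguished element would need to be simultaneously the first transformed (to realize $\OPT_1$) and the last transformed (to realize $\OPT_{n-1}$), so every permutation necessarily fails at one of these two stages. The paper's constructions are of exactly the type you describe --- for curvature, a size-$n$ instance with one special element whose marginal drops by a factor $(1-c)$ once added; for the generic submodularity ratio, a three-element instance with one supermodular pair tuned to ratio $\gamma$; for gross substitutes, a three-element weighted-matroid-rank valuation --- and the ``first element versus last element'' dichotomy you outline is precisely the case split used.
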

For an overview of known results as well as our contributions, we refer to Table~\ref{tab:overview}.

\subsection{Related Work}

Our work on incremental--decremental maximization directly relates to literature on incremental maximization where the latter corresponds to the special case that $\gfunc \equiv 0$.
The natural greedy algorithm for cardinality constrained maximization inherently produces an incremental solution.
Its analysis for monotone submodular functions by Nemhauser et al.~\cite{nemhauser1978analysis} translates to a competitive ratio of $\frac{\e}{\e-1}$.
This approximation guarantee is best-possible among all polynomial algorithms, unless $\mathsf{P} = \mathsf{NP}$, as shown by Feige \cite{Feige1998threshold}. 
It is interesting to note that in a setting where both $h$ and $g$ are monotone submodular, our results imply that the best-possible competitive ratio is in the interval $[2,1+\frac{\e}{\e-1}]$ thus separating the competitive ratios achievable in the incremental and the incremental--decremental settings.
For monotone gross substitute functions, a result of Paes~Leme \cite{leme2017gross} implicitly yields an incremental solution that is $1$-competitive. 
In contrast, our results imply that in a setting where both $g$ and $h$ are gross substitute, the best-possible competitive ratio for incremental--decremental maximization is in the interval $[5/4,2]$ exhibiting that also for this class of functions incremental--decremental maximization is strictly more difficult than incremental maximization.
For submodular functions with curvature~$c$, Conforti et al.~\cite{conforti1984submodular} provided a more fine-grained analysis that yields a competitive ratio of~$c\frac{\e^c}{\e^c-1}$ for the greedy algorithm. 
For functions with submodularity ratio~$\SR$, Das et al.~\cite{das2018approximate} showed that the greedy algorithm yields a $\smash{\frac{\e^{\SR}}{\e^{\SR}-1}}$-approximation. 
Bian et al.~\cite{bian2017guarantees} combined the analysis of Conforti et al.~and Das et al.~and showed that for functions of curvature~$c$ and submodularity ratio~$\SR$, the greedy algorithm yields a $\smash{c\frac{\e^{c\SR}}{\e^{c\SR}-1}}$-approximation.
Disser et al.~\cite{DisserWeckbecker-23} considered the more general setting of $\SR$-$\alpha$-augmentable functions and showed that the greedy algorithm is a $\smash{\frac{\alpha}{\SR} \cdot \frac{\e^{\alpha}}{\e^{\alpha}-1}}$-approximation. 
This class of functions contains all functions with submodularity ratio~$\SR$ as well as all $\alpha$-augmentable functions. 
While the results above are stated in terms of the submodularity ratio $\SR$, they also hold for the generic submodularity ratio~$\gamma$ since $\SR \geq \gamma$.

Incremental maximization beyond the greedy algorithm has been considered by Bernstein et al.~\cite{BernsteinDisserGrossHimburg-20}. 
For accountable functions, they gave a $(\varphi + 1)$-competitive solution where $\varphi \approx 1.62$ is the golden ratio. 
Here, a function $f \colon 2^E \to \R_{\geq 0}$ is called accountable if for all $S \subseteq E$ there is $s \in S$ such that $f(S\setminus\{s\})\geq f(S) - f(S)/|S|$. 
They further showed that no deterministic algorithm can provide a competitive ratio better than~$2.18$ in this setting. 
Disser et al.~\cite{DisserKlimmSchewiorWeckbecker-23} gave an improved lower bound of $2.24$ and gave a randomized upper bound of~$1.77$.

Goemans and Unda~\cite{Goemans2017approximating} considered a different setting of incremental maximization where, instead of the worst-case ratio over all time steps, the ratio of the sum over all time steps is to be maximized, and obtained bounds for special functions containing rank functions of independence systems. 

The double-greedy algorithm that we consider is reminiscent of double-greedy algorithms for non-monotone submodular maximization. Buchbinder et al.~\cite{Buchbinder2015tight} used this approach and obtained a deterministic $3$-approximation and a randomized $2$-approximation for this problem. 
Feige et al.~\cite{Feige2007maximizing} showed earlier that no better approximation than~$2$ is possible with polynomially many oracle calls, implying that the randomized $2$-approximation of Buchbinder et al.~cannot be improved.
Non-montone submodular function maximization has also been considered under knapsack, matroid, and other constraints~\cite{ChekuriVZ14,FadaeiFS11,KulikST13,LeeMNS10}.
    
   \section{A Simple Randomized Algorithm}\label{sec:random}

    To begin with, we state a first observation that leads to a naive randomized algorithm for the incremental--decremental problem.
    For a randomized algorithm the returned incremental--decremental solution $\pi$ and, hence, all prefixes $S_k$ are random variables; such an algorithm is called $\rho$-competitive if 
    $\rho\,\mathbb{E}[f(S_k)] \geq \OPT_k$ for all $k$.
    
    \begin{restatable}{proposition}{randomizedtworhocompetitive} \label{prop:2rho}
        Let $g,h\colon 2^E\rightarrow \mathbb R_{\geq 0}$.
		If there is a $\rho$-competitive algorithm both for the incremental problem for $g$ and for the incremental problem for~$h$, then there is a $2\rho$-competitive randomized algorithm for the incremental--decremental maximization problem for~$f\colon 2^E\rightarrow \mathbb R_{\geq 0}$, $S\mapsto h(S)+g(E\setminus S)$.
	\end{restatable}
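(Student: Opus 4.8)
The plan is to realise the randomized algorithm as a fair coin flip between two deterministic orders: one obtained from the given $\rho$-competitive incremental algorithm for $h$, and one obtained from the given $\rho$-competitive incremental algorithm for $g$ after \emph{reversing} its order. The reversal is the crucial point: since $g$ is evaluated on the complement $E \setminus S$, the $k$-prefix of the reversed $g$-order has complement equal to the $(n-k)$-prefix of the original $g$-order, so that the incremental guarantee for $g$ transfers directly to the $g$-part of $f$.

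Concretely, I would let $T_1 \subseteq \dots \subseteq T_n = E$ be the prefixes of the order produced by the $\rho$-competitive incremental algorithm for $h$, so that $\rho\, h(T_j) \geq \OPT^h_j$ with $\OPT^h_j \coloneqq \max\{h(S) : |S| = j\}$; and I would let $U_1 \subseteq \dots \subseteq U_n = E$ be the prefixes of the $\rho$-competitive incremental algorithm for $g$, so that $\rho\, g(U_j) \geq \OPT^g_j$. The randomized algorithm outputs the $h$-order with probability $\tfrac12$, in which case the $k$-prefix is $S_k = T_k$, and the reversal of the $g$-order with probability $\tfrac12$, in which case the $k$-prefix $S_k$ satisfies $E \setminus S_k = U_{n-k}$.

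The only genuine ingredient beyond this setup is the decomposition $\OPT_k \leq \OPT^h_k + \OPT^g_{n-k}$, which is immediate: taking an optimal $S^*$ with $|S^*| = k$ and $f(S^*) = \OPT_k$, we have $h(S^*) \leq \OPT^h_k$ and $g(E \setminus S^*) \leq \OPT^g_{n-k}$ because $|E \setminus S^*| = n-k$. Combining this with the trivial bounds $f(S) \geq h(S)$ and $f(S) \geq g(E \setminus S)$ (valid since $g,h \geq 0$), the $h$-branch yields $f(S_k) \geq h(T_k) \geq \OPT^h_k/\rho$ and the $g$-branch yields $f(S_k) \geq g(U_{n-k}) \geq \OPT^g_{n-k}/\rho$, so that
\[
\mathbb{E}[f(S_k)] \;\geq\; \tfrac12\Bigl(\tfrac{\OPT^h_k}{\rho} + \tfrac{\OPT^g_{n-k}}{\rho}\Bigr) \;\geq\; \tfrac{1}{2\rho}\,\OPT_k
\]
for every $k \in \{1,\dots,n\}$, which is precisely $2\rho$-competitiveness.

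I do not expect a real obstacle: once the decomposition inequality is stated the rest is a one-line averaging argument. The two points deserving a sentence of care are the boundary index $k=n$, where $E \setminus S_k = \emptyset$ and one either adopts the convention $\OPT^g_0 = g(\emptyset)$ or simply observes that $f(S_n) = f(E) = \OPT_n$, and verifying that the reversal of the $g$-order indeed aligns each $k$-prefix with the $(n-k)$-prefix of the original order on the complement.
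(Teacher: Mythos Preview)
Your proposal is correct and follows essentially the same approach as the paper: flip a fair coin between the $h$-incremental order and the reversed $g$-incremental order, use $f \geq h$ and $f \geq g\circ(\cdot)^{\comp}$ on the two branches, and combine via the decomposition $\OPT_k \leq \OPT^h_k + \OPT^g_{n-k}$. The paper's proof is the same one-line averaging argument, just with slightly different notation for the prefixes.
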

    \begin{proof}
        The randomized algorithm takes with probability~$\frac{1}{2}$ the solution of the incremental algorithm for~$h$ and with probability $\frac{1}{2}$ the reverse order of the incremental solution for~$g$.
        Let $\pi^h$ and $\pi^g$ be $\rho$-competitive solutions to the incremental problem with respect to the set functions $h$ and $g$, respectively.
        Let $k\in \{1,\dots, n\}$.
        Denote with $S_k^h\subseteq E$ the subset with the first $k$ elements of $\pi^h$ and with $S_k^g\subseteq E$ the subset containing the last $k$ elements of~$\pi^g$.
        Let~$\OPT_k^h$, $\OPT_k^g$ and $\OPT_k^f$ be the maximum values of a subset with $k$ elements for objective functions $h$, $g$ and $f$, respectively.

        The expected value of the first $k$ elements of the order returned by the randomized algorithm for the incremental--decremental problem is then given by, 
        $\frac{1}{2}(f(S_k^h)+f(S_k^g))\geq \frac{1}{2}(h(S_k^h)+g(E\setminus S_k^g)) \geq \frac{1}{2\rho}(\OPT_k^h+\OPT_{n-k}^g) \geq \frac{1}{2\rho}\OPT_k^f,$ 
        where in the second inequality we used $\rho$-competitiveness of $\pi^h$ and $\pi^g$.
	\end{proof}
    
    To obtain concrete bounds on the competitiveness of this algorithm, we use the 
    upper bounds on the competitive ratio of incremental maximization for functions with bounded curvature $c\in(0,1]$ and bounded generic submodularity ratio $\gamma\in(0,1]$ listed in \Cref{tab:overview}.
    The generic submodularity ratio is a way to relax submodularity and allows to bound the competitive ratio for functions that are in some sense close to submodular.
    In the following definitions, we set $\min \emptyset \coloneqq 1$.
    For the marginal increase in~$f\colon 2^E \to \R_{\geq 0}$ of an element~$e \in E$ or of a set~$T \subseteq E$ with respect to~$S \subseteq{E}$, we introduce the notations
    $f(e \mid S) \coloneqq f(\{e\} \cup S) -f(S)$ and $f(T \mid S) \coloneqq f(T \cup S) -f(S)$.
    
    \begin{definition}[Gong et al.~\cite{gong2021maximize}] \label{def:generic_submodularity_ratio}
        A monotone set function $f \colon 2^E \rightarrow \R_{\geq0}$ has \emph{generic submodularity ratio} 
        \[
            \gamma \coloneqq \min \biggl\{ \frac{f(e \mid A)}{f(e \mid A \cup B)} : A,B \in 2^E, e \in E \text{ with } f(e\mid A\cup B) > 0\biggr\}.
        \]
    \end{definition}

Intuitively, the generic submodularity ratio interpolates between monotone functions with $\gamma=0$ and submodular functions with $\gamma=1$.
To see this, note that for a submodular function, we have $f(e \mid A) \geq f(e \mid A \cup B)$ for all $A,B \in 2^E$ and $e \in E$, so that the minimum is attained for $B = \emptyset$ and the submodularity ratio is $\gamma = 1$. 
Das and Kempe~\cite{das2018approximate} introduced the stronger notion of the \emph{submodularity ratio}
\begin{align*}
            \SR \coloneqq \min \bigg\{\frac{\sum_{b \in B} f(b \mid A)}{f(B\mid A)} : A,B \in 2^E \text{ with } f(B \mid A)>0 \biggr\}
\end{align*}
    and showed that the greedy algorithm is $\smash{\frac{\e^{\SR}}{\e^{\SR}-1}}$-competitive for $\gfunc \equiv 0$.
    Observe that the generic submodularity ratio $\gamma$ is always smaller or equal to $\SR$, since for all subsets $B=\{b_1,\dots, b_k\}, A \subseteq E$ with $f(B \mid A)>0$,
    \[
        \frac{\sum_{b \in B} f(b \mid A)}{f(B\mid A)}
        = \frac{\sum_{i=1}^k f(b_i \mid A)}{\sum_{i=1}^k f(b_i \mid A \cup \{b_1, \dots, b_{i-1}\})}
        \geq \frac{\sum_{i=1}^k f(b_i \mid A)}{ \gamma^{-1} \sum_{i=1}^k f(b_i \mid A)} =\gamma.
    \]
    Thus, their bound implies also $\frac{\e^{\gamma}}{\e^{\gamma}-1}$-competitiveness in terms of the generic submodularity ratio in the incremental setting.

    The curvature $c\in[0,1]$ measures how close a function is to being supermodular.
    A function is supermodular if it has curvature $c=0$.
    
    \begin{definition}[Bian et al.~\cite{bian2017guarantees}] \label{def:curvature}
        A monotone set function $f\colon 2^E \rightarrow \R_{\geq0}$ has \emph{curvature}
        \[
            c:= 1- \min \biggl\{ \frac{f(e\mid A \cup B)}{f(e\mid A)} : A,B \in 2^E, e \in E \setminus B \text{ with } f(e\mid A)>0\biggr\}.
        \]
    \end{definition}
    
    Bian et al.~\cite{bian2017guarantees} showed that the greedy algorithm is $c\frac{\e^{c\SR}}{\e^{c\SR}-1}$-competitive for $\gfunc \equiv 0$. 
    For submodular functions the notion of curvature was already introduced by Conforti and Cornu{\'e}jols~\cite{conforti1984submodular}.
    In this case, the curvature simplifies to 
    \begin{align} \label{eq:def_curvature_submodular}
        1- \min_{e \in E} \frac{f(e \mid E \setminus \{e\})}{f(\{e\})}
    \end{align}
    because of $f(e \mid A \cup B) \geq f(e \mid E \setminus \{e\})$ and $f(e \mid A) \leq f(\{e\})$.
    We observe that submodular functions with curvature zero are modular since they are both submodular and supermodular. 

    The upper bound on the competitive ratio of the incremental problem together with~\Cref{prop:2rho} translates to a first upper bound on the competitive ratio of incremental--decremental maximization.

    \begin{corollary}\label{cor:random}
        Let $\gfunc,\hfunc\colon 2^E \rightarrow \R_{\geq0}$ be monotone with generic submodularity ratio $\gamma \in (0,1]$ and curvature~$c \in (0,1]$.
        The algorithm that returns uniformly at random the greedy order of~$E$ with respect to $h$ or the reverse greedy order with respect to $g$ is  $2c\frac{\e^{c\SR}}{\e^{c\SR}-1}$-competitive.
    \end{corollary}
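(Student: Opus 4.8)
The plan is to obtain the statement directly from \Cref{prop:2rho} by exhibiting, for each of $g$ and $h$ separately, an incremental algorithm with competitive ratio $\rho \coloneqq c\frac{\e^{c\SR}}{\e^{c\SR}-1}$, and then applying the proposition with this $\rho$.

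First I would note that, since $g$ and $h$ are monotone, their curvature and submodularity ratios are well defined, so the bound of Bian et al.~\cite{bian2017guarantees} stated earlier applies: running the greedy algorithm on a monotone function of curvature at most $c$ and submodularity ratio at least $\SR$ yields an incremental solution that is $c\frac{\e^{c\SR}}{\e^{c\SR}-1}$-competitive. Here one should record that this guarantee is phrased in terms of the submodularity ratio $\SR$, whereas our hypothesis bounds only the generic submodularity ratio $\gamma$; since $\SR \geq \gamma$ for every monotone function (as shown in the excerpt) and the function $x \mapsto \frac{\e^x}{\e^x-1}$ is non-increasing on $(0,\infty)$, the guarantee holds verbatim with $\SR$ replaced by the common lower bound $\gamma$ on the submodularity ratios of $g$ and $h$. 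Applying this to $h$ shows that the greedy order of $E$ with respect to $h$ is a $\rho$-competitive incremental solution for $h$, and applying it to $g$ shows that the greedy order of $E$ with respect to $g$ is a $\rho$-competitive incremental solution for $g$.

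Next I would invoke \Cref{prop:2rho} with this value of $\rho$. Since there is a $\rho$-competitive incremental algorithm both for $g$ and for $h$, the proposition yields a $2\rho$-competitive randomized algorithm for the incremental--decremental problem for $f(S) = h(S) + g(E \setminus S)$; by the construction in its proof, this algorithm returns with probability $\tfrac12$ the greedy order of $E$ for $h$ and with probability $\tfrac12$ the reverse of the greedy order of $E$ for $g$, which is precisely the algorithm in the statement. Its competitive ratio is $2\rho = 2c\frac{\e^{c\SR}}{\e^{c\SR}-1}$, as claimed.

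This is a corollary in the strict sense, so I do not expect a genuine obstacle; the two points that require a line of care are the parameter matching between $\SR$ and $\gamma$ discussed above, and keeping track of the fact that it is the \emph{reverse} of the greedy order for $g$ that enters, because in \Cref{prop:2rho} a length-$k$ prefix of the incremental--decremental order is a length-$k$ suffix of the incremental order for $g$, whose complement has size $n-k$, so that the relevant optimum on the $g$-side is $\OPT_{n-k}^g$.
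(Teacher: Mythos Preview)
Your proposal is correct and matches the paper's approach exactly: the paper presents this corollary without a separate proof, simply noting that it follows by combining the Bian et al.\ incremental bound with \Cref{prop:2rho}. Your additional care about $\SR$ versus $\gamma$ and about why the \emph{reverse} greedy order for $g$ appears is accurate and makes explicit what the paper leaves implicit.
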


    \section{A  Deterministic Algorithm}\label{sec:incdec}

    The double-greedy algorithm (see below) can be seen as a way to derandomize the randomized solution.
    Instead of only considering the elements in the greedy-order defined by $g$ or the inverse greedy-order defined by $h$, it instead builds partial greedy orders for both functions from both ends simultaneously. 
    Initially both the considered prefix and the considered suffix of the solution are empty.
    In each step, an element is selected that maximizes the maximum of the increment relative to the current prefix for $h$ and the increment relative to the current suffix for $g$.
    It is then appended from the right to the current prefix if the maximum is attained for $h$, and from the left to the current suffix if the maximum is attained for $g$. 
    This procedure is visualized in \cref{fig:double-greedy-step}.

    In the following, we denote by $\setF_k \coloneqq \setG_k \cup \setH_k$ the set of all elements that are added to the solution by the double-greedy algorithm up to step $k$.
    Also, we denote the complement of a set~$S$ in $E$ by $S^{\comp}\coloneqq E \setminus S$ and write $S_k^{\comp}\coloneqq (S_k)^{\comp}$ for indexed sets.
    Further, we generally let $n\coloneqq \abs{E}$ and let $\ffunc\colon 2^E \rightarrow \R_{\geq 0}$ denote the objective function of the incremental--decremental maximization problem, 
    i.e., $\ffunc(S) \coloneqq \hfunc(S) +\gfunc(S^{\comp})$.
    The double-greedy algorithm is symmetric with respect to the roles of~$\gfunc$ and~$\hfunc$ with the slight complication that the handling of ties between~$\gfunc$ and~$\hfunc$ is inherently asymmetric. 
    In order to still exploit symmetry, we parameterize the algorithm by a comparison operator $\prec \ \in \{ \leq, < \}$ that governs whether it favors adding elements to $\setG_i$ or $\setH_i$ in case of a tie and consider both variants simultaneously.
    The following proposition makes precise how we can exploit symmetry in our proofs.
    In particular, it suffices to show competitiveness of the solution up to cardinality~$|\setH_n|$.
    
    \begin{restatable}{proposition}{propositiononsymmetry} \label{prop:symmetry_handling}
        Let $\mathcal{C} \subseteq \{ \hfunc \colon 2^E \rightarrow \R_{\geq0} \text{ monotone}\}$. Assume that for all functions $\gfunc, \hfunc \in \mathcal{C}$, for all $\prec \ \in \{\leq, <\}$, for all $k\leq n$ and for all subsets $S \subseteq E$ with $\abs{S}=\abs{\setH_k}$ the double-greedy algorithm ensures that
        \begin{align} \label{eq:rho_competitiveness}
            \rho \ffunc(\setH_k) \geq \ffunc(S),
        \end{align}
        for some~$\rho \geq 1$.
        Then, the double-greedy algorithm is $\rho$-competitive for all functions $\gfunc, \hfunc \in \mathcal{C}$.
    \end{restatable}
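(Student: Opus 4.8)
The plan is to split the competitiveness requirement $\rho\,\ffunc(S_k) \ge \OPT_k$ for the prefixes $S_k = \{e_1,\dots,e_k\}$ of the produced order $\pi = (e_1,\dots,e_n)$ into three cases and to reduce the ``suffix'' case to the ``prefix'' case by exploiting the built-in symmetry of the algorithm. Fix $\gfunc,\hfunc \in \mathcal C$ and $\prec\,\in\{\leq,<\}$, and recall that $\pi$ lists the elements of the prefix set $\setH_n$ in order of addition, followed by the elements of the suffix set $\setG_n = E \setminus \setH_n$ in reverse order of addition. The case $k = n$ is immediate: $S_n = E$ is the unique set of cardinality $n$, so $\ffunc(S_n) = \OPT_n$ and $\rho \ge 1$ gives $\rho\,\ffunc(S_n) \ge \OPT_n$.

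For $1 \le k \le \abs{\setH_n}$, the prefix $S_k$ consists of the first $k$ elements ever added to $\setH$. Since $\abs{\setH_i}$ increases by at most one in each step and ends at $\abs{\setH_n}$, there is a step $m$ with $\setH_m = S_k$ and $\abs{\setH_m} = k$. Applying the hypothesis at this step $m$, with $S$ ranging over all subsets of size $k = \abs{\setH_m}$, yields $\rho\,\ffunc(S_k) = \rho\,\ffunc(\setH_m) \ge \ffunc(S)$ for every such $S$, and hence $\rho\,\ffunc(S_k) \ge \OPT_k$.

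For the remaining range $\abs{\setH_n} < k < n$, I would run the double-greedy algorithm on the swapped instance $(\tilde\gfunc,\tilde\hfunc) = (\hfunc,\gfunc)$ with the opposite tie-breaking operator $\prec'$, where $\{\prec,\prec'\} = \{\leq,<\}$. By induction on the step count one shows that this mirrored run selects the same element as the original run in every step and routes it into its prefix exactly when the original run routes it into its suffix, and vice versa: after $i$ steps the mirrored prefix set equals $\setG_i$ and the mirrored suffix set equals $\setH_i$, so the comparison deciding the routing of the chosen element is literally the original comparison with the two sides exchanged, and flipping $\prec$ to $\prec'$ is precisely what keeps the tie-breaking between $\gfunc$ and $\hfunc$ consistent with this exchange. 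Consequently the mirrored run produces the reversed order $\bar\pi = (e_n,\dots,e_1)$; its prefix set ends at size $\abs{\setG_n}$, and its prefix of length $n-k$ (which is at least $1$ and, by $k > \abs{\setH_n}$, strictly below $\abs{\setG_n}$, hence in the prefix range) equals $S_k^{\comp}$. Since $\tilde\ffunc(S) = \gfunc(S) + \hfunc(S^{\comp})$, a short calculation gives $\tilde\ffunc(S_k^{\comp}) = \gfunc(S_k^{\comp}) + \hfunc(S_k) = \ffunc(S_k)$, and substituting complements gives $\OPT^{\tilde\ffunc}_{n-k} = \OPT^{\ffunc}_k$. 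The argument of the previous paragraph, now applied to $\tilde\gfunc,\tilde\hfunc \in \mathcal C$ with operator $\prec'$ — a combination the hypothesis also covers, since it quantifies over all functions in $\mathcal C$ and over both operators — therefore yields $\rho\,\tilde\ffunc(S_k^{\comp}) \ge \OPT^{\tilde\ffunc}_{n-k}$, i.e.\ $\rho\,\ffunc(S_k) \ge \OPT_k$. Combining the three cases shows $\rho$-competitiveness, and since $\prec$ was arbitrary this holds for either tie-breaking variant.

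I expect the main obstacle to be making the mirror-run induction airtight, especially the bookkeeping of tie-breaking: one must argue that the algorithm's choice among several elements attaining the current maximum increment depends only on the set of candidates and the common value, and not on whether that value came from $\gfunc$ or from $\hfunc$, so that this choice is invariant under the swap. Granted this, the routing flip and the resulting reversal of the order follow by a careful but routine induction, and the two cardinality reductions $\tilde\ffunc(S_k^{\comp}) = \ffunc(S_k)$ and $\OPT^{\tilde\ffunc}_{n-k} = \OPT^{\ffunc}_k$ are straightforward.
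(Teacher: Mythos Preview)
Your proof is correct and follows essentially the same approach as the paper: both argue that every prefix cardinality is either an $\abs{\setH_k}$ or an $\abs{\setG_k^{\comp}}$, handle the former directly via the hypothesis, and reduce the latter to the former by proving (via induction on steps) that swapping $\gfunc,\hfunc$ and flipping the tie-breaking operator reverses the output order. You correctly flag the one delicate point---that tie-breaking among argmax elements must be done consistently so that the mirrored run selects the same element---which the paper also notes with its ``provided that ties are handled consistently'' caveat.
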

    \begin{proof}
        Let $\pi = (e_1,\dots,e_n)$ denote the ordering computed by the double-greedy algorithm.
        Then, the double-greedy algorithm is $\rho$-competitive if, for all $\ell \leq n$ and all sets $S \subseteq E$ with~$\abs{S}=\ell$, we have $\rho \ffunc(\{e_1, \dots, e_{\ell}\}) \geq \ffunc(S)$.
        Equivalently, 
        \begin{align}
            \rho \ffunc(\setH_k) \geq \ffunc(S)\label{eq:symmetry1}
        \end{align}
        for all sets $S \subseteq E$ with $\abs{S}=\abs{\setH_k}$ and 
        \begin{align}
            \rho \ffunc(\setG_k^{\comp}) \geq \ffunc(S)\label{eq:symmetry2}
        \end{align}
        for all sets $S \subseteq E$ with $\abs{S}=\abs{\setG^{\comp}_k}$.
        By assumption, \eqref{eq:symmetry1} holds and it remains to prove \eqref{eq:symmetry2}.
        
        The double-greedy algorithm is symmetric in the following sense. 
        If we swap the functions~$\gfunc$ and $\hfunc$ in the input and change the choice of $\prec \ \in \{\leq,<\}$, the double-greedy algorithm produces exactly the reversed ordering of elements, provided that ties are handled consistently.
        
        Denote the modified input to the double-greedy algorithm by $\gfunc'\coloneqq\hfunc$, $\hfunc'\coloneqq\gfunc$ and $\prec' \ \in \{\leq, <\}$ with $\prec' \neq \prec$. 
        Furthermore, we write $\ffunc'(S)\coloneqq\hfunc'(S)+\gfunc'(S^{\comp})$. 
        Let~$\setH'_i$ and $\setG'_i$ be the sets~$\setH_i$ and~$\setG_i$ computed by the double-greedy algorithm for the modified input $\gfunc'$, $\hfunc'$ and $\prec'$.
        Similarly, let $\setH_i$ and $\setG_i$ be the sets computed by the algorithm on the input $\gfunc$, $\hfunc$ and $\prec$.
        We show by induction on~$i$ that $\setH'_i=\setG_i$ and $\setG'_i=\setH_i$.
        This trivially holds for~$i=0$.
        
        Consider iteration $i+1$. 
        By induction, $\gfunc'(e \mid \setG'_i)=\hfunc(e \mid \setH_i)$ and~$\hfunc'(e \mid \setH'_i)=\gfunc(e \mid \setG_i)$. 
        Thus, for both inputs the same choice for $e^*_{i+1}$ is possible in the double-greedy algorithm.
        The if-condition in the double-greedy algorithm with input $\gfunc'$, $\hfunc'$ and~$\prec'$ is $\hfunc'(e^*_{i+1} \mid \setH'_{i}) \prec' \gfunc'(e^*_{i+1} \mid \setG'_i)$, which is equivalent to $\gfunc(e^*_{i+1} \mid \setG_{i}) \prec' \hfunc(e^*_{i+1} \mid \setH_i)$. 
        Hence, the condition is true if and only if the if-condition in the double-greedy algorithm with input $\gfunc$, $\hfunc$ and $\prec$ given by~$\hfunc(e_{i+1}^* \mid \setH_{i}) \prec \gfunc(e_{i+1}^* \mid \setG_{i})$ is false.
        Thus,~$e^*_{i+1}$ is added to $\setH'_i$ if and only if it is added to~$\setG_i$. 
        Similarly for $\setG'_i$ and $\setH_i$, which implies $\setH'_{i+1}=\setG_{i+1}$ and $\setG'_{i+1}=\setH_{i+1}$.
        Therefore, the double-greedy algorithm on $\gfunc'$, $\hfunc'$ and $\prec'$ computes the reversed ordering.
        
        We can now apply $(\ref{eq:rho_competitiveness})$ for the double-greedy algorithm on~$\gfunc'$, $\hfunc'$ and $\prec'$ to obtain
        \vspace{-.2cm}
        \begin{align*}
            \rho \ffunc(\setG^{\comp}_k) 
            &= \rho (\hfunc(\setG_k^{\comp})+ \gfunc(\setG_k)) 
            = \rho ( \gfunc'((\setH'_k)^{\comp}) + \hfunc'(\setH'_k) ) \\ 
            &= \rho \ffunc'(\setH'_k) 
            \overset{(\ref{eq:rho_competitiveness})}{\geq} \ffunc'(S^\comp) 
            = \ffunc(S).\tag*{\raisebox{-0.5ex}{\qedhere}}
        \end{align*}
    \end{proof}

    \begin{algorithm}[tb]
        \caption{\label{alg:double-greedy}double-greedy algorithm}
        \Input{$\gfunc,\hfunc \colon 2^E \rightarrow \R_{\geq0}$ monotone, $\prec \ \in \{\leq, <\}$}
        \Output{ordering $\pi=(e_1, \dots, e_n)$ of the elements in $E$}
        \vspace{-0.75em}\hrulefill\vspace{0.25em}
        
            $\setG_0 \gets \emptyset$; $\setH_0 \gets \emptyset$ \;
            \For{$i \gets 1,\dots, n$}
            {
                $e_i^* \gets \argmax_{e \in E \setminus (\setG_{i-1} \cup \setH_{i-1})} \{ \max\{\gfunc(e \mid \setG_{i-1}), \hfunc(e \mid \setH_{i-1})\}\}$
                
                \uIf{$\hfunc(e_i^* \mid \setH_{i-1}) \prec \gfunc(e_i^* \mid \setG_{i-1})$}
                {
                    $\setG_{i}\gets \setG_{i-1} \cup \{e_i^*\}$\;
                    $\setH_{i} \gets \setH_{i-1}$\;
                    $e_{n-|\setG_{i}|+1}\gets e_i^*$\;
                }
                \Else
                {
                    $\setH_{i} \gets \setH_{i-1} \cup \{e_i^*\}$\;
                    $\setG_{i}\gets \setG_{i-1}$\;
                    $e_{|\setH_{i}|}\gets e_i^*$\;
                }
            }
            \Return $\pi \gets (e_1, \dots, e_n)$
    \end{algorithm}

    \begin{figure}
        \centering
    	\begin{tikzpicture}[scale=0.35]
    		\node at (0,7) {\makebox[0pt][r]{$\pi=$}$(e_1, \dots, e_k, \quad ? \quad, e_l, \dots, e_n)$};
    		\draw[decorate, decoration={brace, mirror, amplitude=6pt}] 
    		(-5.75,6.5) -- (-1.75,6.5) node[midway, below=6pt] {\footnotesize $H_{i-1}$};
    		\draw[decorate, decoration={brace, mirror, amplitude=6pt}] 
    		(1.75,6.5) -- (5.75,6.5) node[midway, below=6pt] {\footnotesize $G_{i-1}$};
    		
    		\draw (0,0) ellipse [x radius=3cm, y radius=1.5cm];
    		\node (elem) at (0,0) {$e^*_{i}$};
    		\node[right] at (3,0) {\footnotesize $E \setminus (H_{i-1} \cup G_{i-1})$};
    		
    		\draw[-{Stealth}, bend left] (elem) to node[left,pos=0.4] {\footnotesize $h(e^*_{i} \mid H_{i-1})\geq g(e^*_{i} \mid G_{i-1})$} (-1,6.5);
    		\draw[-{Stealth}, bend right] (elem) to node[right,pos=0.4] {\footnotesize $h(e^*_{i} \mid H_{i-1})< g(e^*_{i} \mid G_{i-1})$} (1,6.5);
    	\end{tikzpicture}
        \caption{In step $i$ of the double-greedy algorithm (with $\prec=<$), the element $e^*_{i}$ maximizing $\max\{g(e \mid G_{i-1}), h(e \mid H_{i-1})\}$ among the remaining elements $e \in E\setminus(G_{i-1}\cup H_{i-1})$ is added to the left or right depending on which marginal increase, $g(e^*_i \mid G_{i-1})$ or $h(e^*_i \mid H_{i-1})$, is higher.}
        \label{fig:double-greedy-step}
    \end{figure}
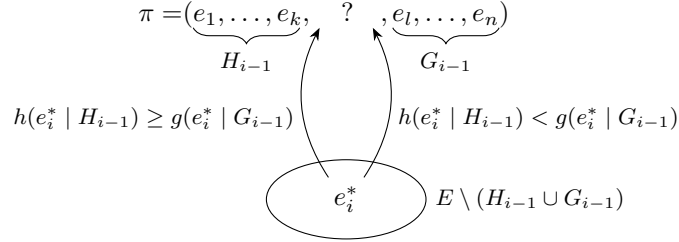
    From now on, we assume that $\hfunc(\emptyset) = \gfunc(\emptyset) = 0$.
    This assumption is without loss of generality, since it suffices to show $\rho$-competitiveness for the functions $\gfunc'(S)\coloneqq \gfunc(S) - \gfunc(\emptyset)$ and $\hfunc'(S)\coloneqq \hfunc(S) - \hfunc(\emptyset)$, because of
    \begin{multline*}
      \ffunc(S) \coloneqq \hfunc(S) + \gfunc(S^{\comp})
      = \hfunc'(S) + \hfunc(\emptyset) + \gfunc'(S) + \gfunc(\emptyset) \\
        \leq \rho (\hfunc'(\setH_k) + \gfunc'(\setH_k^{\comp}))+\hfunc(\emptyset)+\gfunc(\emptyset)                     
        \leq \rho (\hfunc'(\setH_k) + \hfunc(\emptyset) + \gfunc'(\setH_k^{\comp}) +\gfunc(\emptyset))
        =\rho \ffunc(\setH_k),
    \end{multline*}
    for all~$k \leq n$ and $\abs{S}=\abs{H_k}$, which proves competitiveness by Proposition~\ref{eq:rho_competitiveness}.

    We conclude this subsection with a lemma which we will use to prove our upper bounds.
    
    \begin{restatable}{lemma}{lemmagenerallowerboundonopt} \label{lem:gen_lower_bound_for_double_greedy}
        Let $\gfunc, \hfunc \colon 2^E \to \R_{\geq 0}$ be monotone. After iteration $k$ of the double-greedy algorithm we have
         $\ffunc(\setH_k) \geq \sum^k_{i=1} \max\{\hfunc(e^*_i \mid \setH_{i-1}), \gfunc(e^*_i \mid \setG_{i-1}) \}$.
    \end{restatable}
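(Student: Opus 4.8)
The plan is to reduce the claim to a telescoping identity together with one application of monotonicity. First I would rewrite the objective at the $k$-th prefix as $\ffunc(\setH_k) = \hfunc(\setH_k) + \gfunc(\setH_k^{\comp})$ and observe that throughout the execution of the double-greedy algorithm the sets $\setG_k$ and $\setH_k$ are disjoint, since each selected element $e^*_i$ is chosen from $E \setminus (\setG_{i-1} \cup \setH_{i-1})$ and enters exactly one of the two sets. Hence $\setG_k \subseteq \setH_k^{\comp}$, and monotonicity of $\gfunc$ gives $\ffunc(\setH_k) \geq \hfunc(\setH_k) + \gfunc(\setG_k)$; it therefore suffices to bound this right-hand side from below by the stated sum.

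Next I would establish, by a straightforward induction on $k$ (equivalently, by telescoping), that $\hfunc(\setH_k) = \sum_{i \in I^{\setH}_k} \hfunc(e^*_i \mid \setH_{i-1})$ and $\gfunc(\setG_k) = \sum_{i \in I^{\setG}_k} \gfunc(e^*_i \mid \setG_{i-1})$, where $I^{\setH}_k$ (resp.\ $I^{\setG}_k$) denotes the set of iterations $i \leq k$ in which $e^*_i$ was appended to $\setH$ (resp.\ to $\setG$). The point is simply that in every iteration exactly one of the two sets grows by a single element while the other stays unchanged; the base case uses $\hfunc(\emptyset) = \gfunc(\emptyset) = 0$, and the inductive step is just the definition of the marginal value.

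Finally I would determine, in each iteration $i \leq k$, which of the two marginals attains the maximum. If $e^*_i$ was added to $\setH$, then the else-branch was taken, i.e.\ the condition $\hfunc(e^*_i \mid \setH_{i-1}) \prec \gfunc(e^*_i \mid \setG_{i-1})$ is false; for either choice $\prec \in \{<, \leq\}$ this yields $\hfunc(e^*_i \mid \setH_{i-1}) \geq \gfunc(e^*_i \mid \setG_{i-1})$, hence $\max\{\hfunc(e^*_i \mid \setH_{i-1}), \gfunc(e^*_i \mid \setG_{i-1})\} = \hfunc(e^*_i \mid \setH_{i-1})$, and symmetrically the maximum equals $\gfunc(e^*_i \mid \setG_{i-1})$ whenever $e^*_i$ was added to $\setG$. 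Summing these iteration-wise equalities over $i = 1, \dots, k$ and adding the two identities from the previous paragraph gives $\hfunc(\setH_k) + \gfunc(\setG_k) = \sum_{i=1}^k \max\{\hfunc(e^*_i \mid \setH_{i-1}), \gfunc(e^*_i \mid \setG_{i-1})\}$, which combined with the first step finishes the argument. I do not expect a genuine obstacle: the only points that require a little care are handling both tie-breaking rules $\prec$ uniformly, and being explicit that $\setG_k$ and $\setH_k$ remain disjoint so that the monotonicity step is legitimate.
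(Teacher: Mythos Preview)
Your proposal is correct and follows essentially the same approach as the paper: use $\setG_k \subseteq \setH_k^{\comp}$ together with monotonicity of $\gfunc$ to replace $\gfunc(\setH_k^{\comp})$ by $\gfunc(\setG_k)$, telescope $\hfunc(\setH_k)$ and $\gfunc(\setG_k)$ along the iterations, and then use the branch taken by the algorithm to identify each summand with the corresponding maximum. Your version is slightly more explicit (you note that the last step is in fact an equality, whereas the paper is content with an inequality), but the argument is the same.
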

    \begin{proof}
        By monotonicity of $\gfunc$, using $\setG_k \subseteq \setH_k^{\comp}$, we have
        \begin{align*}
            \ffunc(\setH_k) &= \hfunc(\setH_k) + \gfunc(\setH_k^{\comp})\geq \hfunc(\setH_k) + \gfunc(\setG_k) \\
            &= \sum_{i:e^*_i \in \setH_k} \hfunc(e^*_i \mid \setH_{i-1}) + \sum_{i:e^*_i \in \setG_k} \gfunc(e^*_i \mid \setG_{i-1}).
        \end{align*}
        By the definition of the algorithm we obtain
        \begin{align*}
            \ffunc(\setH_k) &\geq \!\!\sum_{i:e^*_i \in \setH_k}\!\! \hfunc(e^*_i \mid \setH_{i-1}) + \!\!\sum_{i:e^*_i \in \setG_k}\!\! \gfunc(e^*_i \mid \setG_{i-1})\\
            &\geq \sum_{i=1}^{k} \max\{\hfunc(e^*_i \mid \setH_{i-1}), \gfunc(e^*_i \mid \setG_{i-1} ) \}.\tag*{\raisebox{-2ex}\qedhere}
        \end{align*}
    \end{proof}
    
    \subsection{Bounded Generic Submodularity Ratio and Curvature}    
    In this section, we consider the case where $\gfunc$ and $\hfunc$ have bounded curvature $c$ and bounded generic submodularity ratio $\gamma$ which we have defined in \Cref{sec:random}.
    Both parameters, the generic submodularity ratio and the curvature characterize how much the marginal increase of an element is allowed to change on different subsets. 
    A function has generic submodularity ratio $\gamma$ if the marginal values of each element increase at most by a factor of $\gamma$, i.e., we have by \Cref{def:generic_submodularity_ratio} that 
    \begin{align} \label{eq:def_gen_sub_mod_ratio}
        f(e \mid S)\geq \gamma f(e \mid T) \qquad \text{ for } S \subseteq T \subseteq E, e \in E.
    \end{align}
    
    Likewise, the curvature measures how much the marginal increase decreases at most, i.e., \Cref{def:curvature} implies 
    \begin{align} \label{eq:def_curvature}
        (1-c)f(e \mid S)\leq f(e \mid T) \qquad \text{ for } S \subseteq T \subseteq E, e\in E \setminus T.
    \end{align}

    We proceed to analyze the double-greedy algorithm for functions with fixed generic submodularity ratio and curvature.
    Recall, that $\setF_m=\setH_m \cup \setG_m$ is the set of elements that are added to the solution up to step $m$.
    Let $\varphi_i \coloneqq \max \{\gfunc(e^*_i \mid \setG_{i-1}), \hfunc(e^*_i \mid \setH_{i-1})\}$ be the increment governing the choice of $e^*_i$ in each step of the algorithm. 
    We obtain a lower bound on $\varphi_i$ by considering the gap to the value of a larger solution $S\subseteq E$. 
    In particular, we will use this bound with $S$ being an optimum solution for $\ffunc$ of size $\abs{S}$.
    
    \begin{restatable}{lemma}{lemmaboundmarginalincrease} \label{lem:lower_bound_of_marginal_increase}
        Let $\gfunc,\hfunc\colon 2^E \rightarrow \R_{\geq0}$ be monotone with generic submodularity ratio $\gamma \in (0,1]$ and curvature~$c \in (0,1]$. For $S \subseteq E$ and $0 \leq m \coloneqq |S|$ the double-greedy algorithm ensures that
        \begin{align*}
                \varphi_{m+1} \geq \frac{c}{|S\setminus \setF_{m}|} \left(\gamma \hfunc(S) - \sum_{i=1}^{m} \varphi_{i} \right)+ \frac{1-c}{|S\setminus \setF_{m}|} \left( \gamma \hfunc(S) - \sum_{i:e^*_i \in \setF_m \cap S} \varphi_{i} \right).
        \end{align*}
    \end{restatable}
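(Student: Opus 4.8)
The plan is to adapt the curvature-refined greedy analysis of Conforti and Cornu\'ejols~\cite{conforti1984submodular} and Bian et al.~\cite{bian2017guarantees} to the double-greedy setting. Write $\sigma \coloneqq \abs{S\setminus\setF_m}$; the claim is non-vacuous only for $\sigma\geq 1$ (otherwise $S=\setF_m$ and iteration $m+1$ may not even exist), so assume this. Using $\sum_{i=1}^m\varphi_i=\sum_{i\,:\,e^*_i\in\setF_m\cap S}\varphi_i+\sum_{i\,:\,e^*_i\in\setF_m\setminus S}\varphi_i$, the asserted inequality is equivalent to
\[
  \gamma\hfunc(S)\ \leq\ \sigma\,\varphi_{m+1}\ +\ \sum_{i\,:\,e^*_i\in\setF_m\cap S}\varphi_i\ +\ c\!\!\sum_{i\,:\,e^*_i\in\setF_m\setminus S}\!\!\varphi_i .
\]
I would prove this by sandwiching $\gamma\hfunc(S\cup\setH_m)$: an upper bound obtained by expanding $\hfunc(S\cup\setH_m)$ relative to $\setH_m$, a lower bound obtained by expanding it relative to $S$, together with $\hfunc(S)\leq\hfunc(S\cup\setH_m)$ from monotonicity.

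For the upper bound, write $\hfunc(S\cup\setH_m)=\hfunc(\setH_m)+\hfunc((S\setminus\setH_m)\mid\setH_m)$ and telescope the second term, adding first the elements of $S\cap\setG_m$ in the order the algorithm picked them and then the elements of $S\setminus\setF_m$ in any order. At each step the conditioning set contains $\setH_m$, so by the generic submodularity ratio~\eqref{eq:def_gen_sub_mod_ratio} every marginal $\hfunc(x\mid\cdot)$ is at most $\tfrac1\gamma\,\hfunc(x\mid T_x)$ for an appropriate subset $T_x$ of the conditioning set. For $x\in S\setminus\setF_m$ I take $T_x=\setH_m$ and use that $x$ is still available in iteration $m+1$, so $\hfunc(x\mid\setH_m)\leq\varphi_{m+1}$. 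For $x\in S\cap\setG_m$, picked in iteration $i(x)\leq m$, I take $T_x=\setH_{i(x)-1}$ \emph{directly}, which is contained in the current conditioning set, and use the greedy choice in iteration $i(x)$, giving $\hfunc(x\mid\setH_{i(x)-1})\leq\varphi_{i(x)}$. Summing and multiplying by $\gamma$ yields $\gamma\hfunc(S\cup\setH_m)\leq\gamma\hfunc(\setH_m)+\sum_{i\,:\,e^*_i\in\setG_m\cap S}\varphi_i+\sigma\varphi_{m+1}$.

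For the lower bound, write $\hfunc(S\cup\setH_m)=\hfunc(S)+\hfunc((\setH_m\setminus S)\mid S)$ and telescope the second term, adding the elements of $\setH_m\setminus S$ in pick order. When the element $w$ picked in iteration $i(w)$ is added, the conditioning set equals $S\cup\setH_{i(w)-1}\supseteq\setH_{i(w)-1}$ and excludes $w$, so by the curvature inequality~\eqref{eq:def_curvature} the marginal is at least $(1-c)\hfunc(w\mid\setH_{i(w)-1})=(1-c)\varphi_{i(w)}$, where $\varphi_{i(w)}$ is realized by the $\hfunc$-marginal because $w$ was appended to $\setH$. Hence $\hfunc(S\cup\setH_m)\geq\hfunc(S)+(1-c)\sum_{i\,:\,e^*_i\in\setH_m\setminus S}\varphi_i$. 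Combining the two bounds, using $\hfunc(\setH_m)=\sum_{i\,:\,e^*_i\in\setH_m}\varphi_i$ (telescoping, cf.\ the proof of \Cref{lem:gen_lower_bound_for_double_greedy}), and rewriting $\gamma\hfunc(\setH_m)-\gamma(1-c)\sum_{i\,:\,e^*_i\in\setH_m\setminus S}\varphi_i=\gamma\sum_{i\,:\,e^*_i\in\setH_m\cap S}\varphi_i+\gamma c\sum_{i\,:\,e^*_i\in\setH_m\setminus S}\varphi_i$, one arrives at the displayed inequality after dropping the factors $\gamma\leq1$ in front of the two $\setH_m$-sums, adding the nonnegative term $c\sum_{i\,:\,e^*_i\in\setG_m\setminus S}\varphi_i$, and merging the $\setH_m$- and $\setG_m$-sums into $\setF_m$-sums; dividing by $\sigma$ finishes the proof.

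The step I expect to be the main obstacle is the treatment of $S\cap\setG_m$ in the upper bound: these are elements of the comparison set $S$ that the algorithm has already committed to on the $\setG$-side, so one is forced to bound their $\hfunc$-marginal against a step value. Routing this comparison through $\setH_m$ (i.e., comparing $\hfunc(x\mid\setH_m)$ with $\hfunc(x\mid\setH_{i(x)-1})$) would cost an extra factor $\tfrac1\gamma$ that the statement does not allow; going straight back to $\setH_{i(x)-1}$ spends only a single factor $\tfrac1\gamma$, which is then cancelled by the global factor $\gamma$ in front of $\hfunc(S)$. A secondary point requiring care is to check, for every picked element, that $\varphi_i$ equals the $\hfunc$- or $\gfunc$-marginal according to whether it was appended to $\setH$ or $\setG$; this follows from the branching condition of the algorithm and holds for both choices of the tie-break operator $\prec$.
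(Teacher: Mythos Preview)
Your argument is correct. The overall strategy---decompose into marginals, control the $S\setminus\setF_m$ part via the greedy choice plus one application of the generic submodularity ratio, and control the ``already-picked but not in $S$'' part via curvature---is the same as the paper's, but the execution differs in the choice of pivot set. The paper works with $\setF_m$: it writes $\hfunc(S)=\hfunc(\setF_m)+\hfunc(S\setminus\setF_m\mid\setF_m)-\hfunc(\setF_m\mid S)$, telescopes both $\hfunc(\setF_m)$ and $\hfunc(\setF_m\mid S)$ over all of $\setF_m$, and at the very end applies~\eqref{eq:def_gen_sub_mod_ratio} once more to pass from $\hfunc(e^*_i\mid\setF_{i-1})$ to $\hfunc(e^*_i\mid\setH_{i-1})\leq\varphi_i$. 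You instead sandwich $\hfunc(S\cup\setH_m)$: since $\hfunc(\setH_m)=\sum_{i:e^*_i\in\setH_m}\varphi_i$ holds exactly, no extra $\gamma^{-1}$ is spent on the $\setH_m$-part, and the elements of $\setG_m\cap S$ are routed directly back to $\setH_{i(x)-1}$ on the upper-bound side with a single $\gamma^{-1}$ each. Your intermediate inequality is thus slightly tighter (fewer $\gamma^{-1}$ factors before you drop $\gamma\leq 1$), though both routes land on the same final bound. One cosmetic remark: for $x\in S\cap\setG_m$ you do not actually need the greedy choice---$\hfunc(x\mid\setH_{i(x)-1})\leq\varphi_{i(x)}$ is immediate from the definition of $\varphi_{i(x)}$ as a maximum.
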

    \begin{proof}
        Let $\{s_1, \dots, s_k\} \coloneqq S\setminus F_m$. 
        Then, we have using $H_{m} \subseteq \{s_1, \dots, s_{i-1}\} \cup F_m$ that
        \begin{align} \label{eq:bound_sol_without_alg_submod}
            \hfunc(S \setminus \setF_m \mid \setF_m) = \sum_{i=1}^k \hfunc(s_i \mid \{s_1, \dots, s_{i-1}\} \cup F_m) \overset{(\ref{eq:def_gen_sub_mod_ratio})}{\leq} \sum_{s \in S \setminus \setF_{m}} \gamma^{-1} \hfunc(s \mid \setH_{m}).
        \end{align}
        Now, we use $\varphi_{m+1} \geq \hfunc(e \mid \setH_{m})$ for $e \notin F_m$ by the greedy choice of $e^*_{m+1}$ to bound
        \begin{align} \label{eq:bound_sol_without_alg_greedy_choice}
            \sum_{s \in S \setminus \setF_{m}} \gamma^{-1} \hfunc(s \mid \setH_{m}) \leq |S \setminus \setF_{m} | \ \gamma^{-1} \varphi_{m+1}.
        \end{align}
        Together, we can compute
        \begin{eqnarray}
            \label{eq:bounding_sol_with_algo_increments}
            \hfunc(S) 
            &=& \hfunc(F_m \cup S) - \hfunc(\setF_m \mid S)\nonumber\\
            &=& \hfunc(\setF_m) + \hfunc(S \setminus \setF_m \mid \setF_m) - \hfunc(\setF_m \mid S) \nonumber\\
            &\overset{(\ref{eq:bound_sol_without_alg_submod})}{\leq}& \hfunc(\setF_m) + \sum_{s \in S \setminus \setF_{m}} \gamma^{-1} \hfunc(s \mid \setH_{m}) - \hfunc(\setF_m \mid S) \nonumber\\[-0.5em]
            &\overset{(\ref{eq:bound_sol_without_alg_greedy_choice})}{\leq}& \hfunc(\setF_{m}) + |S \setminus \setF_{m} | \cdot \gamma^{-1} \varphi_{m+1} - \hfunc(\setF_m \mid S) \nonumber\\
            &=& \sum_{i=1}^{m} \hfunc(e^*_i \mid \setF_{i-1}) + |S \setminus \setF_{m} | \cdot \gamma^{-1} \varphi_{m+1} - \sum_{i=1}^{m} \hfunc(e^*_i \mid S \cup \setF_{i-1}) \nonumber
        \end{eqnarray}
        \begin{eqnarray}
            &\overset{(\ref{eq:def_curvature})}{\leq}& \sum_{i=1}^{m} \hfunc(e^*_i \mid \setF_{i-1}) + |S \setminus \setF_{m} | \cdot \gamma^{-1} \varphi_{m+1}  - \sum_{i:e^*_i \in \setF_m \setminus S} (1-c) \hfunc(e^*_i \mid \setF_{i-1}) \nonumber\\[-0.5em]
            &=& (1-c) \sum_{i:e^*_i \in \setF_m \cap S} \! \! \! \hfunc(e^*_i \mid \setF_{i-1}) + c \sum_{i=1}^{m} \hfunc(e^*_i \mid \setF_{i-1}) + |S \setminus \setF_{m}| \ \gamma^{-1} \varphi_{m+1} \nonumber\\[-0.5em]
            &\overset{(\ref{eq:def_gen_sub_mod_ratio})}{\leq}&  (1-c) \!\! \sum_{i:e^*_i \in \setF_m \cap S} \!\!\!\! \gamma^{-1} \hfunc(e^*_i \mid \setH_{i-1})
            + c \sum_{i=1}^{m} \gamma^{-1} \hfunc(e^*_i \mid \setH_{i-1}) + |S \setminus \setF_{m} | \ \gamma^{-1} \varphi_{m+1} \nonumber \\
            &\overset{c\in[0,1]}{\leq}& (1-c) \!\! \sum_{i:e^*_i \in \setF_m \cap S} \!\!\!\! \gamma^{-1} \varphi_{i} + c \sum_{i=1}^{m} \gamma^{-1} \varphi_{i} + |S \setminus \setF_{m} | \ \gamma^{-1} \varphi_{m+1}.
        \end{eqnarray} 
        Isolation of $\varphi_{m+1}$ in $(\ref{eq:bounding_sol_with_algo_increments})$ yields the desired bound.
    \end{proof}
    
    We combine the bounds derived in \Cref{lem:lower_bound_of_marginal_increase} on the increments $\varphi_{m+1}$ in each step of the algorithm to bound the value of every solution by the sum of those increments.
    An optimum solution of each size is particularly bounded as follows.

    \begin{restatable}{lemma}{lemmaboundoptwithsumofincrements} \label{lem:double_greedy_bound_on_opt_submod_ratio_and_curvature}
        Let $\gfunc, \hfunc \colon 2^E \rightarrow \R_{\geq0}$ be monotone 
        with generic submodularity ratio $\gamma \in (0,1]$ and curvature~$c \in (0,1]$. 
        For $S \subseteq E$ and 
        $k \coloneqq |S|$, the double-greedy algorithm ensures that
        $
           \smash{\frac{1}{c}\left(1-\frac{1}{\e^{c}}\right)\gamma \hfunc(S) \leq \sum_{i=1}^{k} \varphi_{i}}.
        $
    \end{restatable}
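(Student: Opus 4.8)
Throughout write $a_m\coloneqq\sum_{i=1}^m\varphi_i$ and $b_m\coloneqq\sum_{i:\,e_i^*\in \setF_m\cap S}\varphi_i$. We may assume $a_k<\gamma\hfunc(S)$, since otherwise $a_k\ge\gamma\hfunc(S)\ge\frac1c(1-\e^{-c})\gamma\hfunc(S)$ (using $\frac1c(1-\e^{-c})\le1$ for $c\in(0,1]$) and we are done. Set $\psi_m\coloneqq\gamma\hfunc(S)-c\,a_m-(1-c)b_m$; since $b_m\le a_m$ and $a_m\le a_k<\gamma\hfunc(S)$ we have $\psi_m\ge\gamma\hfunc(S)-a_m>0$ for all $m\le k$. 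For $m<k$ we have $\abs{S\setminus\setF_m}\ge k-m\ge1$, so \Cref{lem:lower_bound_of_marginal_increase} applies with the set $S$; since its two terms carry the weights $c$ and $1-c$, it reads exactly $\varphi_{m+1}\ge\psi_m/\abs{S\setminus\setF_m}$.

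Next I would turn this into a recursion for $\psi_m$. Writing $p_m\coloneqq\abs{\setF_m\cap S}$ for the number of the first $m$ steps that insert an element of $S$ (a \emph{hit}), we have $\abs{S\setminus\setF_m}=k-p_m$, and since $a_{m+1}=a_m+\varphi_{m+1}$ always while $b_{m+1}=b_m+\varphi_{m+1}$ on a hit and $b_{m+1}=b_m$ otherwise, one gets $\psi_{m+1}=\psi_m-\varphi_{m+1}$ on a hit and $\psi_{m+1}=\psi_m-c\,\varphi_{m+1}$ otherwise. Together with $\varphi_{m+1}\ge\psi_m/(k-p_m)$ and $c\le1$ this yields $\psi_{m+1}\le\psi_m\bigl(1-\tfrac1{k-p_m}\bigr)$ on a hit and $\psi_{m+1}\le\psi_m\bigl(1-\tfrac{c}{k-p_m}\bigr)$ otherwise.

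If none of the first $k$ steps is a hit, then $b_k=0$, every $p_m=0$, the recursion collapses to the classical greedy recursion $\psi_{m+1}\le(1-c/k)\psi_m$, so $\psi_k\le\gamma\hfunc(S)(1-c/k)^k\le\gamma\hfunc(S)\e^{-c}$, and because $\psi_k=\gamma\hfunc(S)-c\,a_k$ this rearranges to precisely $a_k\ge\frac1c(1-\e^{-c})\gamma\hfunc(S)$. Note that the factor $\tfrac1c$ here comes from the term $\tfrac{1-c}{\abs{S\setminus\setF_m}}(\gamma\hfunc(S)-b_m)$ in \Cref{lem:lower_bound_of_marginal_increase}, which does not decay when $b_m=0$; retaining only the curvature term would give merely $a_k\ge(1-\e^{-c})\gamma\hfunc(S)$.

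The remaining case, in which some step is a hit, is where I expect the difficulty to lie. The intuition is that a hit is harmless: its increment is charged both to $a_m$ and to $b_m$, so it ``wastes'' part of the budget that a non-hit would spend entirely on lowering $\psi$, but simultaneously it decreases the denominator $k-p_m$ for every later step, which strictly accelerates the decay of $\psi$ and overcompensates. I would make this precise by an exchange argument: among all runs of the double-greedy algorithm with a prescribed hit/non-hit pattern, the infimum of $a_k=\sum_i\varphi_i$ is attained when every greedy choice is tight, i.e.\ $\varphi_{m+1}=\psi_m/(k-p_m)$, and for such a run relabelling the last hit as a non-hit (which raises $k-p_m$ by one for every subsequent step) cannot increase $\sum_i\varphi_i$; iterating the relabelling removes all hits and reduces to the no-hit case above. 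The two ingredients that need genuine care are (a) that tight increments give the worst case, and (b) the inequality witnessing that the relabelling does not increase $\sum_i\varphi_i$, which unwinds to an elementary but fiddly monotonicity statement about products of the form $\prod_j(1-c/j)$.
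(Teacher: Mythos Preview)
Your approach is genuinely different from the paper's. The paper proves the lemma by a single forward induction on $m$, carrying the invariant
\[
\Bigl[\bigl(1-\tfrac{c}{k}\bigr)^{k}-1+c\Bigr]\gamma\hfunc(S)\ \ge\ \bigl(1-\tfrac{c}{k}\bigr)^{k-m}c\,d_m-\Bigl(1-\bigl(1-\tfrac{c}{k}\bigr)^{k-m}\Bigr)\frac{(1-c)k}{k_m}\,d_m^*,
\]
with $d_m=\gamma\hfunc(S)-a_m$, $d_m^*=\gamma\hfunc(S)-b_m$, $k_m=\abs{S\setminus\setF_m}$. This invariant is chosen so that it holds with equality at $m=0$ and collapses at $m=k$ to $c\,d_k\le\bigl[(1-c/k)^k-1+c\bigr]\gamma\hfunc(S)$, which is the claim. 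The induction step is a long but mechanical calculation using \Cref{lem:lower_bound_of_marginal_increase} and Bernoulli's inequality. Your exchange argument is more conceptual and avoids guessing this invariant, at the price of the two lemmas (a) and (b) you flag.

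Both gaps can in fact be closed. For (b), the inequality one needs after relabelling the last hit at step $m{+}1$ (with $N\coloneqq k-p_m$ and $L\coloneqq k-m-1$ non-hits after it) is
\[
\bigl(1-\tfrac{c}{N}\bigr)^{L+1}\ \ge\ \tfrac{1-c}{N}+\tfrac{N-1}{N}\bigl(1-\tfrac{c}{N-1}\bigr)^{L},
\]
which follows by induction on $L$: the induction step reduces to $(1-c/(N-1))^L\ge 1-c$, and since $L\le N-1$ (because $p_m\le m$) this is Bernoulli. The part you seem less sure of is (a), and here the missing idea is the bound $D_m\le\frac{k-m}{k-p_m}\le 1$, where $D_m$ is the tight value of $\sum_{i>m}\varphi_i/\psi_m$; this is a two-line backward induction (the hit case gives equality, the non-hit case loses $c(k-m-1)/(k-p_m)^2\ge0$). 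Once $D_m\le 1$, the coefficient $1-\alpha_{m-1}^*D_m$ of $\varphi_m$ in the partially-tight objective is nonnegative for either $\alpha_{m-1}^*\in\{c,1\}$, so tightening each $\varphi_m$ in turn (from the last backwards) never increases $\sum_i\varphi_i$, which is exactly (a). With these two pieces your argument is complete and gives the same bound as the paper.
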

    \begin{proof}
        For $0 \leq m \leq k$ let $k_m \coloneqq |S \setminus \setF_m|$ and let $\chi_i$ indicate whether $e^*_i$ is part of the solution $S$, i.e. $\chi_i=1$ if~$e^*_i \in S$ and~$\chi_i=0$ otherwise. 
        By definition, we have
        \begin{align} \label{eq:reference_releation_of_k}
            k_{m+1}=k_m-\chi_{m+1}.
        \end{align}
        We denote the difference between the sum of the increments in each step of the algorithm to the value $\gamma \hfunc(S)$, which we seek to bound, by $d_m \coloneqq \gamma \hfunc(S) - \sum_{i=1}^{m} \varphi_{i}$.
        Likewise, for the sum of only those increments belonging to elements contained in~$S$, we denote the difference by~$d^*_m \coloneqq \gamma \hfunc(S) - \sum_{i=1}^{m} \chi_i \varphi_{i}$.
        For~$d_m\leq0$, the statement of the lemma follows directly since we have $\frac{1}{c}\left(1-\e^{-c}\right)\gamma \leq 1$ for~$c,\gamma \in (0,1]$. Thus, we assume in the following that~$d_m>0$, which implies $d^*_m > 0$.

        We show by induction on $0 \leq m \leq k$ that
        \begin{align} \label{eq:induction_claim_submod_and_curvature_analysis}
            \bigg[ \left(1-\frac{c}{k} \right)^{k}&-1+c \bigg] \gamma \hfunc(S) \nonumber\\
            &\geq \left(1-\frac{c}{k} \right)^{k-m} c \, d_{m} - \left(1-\left(1- \frac{c}{k}\right)^{k-m}\right) \frac{(1-c)k}{k_{m}}d^*_{m}.
        \end{align}

        For $m=0$, $(\ref{eq:induction_claim_submod_and_curvature_analysis})$ holds with equality since by $d_0=d^*_0=\gamma \hfunc(S)$ and $k_0=\abs{S\setminus F_0}=k$ we have 
        \begin{align*}
            \left(1-\frac{c}{k} \right)^{k} c \, d_{0} &- \left(1-\left(1- \frac{c}{k}\right)^{k}\right) \frac{(1-c)k}{k_{0}}d^*_{0} \\
            &=\left[\left(1-\frac{c}{k} \right)^{k} c - \left(1-\left(1- \frac{c}{k}\right)^{k}\right) (1-c)\right] \gamma \hfunc(S)\\
            &= \left[ \left(1-\frac{c}{k} \right)^{k} -1 +c \right] \gamma \hfunc(S).
        \end{align*}
        In the following, we assume the statement holds for $m\geq 0$ to prove it for $m+1$.

        Inserting $d_{m+1}=d_{m}-\varphi_{m+1}$ and $d^*_{m+1}=d^*_m-\chi_{m+1}\varphi_{m+1}$ in the right hand side of $(\ref{eq:induction_claim_submod_and_curvature_analysis})$ for $m+1$ yields
        \begin{align} \label{eq:isolate_varphi_in_indunction}
            \bigg(1\ - & \ \frac{c}{k} \bigg)^{k-m-1} c\, d_{m+1} - \left(1-\left(1- \frac{c}{k}\right)^{k-m-1}\right) \frac{(1-c)k}{k_{m+1}}d^*_{m+1} \nonumber \\
            &= \left(1-\frac{c}{k} \right)^{k-m-1} c \, d_{m} - \left(1-\left(1- \frac{c}{k}\right)^{k-m-1}\right) \frac{(1-c)k}{k_{m}}d^*_{m} \nonumber \\
            &- \left[ \left(1-\frac{c}{k} \right)^{k-m-1} c - \bigg(1-\left(1- \frac{c}{k}\right)^{k-m-1}\bigg)  \frac{(1-c)k}{k_{m+1}} \chi_{m+1} \right] \varphi_{m+1}.
        \end{align}
        To apply \Cref{lem:lower_bound_of_marginal_increase}, we show non-negativity of the factor before $-\varphi_{m+1}$. 
        For this, we use Bernoulli's inequality which states that
        \begin{align} \label{eq:bernoullis_inequality}
            \left(1-x \right)^{n} \geq 1-nx \qquad \text{for }x \geq -1,n \in \N.
        \end{align}
        With $c\in[0,1]$ and $\frac{c}{k}\leq1$ we obtain non-negativity of the factor before $-\varphi_{m+1}$ given by
        \begin{eqnarray} \label{eq:non_negativity_of_factor_before_increment}
            \left(1-\frac{c}{k} \right)^{k-m-1} &c - \bigg(1-&\left(1- \frac{c}{k}\right)^{k-m-1}\bigg)  \frac{(1-c)k}{k_{m+1}} \chi_{m+1} \nonumber\\
            &\overset{\chi_{m+1}\leq1}{\geq}&  \left(1-\frac{c}{k} \right)^{k-m-1} \left(c + \frac{(1-c)k}{k_{m+1}} \right)- \frac{(1-c) k}{k_{m+1}} \nonumber\\
            &\overset{(\ref{eq:bernoullis_inequality})}{\geq}& \left(1-\frac{(k-m-1)c}{k}\right)\left(c + \frac{(1-c)k}{k_{m+1}} \right)-  \frac{(1-c)k}{k_{m+1}} \nonumber\\
            &=& c - \frac{(k-m-1)c^2}{k} -\frac{(k-m-1)c(1-c)}{k_{m+1}} \nonumber\\
            &\overset{k_{m+1} \leq k}{\geq}& c - \frac{k-m-1}{k_{m+1}}c \nonumber\\
            &\geq& 0,
        \end{eqnarray}
        where we use $k_{m+1} \geq k-m-1$ in the last step.
        By \Cref{lem:lower_bound_of_marginal_increase} and the definitions of~$k_m$,~$d_m$ and~$d^*_m$, we have
        \begin{align} \label{eq:bound_on_marginal_increase}
            \varphi_{m+1} \geq \frac{c}{k_m} d_m+ \frac{1-c}{k_m} d^*_m.
        \end{align}
        
        Applying this to the right hand side of $(\ref{eq:induction_claim_submod_and_curvature_analysis})$ for $m+1$ yields 
        \vspace{-0.5em}
        \begin{eqnarray*} \label{eq:induction_submod_claim_1}
            \left(1 -  \frac{c}{k} \right)^{k-m-1} c\, d_{m+1} &-& \left(1-\left(1- \frac{c}{k}\right)^{k-m-1}\right) \frac{(1-c)k}{k_{m+1}}d^*_{m+1} \nonumber 
        \end{eqnarray*}
        \vspace{-1.4em}
        \begin{eqnarray*}
            &\overset{(\ref{eq:isolate_varphi_in_indunction})}{=}& \left(1-\frac{c}{k} \right)^{k-m-1} c \, d_{m} - \left(1-\left(1- \frac{c}{k}\right)^{k-m-1}\right) \frac{(1-c)k}{k_{m+1}}d^*_{m} \\[-0.02em]
            &&- \left[ \left(1-\frac{c}{k} \right)^{k-m-1} c- \bigg(1-\left(1- \frac{c}{k}\right)^{k-m-1}\bigg)  \frac{(1-c)k}{k_{m+1}} \chi_{m+1} \right] \varphi_{m+1}\\
            &\overset{(\ref{eq:non_negativity_of_factor_before_increment}),(\ref{eq:bound_on_marginal_increase})}{\leq}& \left(1-\frac{c}{k} \right)^{k-m-1} c \, d_{m} - \left(1-\left(1- \frac{c}{k}\right)^{k-m-1}\right) \frac{(1-c)k}{k_{m+1}}d^*_{m} \\[-0.02em]
            &&- \left[ \left(1-\frac{c}{k} \right)^{k-m-1} c\right] \left(\frac{c}{k_m} d_m+ \frac{1-c}{k_m} d^*_m \right) \\[-0.02em]
            &&+ \left[\bigg(1-\left(1- \frac{c}{k}\right)^{k-m-1}\bigg)  \frac{(1-c)k}{k_{m+1}} \chi_{m+1} \right] \left(\frac{c}{k_m} d_m+ \frac{1-c}{k_m} d^*_m \right) \nonumber\\
            &=& \left(1-\frac{c}{k} \right)^{k-m-1} \left(1- \frac{c}{k_m}\right) c\, d_{m} \nonumber\\[-0.02em]
            &&- \left(1-\left(1- \frac{c}{k}\right)^{k-m-1} \right) \frac{(1-c)k}{k_{m+1}} \left( 1-\chi_{m+1}\frac{1-c}{k_m}\right) d^*_{m} \nonumber\\[-0.02em]
            &&-\left(1-\frac{c}{k} \right)^{k-m-1} \frac{c(1-c)}{k_m} d^*_{m} \\[-0.02em]
            &&+ \left(1-\left(1- \frac{c}{k}\right)^{k-m-1} \right)\frac{(1-c)k}{k_{m+1}} \chi_{m+1} \frac{c}{k_m}d_m \nonumber\\
            &\overset{k \geq k_m}{\leq} & \left(1-\frac{c}{k} \right)^{k-m} c\, d_{m} \\[-0.02em]
            &&- \left(1-\left(1- \frac{c}{k}\right)^{k-m-1} \right) \frac{(1-c)k}{k_{m+1}} \left( \frac{k_m-\chi_{m+1}+c\chi_{m+1}}{k_m}\right)d^*_{m} \nonumber\\[-0.02em]
            &&-\left(1-\frac{c}{k} \right)^{k-m-1} \frac{c(1-c)}{k_m} d^*_{m} \\[-0.02em]
            &&+ \left(1-\left(1- \frac{c}{k}\right)^{k-m-1} \right)\frac{(1-c)k}{k_{m+1}} \chi_{m+1} \frac{c}{k_m} d_m \nonumber\\
            &\overset{(\ref{eq:reference_releation_of_k})}{=} & \left(1-\frac{c}{k} \right)^{k-m} c\, d_{m} \\[-0.02em]
            &&- \left(1-\left(1- \frac{c}{k}\right)^{k-m-1} \right) \frac{k}{k_{m+1}} \left( \frac{k_{m+1}+c\chi_{m+1}}{k_m}\right) (1-c)d^*_{m} \nonumber\\[-0.02em]
            &&-\left(1-\frac{c}{k} \right)^{k-m-1} \frac{c}{k_m} (1-c)d^*_{m} \\[-0.02em]
            &&+ \left(1-\left(1- \frac{c}{k}\right)^{k-m-1} \right)\frac{k}{k_{m+1}}\frac{c\chi_{m+1}}{k_{m}} (1-c)d_m \nonumber\\
            &=& \left(1-\frac{c}{k} \right)^{k-m}  c\, d_{m} \\[-0.02em]
            &&- \left[\left(1-\left(1- \frac{c}{k}\right)^{k-m-1} \right) \frac{k}{k_m} + \left(1-\frac{c}{k} \right)^{k-m-1} \frac{c}{k_m}\right] (1-c)d^*_{m} \nonumber\\[-0.02em]
            &&+ \left(1-\left(1- \frac{c}{k}\right)^{k-m-1} \right)\frac{k}{k_{m+1}}\frac{c\chi_{m+1}}{k_{m}} (1-c) (d_m-d^*_m) \nonumber\\
            &\overset{d_m \leq d^*_m}{\leq}& \left(1-\frac{c}{k} \right)^{k-m}  c\, d_{m} \\[-0.02em]
            &&- \left[\left(1-\left(1- \frac{c}{k}\right)^{k-m-1} \right) \frac{k}{k_m} + \left(1-\frac{c}{k} \right)^{k-m-1} \frac{c}{k_m}\right] (1-c)d^*_{m}  
        \end{eqnarray*}
        \begin{eqnarray*}
            &=& \left(1-\frac{c}{k} \right)^{k-m} c\, d_{m} - \left[1-\left(1- \frac{c}{k}\right)^{k-m-1} \left(1- \frac{c}{k}\right)\right] \frac{(1-c)k}{k_m} d^*_{m} \nonumber\\
            &=& \left(1-\frac{c}{k} \right)^{k-m} c\, d_{m} - \left(1-\left(1- \frac{c}{k}\right)^{k-m} \right) \frac{(1-c)k}{k_m} d^*_{m} \nonumber \\
            &\overset{(\ref{eq:induction_claim_submod_and_curvature_analysis})}{\leq}& 
            \left[ \left(1-\frac{c}{k} \right)^{k} - 1 + c \right] \gamma \hfunc(S),
        \end{eqnarray*}
        where we used the induction hypothesis for $m$.
        
        For $m=k$, $(\ref{eq:induction_claim_submod_and_curvature_analysis})$ yields
        \begin{align} \label{eq:result_of_induction_submod_claim}
            \left[ \left(1-\frac{c}{k} \right)^{k} - 1 + c \right] \gamma \hfunc(S)
            \geq c \left(\gamma \hfunc(S) -  \sum_{i=1}^{k} \varphi_{i} \right).
        \end{align}
        Rearranging we derive, using $\left(1+\frac{(-c)}{k}\right)^k \leq \e^{-c}$ for $k \geq 1$, $\abs{c}\leq k$, that
        \begin{align*}
            c\sum_{i=1}^{k} \varphi_{i} \overset{(\ref{eq:result_of_induction_submod_claim})}{\geq} \left(1-\left(1-\frac{c}{k}\right)^{k}\right) \gamma \hfunc(S) \geq \left(1-\frac{1}{\e^c}\right) \gamma \hfunc(S).\tag*{\raisebox{-2ex}{\qedhere}}
        \end{align*}
    \end{proof}
    
    With this at hand, we are ready to prove our main result.
    
    \mainupperbound*
    
    \begin{proof}
        Let $k \leq n$ and $S \subseteq E$ with $\ell \coloneqq\abs{\setH_{k}}=\abs{S}$.
        By \Cref{prop:symmetry_handling}, it suffices to show that
        \begin{align} \label{eq:rho_competitiveness_sub_mod_curv}
            \gamma^{-1}\left(1+ c\frac{\e^c}{\e^c-1} \right) \ffunc(\setH_{k}) \geq \ffunc(S).
        \end{align}
        Since $\varphi_i \geq 0$ by monotonicity of $\gfunc$ and $\hfunc$, and since $k \geq \ell$, we obtain
        \begin{align} \label{eq:double_greedy_bound_on_f_submod_ratio_and_curvature}
            \ffunc(\setH_{k})
            \overset{\text{Lem. }\ref{lem:gen_lower_bound_for_double_greedy}}{\geq} \sum_{i=1}^{k} \varphi_{i} 
            \geq \sum_{i=1}^{\ell} \varphi_{i}
            \overset{\text{Lem. }\ref{lem:double_greedy_bound_on_opt_submod_ratio_and_curvature}}{\geq} \frac{1}{c}\left(1-\frac{1}{\e^c}\right) \gamma \hfunc(S).
        \end{align}
        By monotonicity of $\gfunc$, we have
        \begin{align} \label{eq:bound_g_for_submod_ratio_curv}
            \gfunc(S^{\comp}) \leq \gfunc(E) 
            = \gfunc(\setH_{k} \cup \setH^{\comp}_{k}) 
            = \sum_{i:e^*_i \in \setH_{k}} \gfunc(e^*_i \mid \setH_{i-1} \cup \setH_{k}^{\comp}) 
            + \gfunc(\setH^{\comp}_{k})
        \end{align}
        Further, using $\setG_{i-1} \subseteq \setH_{i-1} \cup \setH_{k}^{\comp}$ and the definition of the algorithm, we derive
        \begin{align} \label{eq:bound_g_value_of_opt}
            \gfunc(S^{\comp})
            & \overset{(\ref{eq:bound_g_for_submod_ratio_curv})}{\leq} \sum_{i:e^*_i \in \setH_{k}} \gfunc(e^*_i \mid \setH_{i-1} \cup \setH_{k}^{\comp}) + \gfunc(\setH^{\comp}_{k}) \nonumber\\
            &\overset{(\ref{eq:def_gen_sub_mod_ratio})}{\leq} \gamma^{-1} \sum_{i:e^*_i \in \setH_{k}} \gfunc(e^*_i \mid \setG_{i-1}) + \gfunc(\setH^{\comp}_{k}) \nonumber\\
            &\overset{\text{alg.}}{\leq} \gamma^{-1} \sum_{i:e^*_i \in \setH_{k}} \hfunc(e^*_i \mid \setH_{i-1}) + \gfunc(\setH^{\comp}_{k}) \nonumber\\
            &=  \gamma^{-1} \hfunc(\setH_{k}) + \gfunc(\setH^{\comp}_{k}) \nonumber\\
            & \overset{\gamma \leq 1}{\leq} \gamma^{-1} \ffunc(\setH_{k}).
        \end{align}
        Together with (\ref{eq:double_greedy_bound_on_f_submod_ratio_and_curvature}), we obtain the desired approximation $(\ref{eq:rho_competitiveness_sub_mod_curv})$
        \begin{align*}
            \ffunc(S) = \hfunc(S) +\gfunc(S^{\comp}) 
            &\overset{(\ref{eq:double_greedy_bound_on_f_submod_ratio_and_curvature}),(\ref{eq:bound_g_value_of_opt})}{\leq} 
            \gamma^{-1}c\left(\frac{\e^c}{\e^c-1}\right)\ffunc(\setH_{k}) + \gamma^{-1}\ffunc(\setH_{k}) \\
            &\overset{\phantom{(6),(8)}}{=} \gamma^{-1}\left(1+ c\frac{\e^c}{\e^c-1} \right)\ffunc(\setH_{k}).
            \tag*{\raisebox{-2ex}{\qedhere}}
        \end{align*}
    \end{proof}

      We now show the lower bound on the competitive ratio for the case that $\gfunc$ and $\hfunc$ are submodular with curvature $c$. This proves the first part of \Cref{thm:lower}.
    
    \begin{restatable}{proposition}{theoremlowerboundcurvature}
        \label{thm:lower_bound_submodular}
        For~$\gfunc, \hfunc \colon 2^E \to \R_{\geq 0}$ monotone submodular functions with curvature $c \in [0,1]$ 
        no algorithm has a better competitive ratio than $1+\frac{c}{2-c}$.
    \end{restatable}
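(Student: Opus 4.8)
\subsection*{Proof plan}

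The plan is to show that for every $\varepsilon>0$ there is an instance with monotone submodular functions of curvature $c$ on which \emph{every} ordering has competitive ratio at least $\rho_\varepsilon:=1+\tfrac{c}{2-c}-\varepsilon$; an algorithm of competitive ratio strictly below $1+\tfrac{c}{2-c}$ would contradict this for sufficiently small $\varepsilon$. For $c=0$ the bound is trivial, so assume $c\in(0,1]$.

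For the construction I would take a ground set $E$ split into two groups and define each of $\gfunc$ and $\hfunc$ as the sum of a modular function with coefficient $1-c$ and a threshold/coverage-type function, scaled so that every element's marginal contracts by exactly a factor $1-c$ once all of $E$ is present; by the characterization~\eqref{eq:def_curvature_submodular} this yields monotone submodular functions of curvature exactly~$c$. I would choose $\gfunc$ and $\hfunc$ to be mirror images of one another under exchanging $\gfunc\leftrightarrow\hfunc$ and relabelling $E$, so that the symmetry in \Cref{prop:symmetry_handling} lets me argue about a single critical cardinality $k^{\star}$. The feature to engineer in $\ffunc(S)=\hfunc(S)+\gfunc(S^{\comp})$ is a \emph{conflict} among the optimal solutions of different sizes: for cardinalities $k<k^{\star}$ the maximizer of $\ffunc$ over size-$k$ sets is (essentially) unique and these maximizers form a chain of nested sets $O_1\subsetneq\dots\subsetneq O_{k^{\star}-1}$, whereas at cardinality $k^{\star}$ the maximizer $O^{\star}$ contains no element of $O_{k^{\star}-1}$; moreover the instance should be rigid, in that every size-$k$ set other than the designated maximizer has $\ffunc$-value at most $\OPT_k/\rho_\varepsilon$.

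Granting such an instance, the lower bound is short. Let $\pi$ be any ordering with prefixes $S_1\subseteq\dots\subseteq S_n$ and suppose it is better than $\rho_\varepsilon$-competitive, i.e.\ $\ffunc(S_k)>\OPT_k/\rho_\varepsilon$ for all $k$. Rigidity then forces $S_k=O_k$ for every $k<k^{\star}$, in particular $S_{k^{\star}-1}=O_{k^{\star}-1}$; hence $S_{k^{\star}}=O_{k^{\star}-1}\cup\{e\}$ for some element $e$, which differs from $O^{\star}$ because $O^{\star}\cap O_{k^{\star}-1}=\emptyset$, so $\ffunc(S_{k^{\star}})\le\OPT_{k^{\star}}/\rho_\varepsilon$ --- a contradiction. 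Thus no ordering beats $\rho_\varepsilon$. What remains is to calibrate the sizes of the two groups, the modular coefficients, and the threshold so that the forced ratio is exactly $1+\tfrac{c}{2-c}$ as $\varepsilon\to0$; specialized to $c=1$ this reproduces the value~$2$ recorded for submodular functions in \Cref{tab:overview}.

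The step I expect to be the main obstacle is reconciling a strong conflict with submodularity. The inequalities $\hfunc(A\cup B)\le\hfunc(A)+\hfunc(B)$ together with their local refinements (and the analogous ones for $\gfunc$) prevent the maximizers at small cardinality and at $k^{\star}$ from being simultaneously large and (nearly) disjoint unless each group of $E$ is large; balancing this against the contraction factor $1-c$ enforced by the curvature is exactly what produces the bound $1+\tfrac{c}{2-c}$ rather than something larger, and in particular explains why the construction must use many elements when $c$ is close to~$1$. Once the explicit $\gfunc,\hfunc$ are fixed, verifying monotonicity, submodularity, and the curvature bound, and evaluating $\OPT_k$ for all $k$, are routine but lengthy computations.
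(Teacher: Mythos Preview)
Your proposal outlines a plausible template but does not actually prove the statement: the entire content of a lower-bound argument is the explicit instance, and you have deferred that to ``routine but lengthy computations'' without writing down $\gfunc$ and $\hfunc$. Whether a pair of monotone submodular functions of curvature exactly~$c$ can realize the rigid nested-chain-plus-disjoint-optimum structure you describe, with the forced ratio calibrated to $1+\tfrac{c}{2-c}$, is precisely the question; asserting that such functions exist is not a proof. The obstacle you anticipate in your last paragraph is real, and you have not shown how to overcome it.

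More importantly, the paper's construction is far simpler than your template and sidesteps that obstacle entirely. There is no need for two groups, a chain of optima, or rigidity at every small cardinality. The paper takes $|E|=n$, distinguishes a single element $e^*$, and sets $\gfunc=\hfunc$ with
\[
\hfunc(S)=\begin{cases} n-1+(1-c)\,|S\setminus\{e^*\}| & \text{if } e^*\in S,\\ |S| & \text{otherwise.}\end{cases}
\]
Submodularity and curvature~$c$ are immediate from~\eqref{eq:def_curvature_submodular}. Only two cardinalities matter: at size~$1$ the unique optimum is $\{e^*\}$ and at size $n-1$ it is $E\setminus\{e^*\}$, each of value $2(n-1)$; every other set of either size has value $(2-c)(n-1)+c$. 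Since $e^*$ cannot be simultaneously the first and the last element of an ordering, every ordering is suboptimal at one of these two sizes, yielding the ratio $\tfrac{2(n-1)}{(2-c)(n-1)+c}\to\tfrac{2}{2-c}=1+\tfrac{c}{2-c}$ as $n\to\infty$.

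Note that the paper's conflict is not between a chain and a disjoint set, but simply between positions $1$ and $n$ of $\pi$: one element cannot occupy both. Your mechanism of forcing $S_k=O_k$ for all $k<k^\star$ via rigidity is logically sound but demands far more of the instance, and the tension with submodularity you correctly identify is exactly what the paper's one-element construction avoids.
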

    \begin{proof}
        To show the lower bound of $1+\frac{c}{2-c}$, consider the following instance. Let $n\coloneqq\abs{E}$, $c\in[0,1]$ and fix an element $e^* \in E$. We define 
        \[\
        \hfunc(S) \coloneqq \gfunc(S) \coloneqq 
        \begin{cases}
            n-1 + (1-c) \abs{S\setminus\{e^*\}} & \text{if } e^* \in S,\\
            \abs{S} & \text{otherwise.}
        \end{cases}
        \]
        The marginal increase of the element $e^*$ is $\hfunc(e^* \mid S)=n-1-c\abs{S}$ if $e^* \notin S$ and~$\hfunc(e^* \mid S)=0$ otherwise, implying $\hfunc(e^* \mid S) \geq \hfunc(e^* \mid T)$ for $S \subseteq T\subseteq E$.
        The marginal increase of every other element~$e \neq e^*$ is $\hfunc(e \mid S)=1$ if $e^* \notin S$ and $\hfunc(e \mid S)=1-c$ otherwise, implying $\hfunc(e \mid S) \geq \hfunc(e \mid T)$ for all~$S \subseteq T\subseteq E$.
        Together, we have that $\hfunc$ and $\gfunc$ are submodular.
        Thus, their curvature is given by $(\ref{eq:def_curvature_submodular})$ as
        \[
        1-\min_{e \in E} \frac{\hfunc(e\mid E \setminus \{e\})}{\hfunc(\{e\})}= 1- \min \left\{ \frac{(1-c)(n-1)}{n-1}, \frac{1-c}{1} \right\} =c.
        \]
        We now show that no solution $\pi$ to this instance is better than $\smash{\frac{2(n-1)}{(2-c)(n-1)+c}}$-competitive.
        For $n\rightarrow \infty$ we then obtain the desired lower bound of $\frac{2}{2-c}=1+\frac{c}{2-c}$.
        We observe that the optimum value of $\ffunc(S)=\hfunc(S)+\gfunc(S^\comp)$ with $\abs{S}=1$ is given by
        \[
            \ffunc(\{e^*\})=\hfunc(\{e^*\})+\gfunc(E\setminus\{e^*\})=n-1+n-1=2(n-1),
        \]
        since, for $e\neq e^*$, we have
        \begin{align*}
            \ffunc(\{e\})=\hfunc(\{e\})+\gfunc(E\setminus\{e\})&=1+n-1+(1-c)(n-2) \\
            &=2n-cn-2+2c=(2-c)(n-1)+c.
        \end{align*}
        Similarly, for $\abs{S}=n-1$ the optimum value is 
        \[
            \ffunc(E \setminus \{e^*\})=\hfunc(E \setminus\{e^*\})+\gfunc(\{e^*\})=n-1+n-1=2(n-1),
        \]
        since, for $e\neq e^*$, we have
        \[
            \ffunc(E \setminus \{e\})=\hfunc(E \setminus \{e\})+\gfunc(\{e\})=n-1+(1-c)(n-2)+1
            =(2-c)(n-1)+c.
        \]
        Fixing $\pi$, we distinguish two cases. 
        If the first element in $\pi$ is $e \neq e^*$, we consider the resulting solution of size $1$ which has value $\ffunc(\{e\})=(2-c)(n-1)+c$.
        Therefore, in this case $\pi$ has competitive ratio at least $\frac{2(n-1)}{(2-c)(n-1)+c}$.
        If the last element in $\pi$ is $e \neq e^*$, we consider the solution of size $n-1$ which has value $\ffunc(E \setminus \{e\})=(2-c)(n-1)+c$.
        Thus, also in this case the competitive ratio is at least $\frac{2(n-1)}{(2-c)(n-1)+c}$.
    \end{proof}
    
    We also give a lower bound in terms of the generic submodularity ration; this proves the second part of \Cref{thm:lower}.
    
    \begin{restatable}{proposition}{theoremlowergenericsubmodularityratio}
        \label{thm:lower_bound_generic_submodularity_ratio}
        For monotone functions~$\gfunc, \hfunc \colon 2^E \to \R_{\geq 0}$ with generic submodularity ratio~$\gamma \in (0,1]$
        no algorithm has a better competitive ratio than $\frac{2}{3}+\frac{1}{3\gamma}$.
    \end{restatable}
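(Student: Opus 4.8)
The plan is to exhibit, for every $\gamma \in (0,1]$, a family of instances $\gfunc,\hfunc\colon 2^E\to\R_{\geq 0}$ parametrized by $n=\abs{E}$ on which every incremental--decremental ordering has competitive ratio at least $\frac{2}{3}+\frac{1}{3\gamma}-o_n(1)$, in the spirit of \Cref{thm:lower_bound_submodular}. The instance should single out a distinguished element $e^*\in E$ so that $\{e^*\}$ is the unique size-$1$ maximizer of $\ffunc$ and $E\setminus\{e^*\}$ is the unique size-$(n-1)$ maximizer of $\ffunc$ (more generally, two appropriately chosen time steps should create a placement conflict for $e^*$). Then any ordering $\pi$ either fails to start with $e^*$ or fails to end with it, and in the first case we read off a lower bound on the ratio at $k=1$, in the second case a lower bound on the ratio at $k=n-1$. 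Taking $n\to\infty$ should make the lower-order terms vanish and yield exactly $\frac{2}{3}+\frac{1}{3\gamma}$.

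In contrast to the submodular case, I do not expect the symmetric choice $\gfunc=\hfunc$ to suffice: for $\ffunc(S)=\hfunc(S)+\hfunc(E\setminus S)$ the superadditivity forced by a small $\gamma$ makes the ordinary elements look at least as good as $e^*$ both as size-$1$ and as size-$(n-1)$ solutions, and such symmetric instances seem unable to force a ratio above $2$, whereas $\frac{2}{3}+\frac{1}{3\gamma}>2$ once $\gamma<\frac{1}{4}$. Hence I would take $\gfunc$ and $\hfunc$ asymmetric, realized as (piecewise-)modular functions whose marginals depend only on whether $e^*$ is already present, tuned so that the singleton value of $e^*$ is large while each ordinary element becomes valuable only once $e^*$ has been added, its marginal jumping up by the factor $1/\gamma$ precisely then. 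Concretely, I would (i) define $\gfunc,\hfunc$ on $E=\{e^*\}\cup\{z_1,\dots,z_{n-1}\}$ in this way; (ii) verify monotonicity and check that the minimum in \Cref{def:generic_submodularity_ratio} equals \emph{exactly} $\gamma$ for both functions, the binding nested pair being ``$e^*$ absent'' versus ``$e^*$ present'' in the denominator; (iii) evaluate $\OPT_1=\ffunc(\{e^*\})$, $\OPT_{n-1}=\ffunc(E\setminus\{e^*\})$, and the values $\ffunc(\{z_i\})$, $\ffunc(E\setminus\{z_i\})$ obtained when $e^*$ is misplaced; (iv) carry out the case distinction on the position of $e^*$ in $\pi$ and conclude; (v) let $n\to\infty$.

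The main obstacle is pinning down the exact instance: one must choose $\gfunc$ and $\hfunc$ so that simultaneously (a) both are monotone, (b) the generic submodularity ratio is exactly $\gamma$ and not larger, and (c) the arithmetic of $\OPT_1/\ffunc(\{z_i\})$ --- and of the symmetric quantity at step $n-1$ --- converges to precisely $\frac{2}{3}+\frac{1}{3\gamma}$ rather than to some other function of $\gamma$ (my preliminary small-ground-set attempts produced weaker bounds such as $\frac{2(1+\gamma)}{1+2\gamma}$ and $1/\sqrt{\gamma}$, which confirms that the construction must scale with $n$ and be chosen carefully). Balancing the large singleton value of $e^*$ against the $1/\gamma$ boost enjoyed by the ordinary elements, while keeping the two critical ratios equal so that the adversary loses nothing by symmetrizing the algorithm's options, is the delicate part. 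Once the instance is fixed, steps (i)--(v) are routine computations.
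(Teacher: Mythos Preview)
Your plan has the right overall shape (force a placement conflict for a distinguished element and read off the ratio at two complementary cardinalities), but the key construction is missing, and two of the intuitions guiding your search are wrong. First, the symmetric choice $\gfunc=\hfunc$ \emph{does} suffice. Second, no limit $n\to\infty$ is needed; a three-element ground set already gives the exact bound. The reason your small-ground-set attempts underperformed is that you were trying to make the distinguished element have a large \emph{singleton} value in~$\hfunc$. The right move is the opposite: exploit the superadditivity allowed by small~$\gamma$ to make one specific \emph{pair} have a large value, so that the distinguished element is special only through its complement.

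Concretely, the paper takes $E=\{a,b,c\}$, $\gfunc=\hfunc$, all singletons equal to~$1$, $\hfunc(\{a,b\})=\hfunc(\{b,c\})=2$, but $\hfunc(\{a,c\})=1+\gamma^{-1}$ and $\hfunc(E)=2+\gamma^{-1}$. The only marginal that jumps is $\hfunc(a\mid\{c\})=\hfunc(c\mid\{a\})=\gamma^{-1}$, so the generic submodularity ratio is exactly~$\gamma$. Now $\ffunc(\{b\})=1+(1+\gamma^{-1})=2+\gamma^{-1}=\OPT_1$ and $\ffunc(\{a,c\})=(1+\gamma^{-1})+1=2+\gamma^{-1}=\OPT_2$, while $\ffunc(\{a\})=\ffunc(\{c\})=1+2=3$ and $\ffunc(\{a,b\})=\ffunc(\{b,c\})=2+1=3$. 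Since $b$ cannot be both first and last in any ordering of three elements, some ordering-prefix of size~$1$ or~$2$ has value~$3$ against an optimum of $2+\gamma^{-1}$, giving ratio $(2+\gamma^{-1})/3=\tfrac{2}{3}+\tfrac{1}{3\gamma}$ exactly, with no asymptotics. Your claim that symmetric instances cannot exceed ratio~$2$ is simply false here: as $\gamma\to 0$ this ratio is unbounded.
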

    \begin{proof}
        To show the lower bound of $\frac{2}{3}+\frac{1}{3}\gamma^{-1}$, consider the following instance. Let $E=\{a,b,c\}$, $\gfunc, \hfunc \colon 2^E\rightarrow \R_{\geq0}$ and $\gfunc \coloneqq \hfunc$ with 
        \begin{align*}
            &\hfunc(\emptyset)=0, \qquad \hfunc(\{a\})=\hfunc(\{b\})=\hfunc(\{c\})=1, \qquad
            \hfunc(\{a,b\})=\hfunc(\{b,c\})=2,\\ 
            &\hfunc(\{a,c\})=1+\gamma^{-1} \qquad \text{and} \qquad \hfunc(\{a,b,c\})=2+\gamma^{-1}.
        \end{align*}
        By definition of $\hfunc$, we have $\hfunc(b \mid S)=1$ for all $S \subseteq \{a,c\}$. For $e, e' \in \{a,c\}$, $e \neq e'$ and~$S\subseteq E \setminus \{e\}$ we have $\hfunc(e \mid S)= \gamma^{-1}$ if $S=\{e'\}$ or $S=\{b,e'\}$ and $\hfunc(e \mid S)=1$ else.
        Thus, for $e \in \{a,c\}$ we have
        \[
            \min_{A,B \subseteq E: \hfunc(e \mid A \cup B)>0 } \frac{\hfunc(e \mid A)}{\hfunc(e \mid A \cup B)}= \min \left\{ \frac{1}{1}, \frac{\gamma^{-1}}{\gamma^{-1}}, \frac{1}{\gamma^{-1}} \right\}=\gamma
        \]
        and
        \[
            \min_{A,B \subseteq E: \hfunc(b \mid A \cup B)>0 } \frac{\hfunc(b \mid A)}{\hfunc(b \mid A \cup B)}= \min \left\{ \frac{1}{1} \right\}=1
        \]
        which yields that the generic submodularity ratio of $\gfunc$ and $\hfunc$ is $\gamma$.
        
        We show that no solution $\pi$ is better than $\frac{2}{3}+\frac{1}{3}\gamma^{-1}$-competitive.
        The optimum value of size~$1$ is $\ffunc(\{b\})=\hfunc(\{b\})+\gfunc(\{a,c\})=1+1+\gamma^{-1}=2+\gamma^{-1}$.
        The optimum value of size~$2$ is $\ffunc(\{a,c\})=\hfunc(\{a,c\})+\gfunc(\{b\})=1+\gamma^{-1}+1=2+\gamma^{-1}$.
        If the first element in $\pi$ is $e\neq b$, with $e' \in \{a,c\}, e'\neq e$ we compute the value of the solution of size $1$, which is $\ffunc(\{e\})=\hfunc(\{e\})+\gfunc(\{b,e'\})=1+2=3$.
        Likewise, if the last element in $\pi$ is~$e\neq b$, with $e' \in \{a,c\}, e'\neq e$, we compute the value of the solution of size $2$, which is~$\ffunc(\{b,e'\})=\hfunc(\{b,e'\})+\gfunc(\{e\})=2+1=3$.
        This yields in both cases that the competitive ratio is at least $\frac{2+\gamma^{-1}}{3}=\frac{2}{3}+\frac{1}{3}\gamma^{-1}$.
    \end{proof}

    \section{Gross Substitute Functions}\label{sub:gross_substitute}

    The gross substitute property originates from the economics literature \cite{kelso1982job}. 
    In this context, given a value function $f\colon 2^E \rightarrow \R_{\geq0}$ and a modular price function $p \colon 2^E \rightarrow \R$, one is interested in sets~$S$ maximizing the utility $f(S)-p(S)$.
    A function $f$ is gross substitute if, for every price function and for every subset~$S$ maximizing the utility before increasing some of those prices, all elements of~$S$ with non-increased price are still part of a utility-maximizing set after the prices are increased.
    \begin{definition}[Kelso and Crawford~\cite{kelso1982job}]
        A function $f \colon 2^E \rightarrow \R$ is \emph{gross substitute} if for all modular price functions $p, p' \colon 2^E \rightarrow \R$ with $p \leq p'$ and $S \in \argmax_{S\subseteq E}\{f(S) - p(S) \}$ there is~$S' \in \argmax_{S\subseteq E}\{f(S) - p'(S) \}$ with $S \cap \{e \mid p(\{e\})=p'(\{e\})\} \subseteq S'$.
    \end{definition}

    Paes Leme~\cite{leme2017gross} surveys several equivalent ways to characterize gross substitute functions. 
    We will work with the following characterization.
    \begin{lemma}[Paes~Leme~\cite{leme2017gross}] \label{lem:gross_substitute}
        A monotone set function $f \colon 2^E \rightarrow \R_{\geq0}$ is \emph{gross substitute} if and only if it is submodular and for all $A, B \subseteq E$ with $A \cap B=\emptyset$, $|B|\geq2$ and all $b \in B$ there is~$b' \in B \setminus \{b\} $ with
        \[
            f(b \mid A) + f(B\setminus \{b\} \mid A) \leq  f(b' \mid A) + f(B \setminus \{b'\} \mid A).
        \]
    \end{lemma}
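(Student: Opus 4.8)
The plan is to recognize this as one of the classical equivalent characterizations of gross substitute set functions and to prove it by passing through the \emph{exchange axiom} (M${}^\natural$-concavity):
\[
 f(X) + f(Y) \;\le\; \max\Bigl\{\, f(X \setminus \{i\}) + f(Y \cup \{i\}),\; \max_{j \in Y \setminus X} \bigl( f((X\setminus\{i\})\cup\{j\}) + f((Y\cup\{i\})\setminus\{j\}) \bigr) \,\Bigr\}
\]
required for all $X,Y \subseteq E$ and all $i \in X \setminus Y$. It is known~\cite{leme2017gross} that a monotone set function is gross substitute if and only if it satisfies this exchange axiom, so it suffices to prove that the exchange axiom is equivalent to being ``submodular and locally exchangeable'' in the sense of the lemma. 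I would prove the two implications separately, in both cases translating everything into the marginal notation $f(\cdot \mid A)$.

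For the implication from the exchange axiom to the lemma, submodularity is the special case $Y \subseteq X$ with $|X\setminus Y| = 1$ (and is also a standard consequence of M${}^\natural$-concavity). For the local condition, fix disjoint $A,B$ with $|B|\ge 2$ and $b \in B$ and apply the exchange axiom to $X \coloneqq A \cup \{b\}$, $Y \coloneqq A \cup (B\setminus\{b\})$, and $i \coloneqq b \in X \setminus Y$. Here $X\setminus\{b\} = A$, $Y\cup\{b\} = A\cup B$, $Y\setminus X = B\setminus\{b\}$, and for $j = b' \in B\setminus\{b\}$ one has $(X\setminus\{b\})\cup\{b'\} = A\cup\{b'\}$ and $(Y\cup\{b\})\setminus\{b'\} = A\cup(B\setminus\{b'\})$; subtracting the common summand $2f(A)$ turns the exchange axiom into
\[
 f(b\mid A) + f(B\setminus\{b\}\mid A) \;\le\; \max\Bigl\{\, f(B\mid A),\; \max_{b' \in B\setminus\{b\}}\bigl( f(b'\mid A) + f(B\setminus\{b'\}\mid A)\bigr)\,\Bigr\}.
\]
Since $|B|\ge 2$, fix any $b'_0 \in B\setminus\{b\}$; submodularity gives $f(B\mid A) = f(b'_0\mid A) + f(B\setminus\{b'_0\}\mid A\cup\{b'_0\}) \le f(b'_0\mid A) + f(B\setminus\{b'_0\}\mid A)$, so the first term in the maximum never strictly exceeds the second, and the $b'$ attaining the maximum is the witness required by the lemma.

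For the converse, assume $f$ is submodular and satisfies the local condition; the goal is to recover the full exchange axiom. The base case $|X\setminus Y| = 1$ reduces cleanly: if $Y \subseteq X$ the axiom is exactly submodularity, and otherwise, setting $A \coloneqq X\cap Y$, $B \coloneqq (Y\setminus X)\cup\{i\}$ and $b \coloneqq i$ gives disjoint sets $A,B$ with $|B|\ge 2$ and $i \in B$, and unwinding the identities $X = A\cup\{i\}$, $Y = A\cup(B\setminus\{i\})$ shows that the $b' \in B\setminus\{i\} = Y\setminus X$ supplied by the local condition is exactly a $j \in Y\setminus X$ witnessing the exchange axiom for $(X,Y,i)$. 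The remaining task is the inductive step for $|X\setminus Y|\ge 2$: amplifying the one-element exchange to arbitrary symmetric differences. I would handle it by induction on $|X\setminus Y|$, choosing $i \in X\setminus Y$, using the base case to perform one exchange, and then applying the induction hypothesis to the resulting pair of sets, with submodularity used throughout to bound the marginal values on the larger sets by those appearing in the local swaps; alternatively, one can invoke the known local-to-global characterization of M${}^\natural$-concave set functions (see~\cite{leme2017gross}). This amplification is the main obstacle — it is where the combinatorial content of the exchange axiom genuinely enters — whereas the forward direction and the base case of the converse amount to bookkeeping with marginals.
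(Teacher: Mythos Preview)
The paper does not prove this lemma; it is stated with attribution to Paes~Leme's survey and used as a black box, so there is no in-paper proof to compare your proposal against.

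That said, your route through the M${}^\natural$-exchange axiom is the standard one and is essentially how the equivalence is established in the cited survey. Two remarks. First, a small slip in the forward direction: the instance ``$Y\subseteq X$ with $|X\setminus Y|=1$'' of the exchange axiom is a tautology (both sides equal $f(X)+f(Y)$), not submodularity. To extract submodularity you need $Y\subseteq X$ with $Y\setminus X=\emptyset$ but $|X\setminus Y|$ arbitrary: take $X=T\cup\{e\}$, $Y=S$ for $S\subseteq T$ and $e\notin T$, so the maximum collapses to the single term $f(T)+f(S\cup\{e\})$ and the inequality becomes $f(T\cup\{e\})+f(S)\le f(T)+f(S\cup\{e\})$. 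Second, you are right that the amplification from $|X\setminus Y|=1$ to arbitrary $X,Y$ is the real content of the converse; a naive induction on $|X\setminus Y|$ does not go through cleanly because after one exchange the sets become incomparable and submodularity no longer controls the marginal of the leftover element in the needed direction. Deferring to the local-to-global theorem for M${}^\natural$-concave functions is the honest move here. With these caveats, the plan is sound.
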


    The greedy algorithm is optimal for the incremental maximization problem with gross substitute objective function.
    Together with \Cref{prop:2rho} we obtain the following result for the randomized algorithm discussed in \Cref{sec:random} that returns uniformly at random the greedy order with respect to~$h$ or the reverse greedy order with respect to~$g$.
    
    \begin{corollary}
        For $\gfunc, \hfunc \colon 2^E \to \R_{\geq 0}$ monotone and gross substitute the randomized algorithm is $2$-competitive.
    \end{corollary}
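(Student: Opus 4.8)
The plan is to derive this as an immediate consequence of \Cref{prop:2rho} together with the known optimality of the greedy algorithm on the gross substitute class. First I would recall the result of Paes~Leme~\cite{leme2017gross}: for a monotone gross substitute function $f$, the greedy order $(e_1,\dots,e_n)$ is an incremental solution whose every prefix $\{e_1,\dots,e_k\}$ attains $\OPT_k$; that is, greedy is $1$-competitive for the incremental maximization problem with a gross substitute objective. Since both $\gfunc$ and $\hfunc$ are monotone gross substitute, this yields a $1$-competitive algorithm for the incremental problem for $\hfunc$ and, separately, for the incremental problem for $\gfunc$.

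Next I would invoke \Cref{prop:2rho} with $\rho = 1$. Its statement says that a $\rho$-competitive algorithm for each of the two single-function incremental problems yields a $2\rho$-competitive randomized algorithm for the incremental--decremental problem for $\ffunc(S) = \hfunc(S) + \gfunc(E\setminus S)$, so with $\rho = 1$ we obtain a $2$-competitive randomized algorithm. Tracing through the construction in the proof of \Cref{prop:2rho}, this algorithm is exactly the one described in \Cref{sec:random}: with probability $\tfrac12$ output the greedy order of $E$ with respect to $\hfunc$, and with probability $\tfrac12$ output the reverse of the greedy order with respect to $\gfunc$.

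I do not expect a real obstacle here; the only thing to check is that the hypothesis of \Cref{prop:2rho} is literally satisfied, namely that the greedy incremental solutions for $\gfunc$ and $\hfunc$ are $1$-competitive, which is precisely the cited theorem, and that monotonicity (also needed to apply \Cref{prop:2rho}) is part of the gross substitute assumption. In particular, note that the generic bound from \Cref{cor:random} would only give $2c\,\frac{\e^{c}}{\e^{c}-1}$ for these functions, so the improvement to the constant $2$ genuinely relies on the stronger $1$-competitiveness of greedy on the gross substitute class rather than on the weaker submodular guarantee.
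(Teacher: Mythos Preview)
Your proposal is correct and matches the paper's approach exactly: the paper also derives the corollary by combining the $1$-competitiveness of greedy for gross substitute functions due to Paes~Leme with \Cref{prop:2rho} at $\rho=1$. One minor remark: \Cref{prop:2rho} itself does not require monotonicity of $g$ and $h$; monotonicity is used only to invoke Paes~Leme's result that greedy is optimal.
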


    However, we obtain the same bound for gross substitute functions with the deterministic double-greedy algorithm.
    To show this, we need the following observation, which even holds for submodular functions. Recall that $\ffunc(S) = \hfunc(S) + \gfunc(S^{\comp})$.
    
    \begin{restatable}{lemma}{lemmaincreasingdecreasing}
        \label{lem:incresing_decreasing_double_greedy}
        For $\gfunc, \hfunc \colon 2^E \to \R_{\geq 0}$ submodular and $\prec \ \in \{\leq,<\}$, the double-greedy algorithm ensures that
        \begin{align*}
            \gfunc(E) &= \ffunc(\setH_0) \leq \dots \leq \ffunc(\setH_n) = \ffunc(\setG_n^{\comp}) \geq \dots \geq \ffunc(\setG_0^{\comp}) = \hfunc(E).
        \end{align*}
    \end{restatable}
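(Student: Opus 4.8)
The plan is to split the claimed chain into three parts: the nondecreasing run $\ffunc(\setH_0) \le \dots \le \ffunc(\setH_n)$, the nonincreasing run $\ffunc(\setG_n^{\comp}) \ge \dots \ge \ffunc(\setG_0^{\comp})$, and the three boundary identities. The boundary identities are immediate: $\setH_0 = \setG_0 = \emptyset$ together with $\hfunc(\emptyset)=\gfunc(\emptyset)=0$ give $\ffunc(\setH_0) = \hfunc(\emptyset)+\gfunc(E) = \gfunc(E)$ and $\ffunc(\setG_0^{\comp}) = \ffunc(E) = \hfunc(E)+\gfunc(\emptyset) = \hfunc(E)$; and since $\setG_i$ and $\setH_i$ stay disjoint throughout and after $n$ iterations every element has been placed, $\setH_n = \setG_n^{\comp}$, so $\ffunc(\setH_n)=\ffunc(\setG_n^{\comp})$.

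For the nondecreasing run, I would fix an iteration $i$ and distinguish the two branches of the algorithm. If $e_i^*$ is appended to $\setG$, then $\setH_i = \setH_{i-1}$ and there is nothing to prove. If $e_i^*$ is appended to $\setH$, then $\setH_i = \setH_{i-1}\cup\{e_i^*\}$ and $\setH_i^{\comp} = \setH_{i-1}^{\comp}\setminus\{e_i^*\}$, so expanding $\ffunc = \hfunc(\cdot)+\gfunc(\cdot^{\comp})$ into marginals gives $\ffunc(\setH_i)-\ffunc(\setH_{i-1}) = \hfunc(e_i^* \mid \setH_{i-1}) - \gfunc(e_i^* \mid \setH_i^{\comp})$. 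Now $\setG_{i-1} \subseteq \setH_i^{\comp}$ (it is disjoint from $\setH_{i-1}$ and does not contain $e_i^*$), so submodularity of $\gfunc$ yields $\gfunc(e_i^* \mid \setH_i^{\comp}) \le \gfunc(e_i^* \mid \setG_{i-1})$; and, since the if-condition failed in this branch, $\hfunc(e_i^* \mid \setH_{i-1}) \ge \gfunc(e_i^* \mid \setG_{i-1})$ for both choices of $\prec \in \{\le,<\}$. Chaining these two inequalities gives $\ffunc(\setH_i) \ge \ffunc(\setH_{i-1})$.

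The nonincreasing run follows from the mirror-image argument: if $e_i^*$ goes to $\setH$ then $\setG_i = \setG_{i-1}$ and nothing changes, while if $e_i^*$ goes to $\setG$ then $\ffunc(\setG_i^{\comp}) - \ffunc(\setG_{i-1}^{\comp}) = \gfunc(e_i^* \mid \setG_{i-1}) - \hfunc(e_i^* \mid \setG_i^{\comp})$, and one bounds $\hfunc(e_i^* \mid \setG_i^{\comp}) \le \hfunc(e_i^* \mid \setH_{i-1}) \le \gfunc(e_i^* \mid \setG_{i-1})$ using $\setH_{i-1} \subseteq \setG_i^{\comp}$, submodularity of $\hfunc$, and the (now satisfied) if-condition. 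Alternatively, this run comes for free from the first via the role-swap symmetry already used in the proof of \Cref{prop:symmetry_handling}: running the algorithm on $\gfunc'\coloneqq\hfunc$, $\hfunc'\coloneqq\gfunc$, $\prec'\neq\prec$ produces $\setH_i' = \setG_i$, and $\ffunc'(\setH_i') = \hfunc'(\setG_i)+\gfunc'(\setG_i^{\comp}) = \gfunc(\setG_i)+\hfunc(\setG_i^{\comp}) = \ffunc(\setG_i^{\comp})$, so the nondecreasing run for the primed instance is exactly the nonincreasing run for the original.

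The only mild subtlety is the bookkeeping of complements: one must keep track that $\setG_{i-1}$ and $\setH_{i-1}$ remain disjoint and exclude the freshly chosen $e_i^*$, so that the containments $\setG_{i-1} \subseteq \setH_i^{\comp}$ and $\setH_{i-1} \subseteq \setG_i^{\comp}$ hold and submodularity is applied in the direction that helps. Everything else is a one-line marginal computation, and the tie-breaking operator $\prec$ never affects the sign of the relevant inequality.
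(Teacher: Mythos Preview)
Your proof is correct and follows essentially the same approach as the paper: the same marginal computation $\ffunc(\setH_i)-\ffunc(\setH_{i-1}) = \hfunc(e_i^* \mid \setH_{i-1}) - \gfunc(e_i^* \mid \setH_i^{\comp})$, the same use of submodularity via $\setG_{i-1}\subseteq \setH_i^{\comp}$, and the same appeal to the branch condition. The paper handles the nonincreasing run purely by the role-swap symmetry of \Cref{prop:symmetry_handling}, which you also mention as an alternative; your explicit mirror argument and the extra line justifying $\setH_n=\setG_n^{\comp}$ are harmless additions.
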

    \begin{proof}
        By definition of $\ffunc$, we have $\ffunc(\setH_0)=\ffunc(\emptyset)=\hfunc(\emptyset)+\gfunc(E)=\gfunc(E)$ and, likewise, we have $\ffunc(\setG_0^{\comp})=\hfunc(E)$.

        To show $\ffunc(\setH_{i-1}) \leq \ffunc(\setH_{i})$ for $1 \leq i \leq n$, 
        it suffices to consider an iteration of the algorithm in which we obtain $\setH_i$ by adding $e^*_i$ to $H_{i-1}$,
        since in every other iteration $\setH_{i-1}=\setH_{i}$ holds, implying trivially $\ffunc(\setH_{i-1}) = \ffunc(\setH_{i})$.
        Submodularity of $\gfunc$ and $\setG_{i-1} \subseteq \setH_i^{\comp}$ imply $\gfunc(e^*_i \mid \setH_{i}^{\comp}) \leq \gfunc(e^*_i \mid \setG_{i-1})$.
        Since $e^*_i$ is added to $\setH_{i-1}$, we have $\gfunc(e^*_i \mid \setG_{i-1}) \prec' \hfunc(e^*_i\mid \setH_{i-1})$ for $\prec' \ \in \{\leq,<\}$ with $\prec' \neq \prec$.
        Together, we obtain 
        \begin{align*}
            \ffunc(e^*_i \mid \setH_{i-1}) &= \ffunc(\setH_i) - \ffunc(\setH_{i-1}) \\
            &= \hfunc(H_i) - \hfunc(\setH_{i-1}) + \gfunc(\setH_i^{\comp}) - \gfunc(\setH_{i-1}^{\comp})\\
            &=\hfunc(e^*_i \mid \setH_{i-1})-\gfunc(e^*_i \mid \setH_i^{\comp})\\
            &\geq \gfunc(e^*_i \mid \setG_{i-1})-\gfunc(e^*_i \mid \setH_i^{\comp})\\
            &\geq 0.
        \end{align*}
        
        The inequalities $\ffunc(\setG_{i-1}^{\comp}) \leq \ffunc(\setG_i^{\comp})$ for $1 \leq i \leq n$ follow by symmetry, as the algorithm for~$\gfunc'=\hfunc$, $\hfunc'=\gfunc$ and $\prec'$ computes the reversed ordering.
    \end{proof}
    
    With this, we are able to prove our second main result.

    \thmuppergs*
    \begin{proof}
        Fix $k\leq n$ and let $S\subseteq E$ with $|S|\leq|\setH_k|$ elements.
        By \Cref{prop:symmetry_handling} it suffices to show that $2\ffunc(H_k)\geq \ffunc(S)= \hfunc(S)+\gfunc(S^{\comp})$. 
        By \Cref{lem:incresing_decreasing_double_greedy} and monotonicity of $\gfunc$ we have that $\ffunc(\setH_k) \geq \gfunc(E) \geq \gfunc(S^{\comp})$.
        Therefore, it is left to show that $\ffunc(H_k) \geq \hfunc(S)$.
        By monotonicity of~$\hfunc$, we have for all $i\leq n$ that
        \begin{align} \label{eq:bound_on_h_of_opt_with_increments_on_h}
            \hfunc(S) &\leq \hfunc(S \cup \setH_{i}) \nonumber\\
            &= \hfunc(S \mid \setH_i) + \hfunc(\setH_i) \nonumber\\
            &\leq \hfunc(S\mid \setH_i) + \sum_{j:e^*_j \in \setH_i} \hfunc(e^*_j \mid \setH_{j-1}) + \sum_{j:e^*_j \in G_i} \hfunc(e^*_j \mid \setH_{j-1}) \nonumber\\[-0.5em]
            &= \hfunc(S \mid \setH_{i})+ \sum^{i}_{j=1} \hfunc(e^*_j \mid \setH_{j-1}).
        \end{align}
        Let $i(S) \in \N$ smallest such that $\setH_{i(S)} \nsubseteq S$ or $\setG_{i(S)} \cap S \neq \emptyset$, 
        i.e., iteration $i(S)$ is the first iteration in which the choice of the algorithm is no longer consistent with first including the elements of~$S$. 
        In the following, we show that we can modify $S$ iteratively, either removing~$e^*_{i(S)}$ from~$S$ if $e^*_{i(S)} \in \setG_{i(S)}$ or exchanging $e^*_{i(S)}$ with an element in~$S$ if $e^*_{i(S)}\in \setH_{i(S)}$ without increasing the bound on $\hfunc(S)$ derived in $(\ref{eq:bound_on_h_of_opt_with_increments_on_h})$ for $i=i(S)-1$, until $S=\setH_{\ell}$ for some~$\ell \leq k$. 
        \begin{claim}
            Either $S=\setH_{i(S)-1}$ or there is $S'$ with 
            $i(S)<i(S')$, $\abs{S'} \leq \abs{\setH_k}$, and 
            \begin{align} \label{eq:claim_gross_analysis}
                \hfunc(S \mid \setH_{i(S)-1})+ \sum^{i(S)-1}_{j=1} \hfunc(e^*_j \mid \setH_{j-1}) \leq \hfunc(S' \mid \setH_{i(S')-1})+ \sum^{i(S')-1}_{j=1} \hfunc(e^*_j \mid \setH_{j-1}).
            \end{align}
        \end{claim}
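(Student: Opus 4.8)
The plan is a case analysis on where the inconsistency at iteration $i := i(S)$ comes from, combined with an exchange argument built on \Cref{lem:gross_substitute}. Throughout I abbreviate $B_j(T) := \hfunc(T \mid \setH_j) + \sum_{m=1}^{j}\hfunc(e^*_m \mid \setH_{m-1})$, so that $(\ref{eq:bound_on_h_of_opt_with_increments_on_h})$ reads $\hfunc(T)\le B_j(T)$ for all $j$ and $(\ref{eq:claim_gross_analysis})$ becomes $B_{i(S)-1}(S)\le B_{i(S')-1}(S')$. A first observation I would record is that $B_j(T)\le B_{j+1}(T)$ for every $T$ and $j$, which follows from monotonicity of $\hfunc$ alone: if iteration $j+1$ appends to $\setG$ then $B_{j+1}(T)-B_j(T)=\hfunc(e^*_{j+1}\mid\setH_j)\ge 0$, and if it appends $e^*_{j+1}$ to $\setH$ a short telescoping gives $B_{j+1}(T)-B_j(T)=\hfunc(e^*_{j+1}\mid T\cup\setH_j)\ge 0$. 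By minimality of $i$ we have $\setH_{i-1}\subseteq S$ and $\setG_{i-1}\cap S=\emptyset$; if $S=\setH_{i-1}$ the first alternative holds, so from now on I assume $\setH_{i-1}\subsetneq S$. Exactly one of $e^*_i\in\setG_i$ and $e^*_i\in\setH_i$ occurs, and in either case $e^*_i\notin\setF_{i-1}$.

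If $e^*_i\in\setG_i$, then $\setH_i=\setH_{i-1}\subseteq S$, so the inconsistency can only be $\setG_i\cap S\ne\emptyset$, forcing $e^*_i\in S$; I would take $S':=S\setminus\{e^*_i\}$, so $\abs{S'}<\abs{S}\le\abs{\setH_k}$. Since $e^*_i\notin\setH_{i-1}$, one verifies that the algorithm's choices remain consistent with $S'$ through iteration $i$, hence $i(S')\ge i+1$. Using $\setH_{i-1}=\setH_i\subseteq S'\subseteq S$ to rewrite the conditional increments, a direct computation gives $B_i(S')-B_{i-1}(S)=\hfunc(e^*_i\mid\setH_{i-1})-\hfunc(e^*_i\mid S')$, which is non-negative by submodularity of $\hfunc$ because $\setH_{i-1}\subseteq S'$; combining with $B_i(S')\le B_{i(S')-1}(S')$ from the first observation yields $(\ref{eq:claim_gross_analysis})$.

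If $e^*_i\in\setH_i$, then the inconsistency is $\setH_i\nsubseteq S$, so $e^*_i\notin S$, and I would replace a suitable element $f\in S\setminus\setH_{i-1}$ (nonempty since $\setH_{i-1}\subsetneq S$) by $e^*_i$, setting $S':=(S\setminus\{f\})\cup\{e^*_i\}$; then $\abs{S'}=\abs{S}\le\abs{\setH_k}$, and $i(S')\ge i+1$ is checked as before. Rewriting the conditional increments via $\setH_{i-1}\subseteq S$ and $e^*_i,f\notin\setH_{i-1}$, the difference collapses to $B_i(S')-B_{i-1}(S)=\hfunc(S')-\hfunc(S)$, so it suffices to pick $f$ with $\hfunc((S\setminus\{f\})\cup\{e^*_i\})\ge\hfunc(S)$. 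This is where gross substitutivity enters: I would apply \Cref{lem:gross_substitute} with $A:=\setH_{i-1}$, $B:=(S\setminus\setH_{i-1})\cup\{e^*_i\}$ (which is disjoint from $A$ and has $\abs{B}\ge 2$) and distinguished element $e^*_i$, obtaining $f\in B\setminus\{e^*_i\}=S\setminus\setH_{i-1}$ with $\hfunc(e^*_i\mid A)+\hfunc(B\setminus\{e^*_i\}\mid A)\le\hfunc(f\mid A)+\hfunc(B\setminus\{f\}\mid A)$. Since $(B\setminus\{e^*_i\})\cup A=S$ and $(B\setminus\{f\})\cup A=S'$, this unfolds to $\hfunc(S')\ge\hfunc(S)+\hfunc(e^*_i\mid\setH_{i-1})-\hfunc(f\mid\setH_{i-1})$. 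Finally, $e^*_i$ maximizes $\max\{\gfunc(\cdot\mid\setG_{i-1}),\hfunc(\cdot\mid\setH_{i-1})\}$ over $E\setminus\setF_{i-1}\ni f$ and was added to $\setH$, so $\hfunc(e^*_i\mid\setH_{i-1})=\max\{\gfunc(e^*_i\mid\setG_{i-1}),\hfunc(e^*_i\mid\setH_{i-1})\}\ge\hfunc(f\mid\setH_{i-1})$; hence $\hfunc(S')\ge\hfunc(S)$, and $(\ref{eq:claim_gross_analysis})$ follows.

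The step I expect to be the real obstacle is this last case: submodularity alone does not guarantee a non-worsening single-element exchange of an element of $S$ for $e^*_i$, and the exchange inequality in \Cref{lem:gross_substitute} is exactly the ingredient that produces such an $f$. Everything else is bookkeeping: the monotonicity of $B_j$ in $j$, the two routine verifications that $i(S')\ge i+1$, and --- outside the claim itself --- the fact that iterating the claim terminates at some $\hat S=\setH_\ell$ with $\ell=i(\hat S)-1$ and $\abs{\setH_\ell}\le\abs{\setH_k}$, whence $B_\ell(\setH_\ell)=\sum_{m=1}^{\ell}\hfunc(e^*_m\mid\setH_{m-1})\le\hfunc(\setH_\ell)+\gfunc(\setG_\ell)\le\ffunc(\setH_\ell)\le\ffunc(\setH_k)$ using \Cref{lem:incresing_decreasing_double_greedy}.
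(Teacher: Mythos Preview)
Your proof is correct and follows essentially the same route as the paper: the same case split on whether $e^*_i$ lands in $\setG_i$ or $\setH_i$, the same choice of $S'$ in each case (deletion of $e^*_i$ in the first, exchange via \Cref{lem:gross_substitute} with $A=\setH_{i-1}$ and $B=(S\setminus\setH_{i-1})\cup\{e^*_i\}$ in the second), the same use of submodularity and of the greedy maximality of $e^*_i$, and the same monotonicity observation $B_j(T)\le B_{j+1}(T)$. The only cosmetic difference is that you package the bookkeeping into the abbreviation $B_j(T)$ and compute $B_i(S')-B_{i-1}(S)$ directly, whereas the paper first compares at level $i-1$ and then invokes the monotonicity step; the substance is identical.

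One small remark on your parenthetical about the outer argument: the step $f(\setH_\ell)\le f(\setH_k)$ ``using \Cref{lem:incresing_decreasing_double_greedy}'' needs $\ell\le k$, which does not follow from $\abs{\setH_\ell}\le\abs{\setH_k}$ alone (the sequence $\abs{\setH_j}$ can be constant on stretches). The paper handles this by bounding $\sum_{j=1}^{\ell}\hfunc(e^*_j\mid\setH_{j-1})\le\sum_{j=1}^{k}\max\{\hfunc(e^*_j\mid\setH_{j-1}),\gfunc(e^*_j\mid\setG_{j-1})\}$ and then invoking \Cref{lem:gen_lower_bound_for_double_greedy}; alternatively, if $\ell>k$ then $\abs{\setH_\ell}\le\abs{\setH_k}$ forces $\setH_\ell=\setH_k$, so $f(\setH_\ell)=f(\setH_k)$ anyway. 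This is outside the claim itself and does not affect your proof of it.
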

        We can use this claim to prove the theorem as follows.
        Since we have $i(S) < i(S') \leq n$, after applying the claim iteratively with the resulting~$S'$ as next~$S$ finitely many times, we obtain $S'=\setH_{\ell}$ for $\ell \coloneqq i(S')-1$. 
        Since $\abs{\setH_{\ell}} = \abs{S'} \leq \abs{\setH_k}$, we have $\ell \leq k$. 
        Chaining the inequalities~$(\ref{eq:claim_gross_analysis})$ for $S$ and $S'$ iteratively, we obtain that 
        \begin{align} \label{eq:bound_on_h_of_opt_using_claim_iteratively}
            \hfunc(S) &\overset{(\ref{eq:bound_on_h_of_opt_with_increments_on_h})}{\leq} \hfunc(S \mid \setH_{i(S)-1})+ \sum^{i(S)-1}_{j=1} \hfunc(e^*_j \mid \setH_{j-1}) \overset{(\ref{eq:claim_gross_analysis})}{\leq} \hfunc(\setH_{\ell} \mid \setH_{\ell})+ \sum^{\ell}_{j=1} \hfunc(e^*_j \mid \setH_{j-1}).
        \end{align}
        Application of \Cref{lem:gen_lower_bound_for_double_greedy} concludes the proof of the theorem 
        \begin{align*}
            \hfunc(S) &\overset{(\ref{eq:bound_on_h_of_opt_using_claim_iteratively})}{\leq} \sum^{\ell}_{j=1} \hfunc(e^*_j \mid \setH_{j-1}) \leq \sum^k_{j=1} \max \{\hfunc(e^*_j \mid \setH_{j-1}), \gfunc(e^*_j \mid \setG_{j-1}) \} \overset{\text{Lem.} \ref{lem:gen_lower_bound_for_double_greedy}}{\leq} \ffunc(\setH_k).
        \end{align*}
        It remains to prove the claim.
        First, we observe that, for~$i \leq i'$ and $S \subseteq E$, we have
        \begin{align} \label{eq:compute_bound_of_claim_increasing_in_i}
            \hfunc(S \mid \setH_{i}) 
            &= \hfunc(S \cup \setH_{i}) - (\hfunc(\setH_{i'}) - \hfunc(\setH_{i'} \mid \setH_{i}) )  \nonumber \\
            &\leq \hfunc(S \cup \setH_{i'}) - \hfunc(\setH_{i'})+ \sum_{i< j\leq i':e^*_j \in H_{i'}} \hfunc(e^*_j \mid \setH_{j-1})  \nonumber \\[-1em]
            &\leq 
            \hfunc(S \mid \setH_{i'})+ \sum^{i'}_{j=i+1} \hfunc(e^*_j \mid \setH_{j-1}).
        \end{align}
        Thus, we obtain for $i\leq i'$ and $S \subseteq E$ that
        \begin{align} \label{eq:bound_of_claim_increasing_in_i}
            \hfunc(S \mid \setH_{i}) + \sum^{i}_{j=1} \hfunc(e^*_j \mid \setH_{j-1})
            &\overset{(\ref{eq:compute_bound_of_claim_increasing_in_i})}{\leq} 
            \hfunc(S \mid \setH_{i'})+ \sum^{i'}_{j=1} \hfunc(e^*_j \mid \setH_{j-1}).
        \end{align}
        
        Now, let $i\coloneqq i(S)$ and $S\neq \setH_{i-1}$ which implies $\setH_{i-1} \subsetneq S \subseteq \setG_{i-1}^{\comp}$. We consider two cases.
        First, assume that $e^*_i \in H_i$.  
        Then, by $\setG_i=\setG_{i-1}$ and definition of $i(S)$, $e^*_i$ is the only element in $\setH_i \setminus S$. 
        Together, $\setH_{i-1}\subsetneq S$ and $e^*_i \notin \setH_{i-1}$ imply that $B \coloneqq (S \setminus \setH_{i-1}) \cup \{e^*_i\}$ contains at least $2$ elements. 
        Thus, we can apply \Cref{lem:gross_substitute} with $b \coloneqq e^*_i$ and $A\coloneqq \setH_{i-1}$ yielding the existence of $b' \in S \setminus \setH_{i-1}$ with 
        \begin{multline} \label{eq:gross_substitute_property_in_analysis}
            \hfunc(e^*_i \mid \setH_{i-1}) + \hfunc(S \setminus \setH_{i-1} \mid \setH_{i-1})\\ \leq  \hfunc(b' \mid \setH_{i-1}) + \hfunc(S \setminus \setH_{i-1} \cup \{e^*_i\} \setminus \{b'\} \mid \setH_{i-1}).
        \end{multline}
        Since $S\subseteq \setG_{i-1}^{\comp}$ and $e^*_{i} \notin \setG_{i-1} \cup \setH_{i-1}$, we have $B \subseteq E \setminus (\setH_{i-1} \cap \setG_{i-1})$. By the choice of the algorithm in iteration $i$, $e^*_i$ maximizes $\hfunc(e^*_i \mid \setH_{i-1})$ in $B \subseteq E \setminus (\setH_{i-1} \cap \setG_{i-1})$, yielding
        \begin{align} \label{eq:choice_of_alg_in_gross_subs_analysis}
            \hfunc(e^*_i \mid \setH_{i-1}) \geq \hfunc(b' \mid \setH_{i-1}).
        \end{align}
        Together, we obtain
        \begin{align} \label{eq:greedy_element_in_gross_subs_analysis}
            \hfunc(S \mid \setH_{i-1}) = \hfunc(S \setminus \setH_{i-1} \mid \setH_{i-1}) \overset{(\ref{eq:gross_substitute_property_in_analysis}), (\ref{eq:choice_of_alg_in_gross_subs_analysis})}{\leq}  \hfunc(S \cup \{e^*_i\} \setminus \{b'\} \mid \setH_{i-1}).
        \end{align}
        Setting $S'\coloneqq S \cup \{e^*_i\} \setminus \{b'\}$, we have $\setH_{i} \subseteq S'$ by $\setH_{i-1}\subseteq S$ and $b' \notin \setH_{i}$. Because of $S\subseteq \setG_{i-1}^{\comp}$ and $e^*_{i} \notin G_{i-1}$, we have $S' \subseteq \setG_{i-1}^{\comp}= \setG_i^{\comp}$. Together, $\setH_{i} \subseteq S' \subseteq \setG_{i}^{\comp}$ implies $i(S')>i=i(S)$.
        In addition, we check $\abs{S'}=\abs{S}\leq \setH_k$. 
        The proof of the claim follows in this case by inferring
        \begin{align*}
            \hfunc(S \mid \setH_{i-1})+ \sum^{i-1}_{j=1} \hfunc(e^*_j \mid \setH_{j-1}) &\overset{(\ref{eq:greedy_element_in_gross_subs_analysis})}{\leq} 
            \hfunc(S' \mid \setH_{i-1})+ \sum^{i-1}_{j=1} \hfunc(e^*_j \mid \setH_{j-1})\\ 
            &\overset{(\ref{eq:bound_of_claim_increasing_in_i})}{\leq} \hfunc(S' \mid \setH_{i(S')-1})+ \sum^{i(S')-1}_{j=1} \hfunc(e^*_j \mid \setH_{j-1}).
        \end{align*}

        Now consider the case $e^*_i \in \setG_i$.  
        Then, $\setH_i=\setH_{i-1}\subseteq S$ and the definition of $i(S)$ imply $e^*_i$ is the only element in $\setG_i \cap S$. Consider $S'\coloneqq S \setminus \{e^*_i\} \subseteq \setG^{\comp}_i$. Then, $\setH_i=\setH_{i-1}\subseteq S'\subseteq \setG^{\comp}_i$ implies $i(S')>i=i(S)$. 
        In addition, we check $\abs{S'} < \abs{S} \leq \abs{H_k}$. 
        We have by $\setH_{i-1}\subseteq S'$ and submodularity of~$\hfunc$ that
        \begin{align} \label{eq:removing_element_from_G_by_submodularity}
            \hfunc(e^*_i \mid S') \leq \hfunc(e^*_i \mid \setH_{i-1}).
        \end{align} 
        Using $\setH_{i-1}\subseteq S'$ in the first step we can conclude the proof of the claim by inferring
        \begin{eqnarray*}
            \hfunc(S \mid \setH_{i-1})+ \sum^{i-1}_{j=1} \hfunc(e^*_j \mid \setH_{j-1}) &=& \hfunc(e^*_i \mid S') + \hfunc(S' \mid \setH_{i-1})+ \sum^{i-1}_{j=1} \hfunc(e^*_j \mid \setH_{j-1})\\[-1em]
            &\overset{(\ref{eq:removing_element_from_G_by_submodularity})}{\leq}& \hfunc(S' \mid \setH_{i-1})+ \sum^{i}_{j=1} \hfunc(e^*_j \mid \setH_{j-1})\\[-1em]
            &\overset{(\ref{eq:bound_of_claim_increasing_in_i})}{\leq}& \hfunc(S' \mid \setH_{i(S')-1})+ \sum^{i(S')-1}_{j=1} \hfunc(e^*_j \mid \setH_{j-1}).\qquad\quad\qedhere
        \end{eqnarray*}
        
    \end{proof}
    
    \begin{remark}
        The analysis of the double-greedy algorithm for gross substitute functions is tight. Even for modular functions, the algorithm is not better than $2$-competitive. 
        Consider the instance $E=\{a,b\}$, $\gfunc, \hfunc \colon 2^E \rightarrow \R_{\geq0}$ modular, given by $\gfunc(\{a\})=\hfunc(\{a\})=\hfunc(\{b\})=1$ and $\gfunc(\{b\})=0$, and $\prec\;=\;<$. Then, the double-greedy algorithm can choose $e^*_1=a$ computing~$\pi=(a,b)$. The solution of size $1$ has value $\ffunc(\{a\})=\hfunc(\{a\})+\gfunc(\{b\})=1$.
        But the optimum value for a solution of size $1$ is $\ffunc(\{b\})=\hfunc(\{b\})+\gfunc(\{a\})=2$. Thus, the algorithm is no better than $2$-competitive. 
    \end{remark}

   We finally state that no algorithm can be better than $5/4$-competitive which proves the third part of \Cref{thm:lower}.
    
    \begin{restatable}{proposition}{theoremlowerboundgrosssubstitutes}
        \label{thm:lower_bound_gross_substitutes}
        For $\gfunc, \hfunc \colon 2^E \to \R_{\geq 0}$ monotone and gross substitute functions no algorithm has a better competitive ratio than $5/4$.
    \end{restatable}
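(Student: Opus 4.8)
The plan is to exhibit a single instance on a three-element ground set $E=\{a,b,c\}$ on which every ordering is forced to be off by a factor $\tfrac54$ at cardinality $1$ or at cardinality $2$. Concretely, I would take $\hfunc$ to be the weighted rank function of the matroid $U_{1,\{a,b\}}\oplus U_{1,\{c\}}$ with weights $2,2,1$ on $a,b,c$, and $\gfunc$ the weighted rank function of $U_{1,\{a,c\}}\oplus U_{1,\{b\}}$ with weights $2,2,1$ on $a,c,b$ (so $\gfunc$ is $\hfunc$ with the roles of $b$ and $c$ interchanged), and $\hfunc(\emptyset)=\gfunc(\emptyset)=0$; explicitly,
\[
\hfunc(\{a\})=\hfunc(\{b\})=2,\ \ \hfunc(\{c\})=1,\ \ \hfunc(\{a,b\})=2,\ \ \hfunc(\{a,c\})=\hfunc(\{b,c\})=3,\ \ \hfunc(E)=3 .
\]
Both functions are monotone and nonnegative. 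One can cite that weighted matroid rank functions are gross substitute, or verify \Cref{lem:gross_substitute} directly: submodularity is a finite check, and the pair‑exchange condition is vacuous whenever $|B|=2$, so only the case $B=E$, $A=\emptyset$ has to be inspected, where it holds (with equality).

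Next I would tabulate $\ffunc(S)=\hfunc(S)+\gfunc(E\setminus S)$ on all subsets. The relevant values are $\ffunc(\{a\})=5$, $\ffunc(\{b\})=\ffunc(\{c\})=4$, and $\ffunc(\{b,c\})=5$, $\ffunc(\{a,b\})=\ffunc(\{a,c\})=4$, while $\ffunc(\emptyset)=\ffunc(E)=3$. Hence $\OPT_1=5$, attained only by $\{a\}$, and $\OPT_2=5$, attained only by $\{b,c\}$; the cardinalities $0$ and $3$ are irrelevant since $S_0=\emptyset$ and $S_n=E$ for every ordering.

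The argument then splits on the first element of an ordering $\pi=(e_1,e_2,e_3)$. If $e_1\neq a$, then $\ffunc(S_1)=\ffunc(\{e_1\})=4=\tfrac45\OPT_1$, so $\pi$ has competitive ratio at least $\tfrac54$. If $e_1=a$, then $e_3\in\{b,c\}$, so $S_2=E\setminus\{e_3\}$ is $\{a,c\}$ or $\{a,b\}$ and $\ffunc(S_2)=4=\tfrac45\OPT_2$, again forcing ratio at least $\tfrac54$. As this exhausts all orderings, no algorithm beats $\tfrac54$.

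The real work is in finding such an instance, not in verifying it. Applying \Cref{lem:gross_substitute} with $B=E$, $A=\emptyset$ and varying $b_0$ forces $\hfunc(\{x\})+\hfunc(E\setminus\{x\})$ to attain its maximum over $x\in E$ at least twice, and likewise for $\gfunc$; if $\gfunc=\hfunc$, or more generally if the two functions share the same ``exceptional'' element, this makes the optimal prefix non‑unique at every cardinality and a perfectly competitive ordering exists. The key design choice is therefore to use two \emph{different} gross substitute functions whose exceptional elements differ ($c$ for $\hfunc$, $b$ for $\gfunc$), which is exactly what lets $\{a\}$ be the unique size‑$1$ optimum and $\{b,c\}$ the unique size‑$2$ optimum while staying inside the gross substitute class; tuning the weights then pins the worst‑case ratio to exactly $\tfrac54$.
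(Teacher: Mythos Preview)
Your proof is correct and is essentially the same construction as the paper's, up to a relabeling of the ground set: the paper's $\hfunc$ and $\gfunc$ are precisely weighted rank functions of $U_{1,\{b,c\}}\oplus U_{1,\{a\}}$ and $U_{1,\{a,c\}}\oplus U_{1,\{b\}}$ with the same weights $2,2,1$, so swapping $a\leftrightarrow c$ turns one instance into the other. The one presentational difference is that you identify the functions as weighted matroid rank functions (and hence gross substitute by a known structural fact), whereas the paper defines them by their values and verifies \Cref{lem:gross_substitute} by hand; your framing is a bit cleaner, but the underlying argument and case split are identical.
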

    \begin{proof}
        Consider the following set functions $\hfunc$ and $\gfunc$ on $E=\{a,b,c\}$.
        Let 
        \begin{align*}
            &\hfunc(\emptyset)=0, \qquad \hfunc(\{a\})=1, \qquad
            \hfunc(\{b\})=\hfunc(\{c\})=\hfunc(\{b,c\})=2 \qquad \text{and } \\
            &\hfunc(\{a,b\})=\hfunc(\{a,c\})=\hfunc(\{a,b,c\})=3.
        \end{align*}
        Similarly, let
        \begin{align*}
            &\gfunc(\emptyset)=0, \qquad \gfunc(\{b\})=1, \qquad 
            \gfunc(\{a\})=\gfunc(\{c\})=\gfunc(\{a,c\})=2 \qquad \text{and } \\
            &\gfunc(\{a,b\})=\gfunc(\{b,c\})=\gfunc(\{a,b,c\})=3.
        \end{align*}

        We show that both functions are gross substitute. 
        Since $\hfunc$ and $\gfunc$ differ only by swapping~$a$ and $b$, it suffices to consider $\hfunc$. 
        To show submodularity of $\hfunc$, we check whether the marginal values for each  $e \in E$ decrease, i.e., $\hfunc(e \mid S) \geq \hfunc(e \mid T)$ for $S \subseteq T \subseteq E$.
        For~$e=a$, this follows by~$\hfunc(a \mid S)=1$ for~$S\subseteq \{b,c\}$, and~$\hfunc(a \mid S)=0$ otherwise. 
        For~$e=b$, this follows by~$\hfunc(b \mid S)=2$ if~$S=\emptyset$, $\hfunc(b \mid S)=1$ if~$S=\{a\}$, and~$\hfunc(b \mid S)=0$ else for~$S\subseteq E$. 
        By~$\hfunc$ being symmetric in~$b$ and~$c$, we conclude submodularity of~$\hfunc$.
        
        Using that $\hfunc(\{a\}) +\hfunc(\{b,c\})=3 \leq 5 = \hfunc(\{b\}) +\hfunc(\{a,c\}) = \hfunc(\{c\}) +\hfunc(\{a,b\})$, we derive that for~$B=\{a,b,c\}$, $A=\emptyset$ and each~$e \in B$ there is $f \in B \setminus \{e\} $ with
        \[
            \hfunc(e \mid A) + \hfunc(B\setminus \{e\} \mid A) \leq  \hfunc(f \mid A) + \hfunc(B \setminus \{f\} \mid A).
        \]
        In addition, observe that the condition in \Cref{lem:gross_substitute} for $\abs{B}=2$ is always fulfilled with equality.
        We can therefore conclude that $\hfunc$ is gross substitute.

        Observe that the optimum value for $\ffunc(S)=\hfunc(S)+\gfunc(S^\comp)$ with $\abs{S}=1$ is~$\ffunc(\{c\})=2+3=5$. 
        With $\abs{S}=2$ the optimum value is $\ffunc(\{a,b\})=3+2=5$.
        We show that the competitive ratio~$\rho$ of all orderings~$\pi$ is at least $1.25$. 
        If $\pi$ starts with $a$ or $b$, the solution of size~$1$ has value $\ffunc(\{a\})=\hfunc(\{a\})+\gfunc(\{b,c\})=1+3=4$ or $\ffunc(\{b\})=\hfunc(\{b\})+\gfunc(\{a,c\})=2+2=4$, implying $\rho\geq\frac{5}{4}$ in both cases.
        If~$\pi$ ends with $a$ or $b$, the solution of size $2$ has value $\ffunc(\{b,c\})=\hfunc(\{b,c\})+\gfunc(\{a\})=2+2=4$ or $\ffunc(\{a,c\})=\hfunc(\{a,c\})+\gfunc(\{b\})=3+1=4$. Also, in those cases we have $\rho\geq\frac{5}{4}$.
        Since $\pi$ cannot both start and end with $c$, we obtain~$\rho\geq \frac{5}{4}$.
    \end{proof}

    \section{Discussion}
    
    In this work, we introduced the incremental--decremental maximization framework and established several competitive ratios for randomized and deterministic algorithm. 
    While our results provide new insights, some open questions remain.

    Concerning lower bounds for the double-greedy algorithm, for monotone and sub\-modular functions~$h$ and $g$, \Cref{thm:main1} establishes  $\smash{(1+\frac{\e}{\e-1})}$-competitiveness.
    The algorithm cannot perform better than that, as shown in \Cref{prop:lower_submod_bound_double_greedy} in the appendix.
    We do not know, whether the general upper bound of $\frac{1}{\gamma}(1 + c \frac{\e^c}{\e^c-1})$ for the setting where $h$ and $g$ have curvature $c$ and generic submodularity ratio $\gamma$ is tight and leave this question as an open problem.

    For the special case that $g(E)=h(E)$
    alongside submodularity and monotonicity, the double-greedy algorithm guarantees a $2$-competitive solution, which is optimal. 

    \begin{restatable}{proposition}
    {propositiongandhsame}
    \label{prop:gandhsame}
        For monotone and submodular set functions $g,h\colon 2^E\rightarrow \R_{\geq 0}$ with $g(E)=h(E)$, the double-greedy algorithm is exactly $2$-competitive.
    \end{restatable}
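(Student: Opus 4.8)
The plan is to prove the two inequalities separately: a short upper bound showing that the double-greedy algorithm never exceeds competitive ratio~$2$ on instances with $\gfunc(E)=\hfunc(E)$, and an explicit two-element instance on which it is no better than $2$-competitive. For the upper bound I would apply \Cref{prop:symmetry_handling} with $\mathcal{C}=\{\gfunc,\hfunc\}$; this is legitimate because any two functions drawn from $\mathcal{C}$ agree on~$E$ (each equals the common value $\gfunc(E)=\hfunc(E)$, or trivially for the pairs $(\gfunc,\gfunc)$ and $(\hfunc,\hfunc)$), and all functions in $\mathcal{C}$ are monotone and submodular, so the argument below applies uniformly. It then remains to show $2\ffunc(\setH_k)\ge\ffunc(S)$ for all $k\le n$ and all $S\subseteq E$ with $\abs{S}=\abs{\setH_k}$. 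Here \Cref{lem:incresing_decreasing_double_greedy} does all the work: it gives $\ffunc(\setH_k)\ge\ffunc(\setH_0)=\gfunc(E)$. On the other hand, monotonicity of $\gfunc$ and $\hfunc$ yields $\ffunc(S)=\hfunc(S)+\gfunc(S^{\comp})\le\hfunc(E)+\gfunc(E)=2\gfunc(E)$, where the last equality uses $\gfunc(E)=\hfunc(E)$. Combining, $\ffunc(S)\le 2\gfunc(E)\le 2\ffunc(\setH_k)$, as required. (If $\gfunc(E)=\hfunc(E)=0$, then $\gfunc\equiv\hfunc\equiv 0$ by monotonicity and nonnegativity, and the statement is trivial.)

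For tightness I would exhibit the following instance, with $\prec\;=\;<$ on $E=\{a,b\}$: let $\hfunc(\emptyset)=0$, $\hfunc(\{a\})=\hfunc(\{b\})=\hfunc(E)=1$, and $\gfunc(\emptyset)=\gfunc(\{b\})=0$, $\gfunc(\{a\})=\gfunc(E)=1$. Both functions are monotone and submodular (indeed $\hfunc$ is a rank function and $\gfunc$ is modular), and $\gfunc(E)=\hfunc(E)=1$. In the first iteration both elements attain $\max\{\gfunc(\cdot\mid\emptyset),\hfunc(\cdot\mid\emptyset)\}=1$, so the algorithm may take $e_1^*=a$; since $\hfunc(a\mid\setH_0)=1\ge 1=\gfunc(a\mid\setG_0)$, the strict if-condition fails and $a$ is appended to the prefix, and then $b$ likewise (as $\hfunc(b\mid\setH_1)=0\ge 0=\gfunc(b\mid\setG_1)$), so $\pi=(a,b)$. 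The prefix $S_1=\{a\}$ has value $\ffunc(\{a\})=\hfunc(\{a\})+\gfunc(\{b\})=1$, whereas $\OPT_1\ge\ffunc(\{b\})=\hfunc(\{b\})+\gfunc(\{a\})=2$, so this solution is not better than $2$-competitive. The symmetric case $\prec\;=\;\le$ follows by exchanging the roles of $\gfunc$ and $\hfunc$.

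Together the two parts show the competitive ratio is exactly~$2$. I do not expect a real obstacle here once \Cref{lem:incresing_decreasing_double_greedy} is available: the upper bound is essentially two lines, and the only point requiring care is that the tight instance from the remark after \Cref{thm:competitve_algo_gross_substitutes} violates $\gfunc(E)=\hfunc(E)$, so it has to be adapted. The instance above preserves the forced first-iteration tie (so that a bad element can be placed first) while equalizing the two totals, which I expect to be the only genuine design choice in the argument.
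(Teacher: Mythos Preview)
Your proof is correct. The upper bound is essentially the paper's: both use \Cref{lem:incresing_decreasing_double_greedy} to obtain $\ffunc(\setH_k)\ge\gfunc(E)$ and monotonicity to obtain $\ffunc(S)\le\hfunc(E)+\gfunc(E)=2\gfunc(E)$. Your detour through \Cref{prop:symmetry_handling} is unnecessary but harmless, since the lemma already bounds $\ffunc(\setH_k)$ and $\ffunc(\setG_k^{\comp})$ from below by $\min\{\gfunc(E),\hfunc(E)\}$, so every prefix of~$\pi$ is covered directly. For the lower bound the paper takes a different route: rather than constructing a tight instance for double-greedy, it simply invokes \Cref{thm:lower_bound_submodular}, whose instance has $\gfunc=\hfunc$ (hence trivially $\gfunc(E)=\hfunc(E)$) and curvature $c=1$, yielding a lower bound of $1+\tfrac{c}{2-c}=2$ valid for \emph{every} algorithm. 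The paper's reference is shorter and gives the stronger conclusion that $2$ is best possible among all algorithms in this regime; your two-element instance is self-contained and pinpoints the tie-breaking mechanism that makes double-greedy attain the bound, which is arguably more informative about the algorithm itself.
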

    \begin{proof}
        Due to the monotonicity and the assumption that $g(E)=h(E)$, it holds that $f(S)\leq 2h(E)$ for every subset $S\subseteq E$, implying that $\OPT_k\leq 2h(E)$ for every $k$.
        Because of \Cref{lem:incresing_decreasing_double_greedy}, the value of the double-greedy solution $S_k=\{e_1,\dots,e_k\}$ of size $k$ is at least~$h(E)$ if $g(E)=h(E)$.
        All in all, we conclude that $\OPT_k$ is at most twice as much as the value~$f(S_k)$ of the double-greedy solution of size $k$.
        According to \Cref{thm:lower_bound_submodular}, there is no algorithm which attains a better competitive ratio in this setting. 
    \end{proof}
    \noindent It would be interesting to see similar results which exploit dependencies between $h$ and $g$.

    Another open question is whether \Cref{thm:main1} can be strengthened in settings where $g$ and~$h$ have different curvatures $c_g\neq c_h$ and different generic submodularity ratios $\gamma_g\neq \gamma_h$. 
    By defining $c\coloneqq \max\{c_g,c_h\}$ and $\gamma \coloneqq \min\{\gamma_g, \gamma_h\}$  one can adapt the proofs leading to \Cref{thm:main1} obtaining $\frac{1}{\gamma}(1+c\frac{\e^c}{\e^c-1})$-competitiveness of the double-greedy algorithm.
    Note that the function space $\mathcal C$ in \Cref{prop:symmetry_handling} is then given by all monotone set functions on the ground set $E$  with curvature at most  $c$ and generic submodularity ratio at least $\gamma$.

    While the simple randomized algorithm stated in \Cref{cor:random} provides in some cases a better competitive analysis, it is outperformed by the double-greedy algorithm for submodular functions.
    Buchbinder et al.~\cite{Buchbinder2015tight} introduced a randomized version of the double-greedy algorithm improving the approximation factor for unconstrained submodular maximization from $3$ to $2$.
    Their improved analysis, however, does not directly lead to a better competitiveness for a randomized version of the double-greedy for incremental--decremental maximization as the worst case ratio is not necessarily attained at the position where $k$ equals the cardinality of a maximum set of $f$.
    We leave the design of randomized algorithms with better competitive ratio as an interesting problem for future research.

    Finally, another open problem  is to narrow or even close the gaps identified in \Cref{tab:overview} for deterministic algorithms, in particular, the gap between $2$ and $1+\frac{\e}{\e-1}$ for submodular $h$ and $g$, as well as the gap between $1.25$ and $2$ in the gross substitute setting.    
    
    \bibliographystyle{plainurl}
    \bibliography{literature}
    
    \appendix
    \section{Appendix}

    \subsection{Unboundedness of the Incremental Maximization Problem}
    \label{app:inc_max_unbounded}
    \begin{proposition} \label{prop:inc_max_unbounded}
        The competitive ratio of the incremental maximization problem with non-monotone submodular objective function $f\colon 2^E\rightarrow \mathbb R_{\geq 0}$ is unbounded.
    \end{proposition}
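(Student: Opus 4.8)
The plan is to prove unboundedness by exhibiting, for each $n \ge 2$, a non-monotone submodular function on $n$ elements whose incremental competitive ratio is at least $n-1$; since $n$ is arbitrary this gives the claim. Concretely, I would take $E = \{z, e_1, \dots, e_{n-1}\}$ and define $f(S) := |S|$ if $z \notin S$ and $f(S) := n - |S|$ if $z \in S$. The first step is to verify the three required properties: nonnegativity holds since $z \in S$ forces $1 \le |S| \le n$ and hence $0 \le n - |S| \le n-1$; non-monotonicity holds since $f(\{z\}) = n-1 > 0 = f(E)$; and submodularity follows by checking that every marginal value is non-increasing — the marginal of each $e_i$ equals $1$ on sets avoiding $z$ and $-1$ on sets containing $z$, while the marginal of $z$ on a set $S \subseteq \{e_1,\dots,e_{n-1}\}$ equals $n - 1 - 2|S|$.

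Next I would determine $\OPT_k$. A $k$-element set avoiding $z$ has value $k$, and a $k$-element set containing $z$ has value $n-k$, so $\OPT_k = \max\{k, n-k\}$ for $1 \le k \le n-1$ (and $\OPT_n = f(E) = 0$). In particular $\OPT_1 = \OPT_{n-1} = n-1$, attained by $\{z\}$ and by $E \setminus \{z\}$ respectively.

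The core of the argument is a two-case analysis of an arbitrary order $\pi$ with prefixes $S_k$. If the first element of $\pi$ is not $z$, then $S_1$ is a singleton $\{e_i\}$, so $f(S_1) = 1$ and the competitive ratio is at least $\OPT_1 / f(S_1) = n-1$. If the first element of $\pi$ is $z$, then $z \in S_k$ for every $k$; in particular the prefix $S_{n-1}$ contains $z$, so $f(S_{n-1}) = n - (n-1) = 1$, and the competitive ratio is at least $\OPT_{n-1}/f(S_{n-1}) = n-1$. Hence every order is at best $(n-1)$-competitive, so the competitive ratio of the problem is at least $n-1$, and letting $n \to \infty$ finishes the proof.

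The only real obstacle is finding the function. Submodularity with $f(\emptyset) = 0$ forces $f(S) \le \sum_{e \in S} f(\{e\})$, so any high-value set must be built from elements with large singleton values; this obstructs most naive attempts to create a conflict between small and large cardinalities (a single ``bad'' element can simply be scheduled last, a single ``good'' element first). The function above resolves this by making $z$ play a double role: it is the unique large-value singleton, which forces scheduling it first, yet its presence erodes the value of a set linearly in the set size, which punishes scheduling it early. Choosing the singleton value $n-1$ to match the total erosion $1\cdot(n-1)$ is exactly what makes both horns of this dilemma cost a factor of $n-1$.
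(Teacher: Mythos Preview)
Your proof is correct; the submodularity check via marginals, the computation of $\OPT_k$, and the two-case analysis are all sound.

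The paper follows the same high-level scheme (a single special element forcing a dilemma at $k=1$ versus a late step) but with a different instance: it sets $f(S)=1$ if $e^*\in S$ and $f(S)=|S|\,\varepsilon$ otherwise, so the special element \emph{caps} the value rather than eroding it. The two horns then give ratios $1/\varepsilon$ and roughly $(n-1)\varepsilon$, which must be balanced by choosing $\varepsilon\approx 1/\sqrt{n}$, yielding only a $\sqrt{n}$ lower bound. Your construction is sharper: by letting the presence of $z$ actively decrease the value by one per added element, both cases hit exactly $n-1$ with no free parameter to tune, and the bound is quadratically stronger. The paper's function has the mild conceptual advantage that it stays monotone on each side of the $e^*$ boundary (so the non-monotonicity is isolated in one marginal), whereas yours exploits negative marginals more aggressively; for the purpose of the proposition either works, and yours is the cleaner argument.
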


    \begin{proof}
        Let $n\coloneqq \abs{E}$.
        Fix $e^* \in E$ and consider the set function $f \colon 2^E \rightarrow \R_{\geq0}$ for $0<\varepsilon<1$ given by
    \[
        f(S) \coloneqq 
        \begin{cases}
            1 & e^* \in S,\\
            \abs{S}\cdot \varepsilon & \text{else.} 
        \end{cases}
    \]
    Consider an algorithm that adds element $e^*$ first. 
    Thus, the solution of the algorithm is $1$ for every $k$ while the optimum value is $k\varepsilon$ for $k \geq \frac{1}{\varepsilon}$.
    On the other hand, an algorithm that does not add $e^*$ first has value $\varepsilon$ for $k=1$ but the optimum solution is $1$. 
    For $\varepsilon\coloneqq\frac{1}{\sqrt{n}}$ considering steps $k=1$ and $k=n$ one can see that all algorithms have competitive ratio at least $ \sqrt{n}$ which is unbounded for $n \rightarrow \infty$.
    \end{proof}    

    \subsection{Tight lower bound for the double-greedy algorithm\texorpdfstring{\newline}{-} for submodular functions}
    \label{app:lower_bound_double_greedy}
    A well-known example for monotone submodular functions are coverage functions. 
    They count how many elements are covered by a set of subsets of those elements.
    \begin{definition}
        Let $E \subseteq 2^U$ be a set of subsets of a finite ground set $U$.
        Then, $f \colon 2^E \rightarrow \N$ given by
        $f(S) \coloneqq \left|\bigcup_{C \in S} C \right|$ for $S\subseteq E$ is a \emph{coverage function}.
    \end{definition}
    Submodularity of a coverage function $f$ can be seen by observing that the marginal increase~$f(C \mid S)$ of a set $C$ on some set of subsets $S\subseteq E$ is exactly the number of elements in~$C$ that are not already covered by subsets in $S$. When adding more subsets to $S$, the set of elements already covered only grows. Therefore, the marginal increase of $C$ on $S$ decreases, when more subsets are added to $S$.
    
    \begin{proposition}
        The competitive ratio of the double-greedy algorithm is exactly $(1+\frac{\e}{\e-1})$ for $h$ monotone and submodular and for $g$ modular and monotone.
    \end{proposition}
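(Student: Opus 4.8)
The upper-bound direction is immediate: every monotone modular function is monotone and submodular (with curvature at most $1$), so \Cref{cor:main1} already shows that the double-greedy algorithm is $\bigl(1+\tfrac{\e}{\e-1}\bigr)$-competitive for such $\hfunc$ and $\gfunc$. It therefore remains to construct, for every $k\in\N$, an instance on which some run of the algorithm is only $\bigl(1+\tfrac{1}{1-(1-1/k)^k}\bigr)$-competitive; letting $k\to\infty$ then yields the matching lower bound. The plan is to take for $\hfunc$ the classical greedy-tight coverage function and to layer onto it a modular $\gfunc$ chosen so that the double-greedy algorithm is forced to build the greedy prefix for $\hfunc$ on the $\setH$-side, while the $\gfunc$-mass sits entirely on the complement of the $\hfunc$-optimal solution; the additive $1$ in $1+\tfrac{\e}{\e-1}$ will then come from the $\gfunc$-term.

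Concretely, I fix $k$, let $U$ be a universe of size $|U|=k^{k+1}$ (chosen to make all cardinalities below integral), and set $E=\{O_1,\dots,O_k\}\cup\{B_1,\dots,B_k\}$, viewed as subsets of $U$: the $O_i$ partition $U$ with $|O_i|=|U|/k$, the $B_j$ are pairwise disjoint with $|B_j|=\tfrac{|U|}{k}(1-\tfrac1k)^{j-1}$, and each $B_j$ is split evenly among the $O_i$, i.e., $|O_i\cap B_j|=\tfrac1k|B_j|$ for all $i,j$. Let $\hfunc$ be the coverage function of these subsets (monotone and submodular), and let $\gfunc$ be the modular function with $\gfunc(O_i)\coloneqq 0$ and $\gfunc(B_j)\coloneqq\tfrac{|U|}{k}(1-\tfrac1k)^{j-1}$ (monotone). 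A direct computation gives, for every $j\le k$ and every $i$, that $\hfunc(B_j\mid\{B_1,\dots,B_{j-1}\})=|B_j|=\tfrac{|U|}{k}(1-\tfrac1k)^{j-1}$ and $\hfunc(O_i\mid\{B_1,\dots,B_{j-1}\})=\tfrac{|U|}{k}(1-\tfrac1k)^{j-1}$, whereas for $j'>j$ both $\gfunc(B_{j'})$ and $\hfunc(B_{j'}\mid\{B_1,\dots,B_{j-1}\})$ equal $\tfrac{|U|}{k}(1-\tfrac1k)^{j'-1}$, which is strictly smaller. Hence, running the double-greedy algorithm with $\prec\;=\;<$ and resolving ties towards $B_j$, a straightforward induction shows that at step $j\le k$ the element $B_j$ attains the maximum of $\max\{\gfunc(e\mid\setG_{j-1}),\hfunc(e\mid\setH_{j-1})\}$ among the remaining elements, and since $\hfunc(B_j\mid\setH_{j-1})=\gfunc(B_j)$ the if-condition fails and $B_j$ is appended to the prefix. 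Thus after $k$ steps $\setH_k=\{B_1,\dots,B_k\}$, $\setG_k=\emptyset$, and the length-$k$ prefix of the produced order has $\ffunc$-value $\ffunc(\setH_k)=\hfunc(\{B_1,\dots,B_k\})+\gfunc(\{O_1,\dots,O_k\})=\sum_{j=1}^{k}|B_j|=|U|\bigl(1-(1-\tfrac1k)^k\bigr)\eqqcolon X$.

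To finish, I compute $\OPT_k$: the set $\{O_1,\dots,O_k\}$ covers $U$, so $\hfunc(\{O_1,\dots,O_k\})=|U|$, and its complement $\{B_1,\dots,B_k\}$ has $\gfunc$-value $\sum_{j=1}^{k}\gfunc(B_j)=X=\gfunc(E)$; since $\hfunc(S)\le|U|$ and $\gfunc(E\setminus S)\le\gfunc(E)=X$ for every $S$, this gives $\OPT_k=|U|+X$. Therefore this run of the double-greedy algorithm is at best $\tfrac{\OPT_k}{\ffunc(\setH_k)}=\tfrac{|U|+X}{X}=1+\tfrac{1}{1-(1-1/k)^k}$-competitive, and letting $k\to\infty$ establishes the lower bound $1+\tfrac{\e}{\e-1}$, matching the upper bound. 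I expect the only delicate point to be the verification of the algorithm's execution: checking that the residual marginals of the $O_i$ stay exactly tied with $|B_j|$ at every step, that the modular values $\gfunc(B_j)$ are small enough that the algorithm never prefers a later $B_{j'}$ nor moves $B_j$ to the $\setG$-side, and the routine arithmetic ensuring that all cardinalities (in particular the $|O_i\cap B_j|$ and the number of elements of each $O_i$ lying in no $B_j$) are nonnegative integers for the chosen $|U|$.
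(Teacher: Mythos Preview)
Your proof is correct and follows essentially the same approach as the paper: both construct a tight-greedy coverage instance with a partition $\{O_i\}$ (the paper's $A_i$) as the $\hfunc$-optimum and greedy-attractive sets $\{B_j\}$, and set the modular $\gfunc$ equal to the $\hfunc$-marginals on the $B_j$ (and zero on the $O_i$) so that the double-greedy algorithm builds the $B$-prefix on the $\setH$-side. The only difference is that the paper realizes the geometric decay of marginals via overlapping equal-size $B_j$ (fixing the $j$-th coordinate to $k$ in $\{1,\dots,k\}^k$), whereas you use pairwise-disjoint $B_j$ of geometrically decreasing size; the resulting ratios $\frac{2k^k-(k-1)^k}{k^k-(k-1)^k}$ coincide.
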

    \begin{proof}
        By Theorem \ref{thm:main1} the competitive ratio is at most $\smash{\big(1+\frac{\e}{\e-1}\big)}$. It is left to show that the algorithm is not $\rho$-competitive for any $\rho<\smash{\big(1+\frac{\e}{\e-1}\big)}$.
        
        To this end, consider the following instance. 
        Let $U\coloneqq\{1, \dots, k\}^k$ be the ground set of~$k$-tuples with entries $1, \dots, k$. 
        For $i \in \{1,\dots, k\}$ we define sets $A_i\coloneqq \{1, \dots, k\}^{k-1} \times \{i\}$ to partition the ground set into $k$ pairwise disjoint sets of equal size by fixing the last entry to $i$.
        We define sets $B_i\coloneqq \{1, \dots, k\}^{i-1}\times \{k\} \times \{1, \dots, k\}^{k-i}$ of the same size by fixing the $i$-th entry to~$k$ for $i \in \{1, \dots,k\}$. 
        Note that the sets $A_i$ and $B_i$ all have $k^{k-1}$ elements for $i \in \{1, \dots, k\}$.
        Let $h\colon 2^E \rightarrow \N$ be a coverage function on $U$ with $E\coloneqq \{A_1, \dots, A_k, B_1, \dots, B_k\}$.
        Then, for $S\subseteq E$ the value $h(S)\coloneqq \abs{\bigcup_{C \in S} C} $ is the number of elements covered by the sets in~$S$.
        Since~$h$ is a coverage function, it is monotone and submodular. 
        Let $g\colon 2^E \rightarrow \N$ be the modular function with $g(\{B_i\})=h(B_i \mid \{B_1, \dots, B_{i-1}\})$ and $g(\{A_i\})=0$ for $i \in\{1, \dots, k\}$. Note that, $g(\{B_1, \dots B_k\})=h(\{B_1, \dots, B_k\})$ by definition of $g$.

        The optimum solution of size $k$ for $f$ is $S^*_k\coloneqq\{A_1,\dots,A_k\}$ with complement $(S^*_k)^{\comp}\coloneqq\{B_1,\dots,B_{k}\}$. 
        Since $S^*_k$ covers the ground set completely and $(S^*_k)^{\comp}$ covers all elements apart from the $(k-1)^k$ elements with no entry equal to $k$,
        \[f(S^*_k)=h(S^*_k)+g((S^*_k)^{\comp})=k^k + h\left((S^*_k)^{\comp}\right)= k^k+k^k-(k-1)^k =2k^k-(k-1)^k.\]

        On the other hand $B_1, \dots, B_k$ is a possible solution of the double-greedy algorithm of size $k$. 
        By modifying the values of $h$ and $g$ by $\varepsilon$ one can force the algorithm to resolve the tie-breaking accordingly and reach the same lower bound for $\varepsilon \rightarrow 0$. 
        In the following, we choose how the ties are resolved by the algorithm for simplicity.
        Consider iteration~$\ell \leq k$ of the algorithm under the assumption that $H_{\ell-1}=\{B_1, \dots, B_{\ell-1}\}$ and  $G_{\ell-1}=\emptyset$ are already computed. 
        Then, for $i \in \{1, \dots, k\}$ and $j \in\{\ell, \dots,  k\}$ the sets $A_i$ and $B_j$ cover exactly $(k-1)^{\ell-1}k^{k-\ell}$ elements which are not already covered by $H_{\ell-1}$. 
        These are the elements $(x_1, \dots, x_k) \in \{1, \dots, k\}^k$ with $x_k=i$ for $A_i$ and $x_j=k$ for $B_j$, respectively, and $x_r\neq k$ for $r\in \{1, \dots, \ell-1\}$. 
        Therefore,~$B_\ell$ maximizes $h(B_\ell \mid H_{\ell-1})$ in $H_{\ell-1}^{\comp}$. 
        Since $g(\{B_i\})=h(B_{i}\mid \{B_1, \dots, B_{i-1}\})= (k-1)^{i-1}k^{k-i}$ is decreasing in $i\in\{\ell, \dots, k\}$, $B_\ell$ also maximizes $g(\{B_{\ell}\})$ in $H_{\ell-1}^{\comp}$. 
        Because of $h(B_\ell \mid H_{\ell-1})=g(\{B_{\ell}\})$ the algorithm can choose~$B_\ell$ as the next element setting $H_{\ell}\coloneqq H_{\ell-1} \cup \{B_{\ell}\}$ and $G_{\ell}\coloneqq G_{\ell-1}=\emptyset$. 
        Thus, $B_1, \dots, B_k$ is a possible solution of the double-greedy algorithm. 
        In this case, since $H_k$ covers all elements apart from the $(k-1)^k$ elements with no entry equal to $k$, we have with $g(H_k^{\comp})=0$ that
        \[f(H_k)=h(H_k)+g(H_k^{\comp})=k^k -(k-1)^k.\]
        Thus, we conclude
        \[ \lim_{k\rightarrow \infty} \frac{f(S^*_k)}{f(H_k)}
        =\lim_{k\rightarrow \infty} \frac{2-\left(\frac{k-1}{k}\right)^k}{1 -\left(\frac{k-1}{k}\right)^k}
        =\frac{2-\e^{-1}}{1-{\e^{-1}}} = \frac{2\e-1}{\e-1} = 1+\frac{\e}{\e-1}, \]
        which shows the competitive ratio of the algorithm is at least $1+\frac{\e}{\e-1}$ and therefore by Theorem \ref{thm:main1} exactly $1+\frac{\e}{\e-1}$.
    \end{proof}

    Since modular functions are submodular, we obtain the following Corollary.
    
    \begin{corollary}
        \label{prop:lower_submod_bound_double_greedy}
        The competitive ratio of the double-greedy algorithm is exactly $(1+\frac{\e}{\e-1})$ for~$h$ and $g$ monotone and submodular.
    \end{corollary}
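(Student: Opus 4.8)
The plan is to derive this corollary directly from two facts already available: the submodular upper bound of \Cref{thm:main1} and the lower-bound instance of the preceding proposition. First I would invoke \Cref{cor:main1} --- equivalently \Cref{thm:main1} specialized to $\gamma = 1$ and $c \in (0,1]$, where one uses that $c \mapsto c\frac{\e^c}{\e^c-1}$ is non-decreasing so the worst case is $c = 1$ --- to conclude that the double-greedy algorithm is $\bigl(1 + \frac{\e}{\e-1}\bigr)$-competitive whenever $\gfunc$ and $\hfunc$ are monotone and submodular. This gives the upper bound, so only a matching lower bound within this function class is needed.

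For the lower bound, the key (trivial) observation is that every modular function is in particular submodular: if $\gfunc$ is modular then $\gfunc(e \mid S) = \gfunc(e \mid T)$ for all $S \subseteq T \subseteq E$, so the submodularity inequality holds with equality. Hence the instance built in the preceding proposition --- with $\hfunc$ a coverage function (monotone and submodular) and $\gfunc$ the modular monotone function that matches the marginal gains $\hfunc(B_i \mid \{B_1,\dots,B_{i-1}\})$ --- is a legitimate instance with both $\gfunc$ and $\hfunc$ monotone and submodular. Since that proposition already shows the double-greedy algorithm can be forced to have competitive ratio at least $1 + \frac{\e}{\e-1}$ on this instance (through the feasible solution $B_1, \dots, B_k$ and the limit $k \to \infty$), the same lower bound holds a fortiori over the larger class of monotone submodular $\gfunc, \hfunc$. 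Combining the two bounds yields that the competitive ratio is exactly $1 + \frac{\e}{\e-1}$.

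I do not expect any real obstacle here: all of the substantive work --- the coverage construction, the choice of $\gfunc$, and the tie-breaking argument that forces the algorithm onto $B_1,\dots,B_k$ --- is already carried out in the preceding proposition, so the only additional step is the immediate remark that the modular $\gfunc$ used there lies in the class of submodular functions.
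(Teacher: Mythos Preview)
Your proposal is correct and follows exactly the paper's own reasoning: the paper derives the corollary in one line by noting that modular functions are submodular, so the lower-bound instance of the preceding proposition (with modular~$\gfunc$) already lies in the submodular class, and the matching upper bound comes from \Cref{cor:main1}.
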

\end{document}